\providecommand{\U}[1]{\protect\rule{.1in}{.1in}}
\newtheorem{theorem}{Theorem}
\newtheorem{definition}[theorem]{Definition}
\newtheorem{example}[theorem]{Example}
\newtheorem{lemma}[theorem]{Lemma}
\newtheorem{notation}[theorem]{Notation}
\newtheorem{remark}[theorem]{Remark}
\begin{document}

\title{ Subresultant of several univariate polynomials}
\author{Hoon Hong\\Department of Mathematics, North Carolina State University\\Box 8205, Raleigh, NC 27695, USA\\hong@ncsu.edu\\[10pt] Jing Yang\thanks{Corresponding author.}\\SMS--HCIC--School of Mathematics and Physics, \\Guangxi Minzu University, Nanning 530006, China\\yangjing0930@gmail.com\\[-10pt]}
\date{}
\maketitle

\begin{abstract}
Subresultant of two univariate polynomials is a fundamental object in
computational algebra and geometry with many applications (for instance,
parametric GCD and parametric multiplicity of roots). In this paper, we
generalize the theory of subresultants of two polynomials to arbitrary number of
polynomials, resulting in \emph{multi}-polynomial subresultants. Specifically,

\begin{enumerate}
\item we propose a definition of multi-polynomial subresultants, which is an  expression  in terms of roots;

\item we illustrate the  usefulness of the proposed definition via the following two fundamental applications:

\begin{itemize}
\item parametric GCD of multi-polynomials, and

\item parametric multiplicity of roots of a   polynomial;

\end{itemize}

\item we provide several expressions for the multi-polynomials subresultants in
terms of coefficients, for computation.

\end{enumerate}
\end{abstract}

\section{Introduction}

Subresultant for two univariate polynomials is a fundamental object in
computational algebra and geometry with numerous applications in science and
engineering. Due to its importance, there have been extensive research on
underlying theories, extensions and applications (just list a few
\cite{1853_Sylvester,
2003_Lascoux_Pragacz,
1907_Bocher,
1968_Householder,
1967_Collins,
1971_BARNETT,
1971_Brown_Traub,
1997_Hong,
1998_Wang,
Hong:99a,
2000_Wang,
2004_WANG_HOU,
2008_Akira,
2004_DIAZ_TOCA_GONZALEZ_VEGA,
2007_DAndrea_Hong_Krick_Szanto,
2006_Andrea_Krick_Szanto,
2009_DAndrea_Hong_Krick_Szanto,
2015_DAndrea_Krick_Szanto,
2016_Krick_Szanto_Valdettaro,
2017_Bostan_DAndrea_Krick_Szanto_Valdettaro,
2019_DANDREA_Krick_Szanto_Valdettaro,
2020_Bostan_Krick_Szanto_Valdettaro},
\cite{
Li:98,
Hong:99b,
Hong:2000a,
2001_DAndrea_Dickenstein,
2013_DAndrea_Krick_Szanto,
2020_Roy_Szpirglas,
2021_Cox_DAndrea,
2013_Jaroschek} and
\cite{
2008_Szanto,
2017_Imbach_Moroz_Pouget,
2018_Perrucci_Roy,
2020_Roy_Szpirglas}).

 One often needs to consider more than two univariate polynomials, in various applications, for instance, in the following two fundamental applications:         parametric GCD of several  polynomials and  parametric multiplicity of roots of a   polynomial.   A usual way is to  apply subresultants  to a pair of input or intermediate polynomials repeatedly, resulting in ``nested'' subresultants, {\it i.e.},  subresultants of subresultants of ... and so on.   This process, however, often produces extraneous or  irrelevant factors, complicating theory  and in turn applications.

Hence one wonders whether    there is a way to generalize the theory of subresultant to more than two polynomials  such that nestings are avoided,  which would in turn   simplify subsequent theory developments and applications.
In this article, we provide such a generalized theory. Specifically,

\begin{enumerate}
\item {\em We propose a definition of multi-polynomial subresultants, which is an expression  in terms of roots}.

Naturally there are two different approaches to generalizing subresultants of two polynomials to arbitrary number of polynomials:
\begin{itemize}
\item[] (A1) generalizing expressions of subresultants in {\em coefficients}, such as minor of a Sylvester matrix.
\item[] (A2) generalizing expressions of subresultants in {\em roots}, such as Poisson formula.
\end{itemize}
After trying both approaches, we finally chose (A2) because the resulting definition would  encode   desired geometric information more explicitly and
such explicit geometric information would  greatly facilitates the developments of applications.

See Definition~\ref{def:roots} for the details.

\item {\em We illustrate the  usefulness of the proposed definition via the following two fundamental applications}.

\begin{itemize}
\item Parametric GCD of multi-polynomials.

Consider several parametric univariate polynomials  (such as polynomials with parametric coefficients).  The GCD of the polynomials will of course depend on the values of the parameters. Hence the parametric GCD problem asks for several conditions on parameters and corresponding expressions for GCD in terms of the parameters.
We show that  the conditions and the corresponding GCD can be simply/elegantly written  in terms of the proposed multi-polynomial subresultants.

See Theorem~\ref{thm:gcd} for details.

\item Parametric multiplicity of roots of a   polynomial.

Consider a parametric univariate polynomial (such as a polynomial with parametric coefficients). The multiplicity structure  (the vector of multiplicities of the distinct roots) of the  polynomial   will depend on the values of the parameters. Hence the parametric  multiplicity problem asks for a  condition on parameters for each multiplicity structure. We show that the conditions can be also simply/elegantly written in terms of the proposed multi-polynomial subresultants. .

See Theorem~\ref{thm:multiplicity} for details.
\end{itemize}

\item {\em We provide several expressions for multi-polynomials subresultants in
terms of coefficients}.

In the literature, there are roughly three types \ of expressions of subresultants in coefficients and their combinations: {\em Sylvester}-type, {\em B\'{e}zout-type}  and {\em Barnnet}-type.  We generalize all the three types to the multi-polynomial case.

See Theorem~\ref{thm:coefficients} for details.

\end{enumerate}


\medskip\noindent\textit{Comparison with related works}: There have been
numerous works on subresultants. We will focus on efforts in generalizing the
notion and theory of the subresultant of two univariate polynomials.

\begin{itemize}
\item In~\cite{
1853_Sylvester,
Hong:99a,
Hong:99b,
Hong:2000a,
2003_Lascoux_Pragacz,
2006_Andrea_Krick_Szanto,
2007_DAndrea_Hong_Krick_Szanto,
2009_DAndrea_Hong_Krick_Szanto,
2013_DAndrea_Krick_Szanto,
2015_DAndrea_Krick_Szanto,
2017_Bostan_DAndrea_Krick_Szanto_Valdettaro},
the subresultants are
expressed as rational function in roots.

We extended it to multi-polynomial
case. Difference is that the index for the subresultants are not just a
natural number, but a {\em tuple} of natural numbers.

\item In \cite{1978_Vardulakis_Stoyle}, the authors generalized the
following well-known property of the subresultant of two univariate
polynomials: the degree of the GCD of two polynomials is determined by the
rank of the corresponding Sylvester matrix . They extended Sylvester matrix to
several univariate polynomials so that a similar property holds.

In this paper, we generalize another well-known property of the subresultant
of two univariate polynomials: the degree of the GCD of two polynomials is
determined by vanishing of subresultants and the GCD is given by the
corresponding subresultant polynomial. We extend subresultant to several
polynomials so that a similar property holds.

\item In \cite{1971_BARNETT}, the author considered the problem of computing
the GCD for several univariate polynomials and gave an algorithm to write down
the expression of GCD by extending the Barnett matrix from bi-polynomial case
to multi-polynomial case.

In this paper, we further refine the result in the sense that we provide an
explicit formula to write down GCD in terms of coefficients of the given polynomials.

\item In \cite{2004_WANG_HOU}, the authors considered the B\'{e}zout-type
subresultant for two univariate polynomials and established the connection
between their pseudo-remainders and the subresultants; in
\cite{2004_DIAZ_TOCA_GONZALEZ_VEGA}, the authors presented more alternative
expressions which express subresultants in terms of some minors of matrices
different from the Sylvester matrix.

In this paper, we provide three different expressions for subresultant of
several univariate polynomials, including Sylvester-type, B\'{e}zout-type and Barnett-type.

\item \ {In \cite{1995_Chardin, 2001_DAndrea_Dickenstein, 2006_Andrea_Krick_Szanto, 1990_Gonzalez_Vega, 2008_Szanto, 2010_Szanto}, the authors generalized the subresultant of two
univariate polynomials to that of\emph{\ multivariate\/} polynomials while
constraining the number of polynomials to be at most one more than the number
of variables.
In \cite{Li:98,Hong:99b,Hong:2000a},
the authors generalized the subresultant of two polynomials to that of two Ore polynomials.
}

{In this paper, we take another route for generalization. We allow
\emph{arbitrary} number of polynomials while staying univariate. Of course, a  natural challenge for future work is to  generalize further to arbitrary number
of multivariate and/or Ore polynomials.
}

\end{itemize}

The paper is structured as follows. In Section \ref{sec:expression_in_roots},
we give a ``geometric" concept (definition) of subresultant for several
univariate polynomials (multi-polynomial subresultant) in terms of roots of
one polynomials among them. In Section \ref{sec:application_parametric_gcd},
we show that the parametric GCD of several univariate polynomials can be
expressed in terms of the multi-polynomial subresultant. In Section
\ref{sec:application_parametric_multiplicity}, we show that the parametric
multiplicity of one univariate polynomial can be expressed in terms of
subresultants of the given polynomial and its derivatives. For the sake of
computation, we study the expression of subresultant for multi-polynomials in
coefficients In Section \ref{sec:expression_in_coefficients} we identify
three different types of determinantal expressions in coefficients. In Section
\ref{sec:conclusion}, we summarize the main results and mention a natural extension.

\section{Definition of multi-polynomial subresultants in terms of roots}
\label{sec:expression_in_roots}



\begin{notation}
\label{notation1} $\ $

\begin{enumerate}
\item $F=\left(  F_{0},F_{1},\ldots,F_{t}\right)  \subset\mathbb{C}\left[
x\right]  $ $\,$where $t\geq1$.

\item $d_{i}=\deg F_{i}$

\item $\alpha_{1},\ldots,\alpha_{d_{0}}$ are the complex roots of $F_{0}$.

\item $\delta=\left(  \delta_{1},\ldots,\delta_{t}\right)  \mathbb{\in
N}_{\geq0}^{t}\ $such that $\left\vert \delta\right\vert \leq d_{0}$, where
$\left\vert \delta\right\vert =\delta_{1}+\cdots+\delta_{t}$.
\end{enumerate}
\end{notation}

\noindent With the above notations, we define the concept of subresultant
polynomial in terms of roots.

\begin{definition}
[Subresultant polynomial in terms of roots]\label{def:roots} We define the
$\delta$-th \emph{subresultant polynomial} $S_{\delta}$ of $F$\ by
\[
S_{\delta}\left(  F\right)  :=a_{0d_{0}}^{\delta_{0}}\cdot\frac{\det\left[
\begin{array}
[c]{rrr|c}%
\alpha_{1}^0 F_{1}\left(  \alpha_{1}\right)   & \cdots &
\alpha_{d_0}^0 F_{1} \left(  \alpha_{d_{0}}\right)   & \\
\vdots~~~~~~~~ &  & \vdots~~~~~~~~ & \\
\alpha_{1}^{\delta_{1}-1} F_{1}(\alpha_{1}) & \cdots &
\alpha_{d_0}^{\delta_{1}-1} F_{1}(\alpha_{d_{0}}) & \\\hline
\vdots~~~~~~~~ &  & \vdots~~~~~~~~ & \\
\vdots~~~~~~~~ &  & \vdots~~~~~~~~ & \\\hline
\alpha_{1}^0 F_{t}\left(  \alpha_{1}\right)   & \cdots &
\alpha_{d_0}^0 F_{t} \left(  \alpha_{d_{0}}\right)   & \\
\vdots~~~~~~~~ &  & \vdots~~~~~~~~ & \\
\alpha_{1}^{\delta_{t}-1} F_{t}(\alpha_{1}) & \cdots &
\alpha_{d_0}^{\delta_{t}-1} F_{t}(\alpha_{d_{0}}) & \\\hline
\alpha_{1}^0~~~~~~~   & \cdots & \alpha_{d_{0}}^0~~~~~~~  & x^0\\
\vdots~~~~~~~~ &  & \vdots~~~~~~~~ & \vdots\\
\alpha_{1}^{\varepsilon-1}~~~~~~   & \cdots &\alpha_{d_{0}}^{\varepsilon-1}~~~~~~   & x^{\varepsilon-1}%
\end{array}
\right]  }{\det\left[
\begin{array}
[c]{rrr}%
\alpha_{1}^0   & \cdots & \alpha_{d_{0}}^0\\
\vdots &   & \vdots\\
\alpha_{1}^{d_{0}-1}   & \cdots &\alpha_{d_{0}}^{d_0-1}
\end{array}
\right]  }%
\]
where

\begin{enumerate}
\item $a_{0d_{0}}$ is the leading coefficient of $F_{0}$ which is assumed to
be nonzero;

\item $\varepsilon$ is chosen so that the matrix becomes square, that is,
\[
1+d_{0}=\left\vert \delta\right\vert +\varepsilon
\]
in other words, $\varepsilon=1+d_{0}-\left\vert \delta\right\vert$;

\item $\delta_{0}$ is equal to the degree of $\alpha_{1}$ in the fraction,
that is,
\begin{align*}
\delta_{0}  &  =\max(\delta_{1}+d_{1}-1,\ldots,\delta_{t}+d_{t}-1,\varepsilon
-1)-(d_{0}-1)\\
&  =\max(\delta_{1}+d_{1}-d_{0},\ldots,\delta_{t}+d_{t}-d_{0},\varepsilon
-d_{0})\\
&  =\max(\delta_{1}+d_{1}-d_{0},\ldots,\delta_{t}+d_{t}-d_{0},1-\left\vert
\delta\right|  )
\end{align*}
The purpose of introducing $a_{0d_{0}}^{\delta_{0}}$ is to make the expression
in coefficients be a polynomial in $a_{ij}$ where $0\le i\le t$ and $0\le j\le
d_{i}$.
\end{enumerate} We define the
$\delta$-th \emph{subresultant} $s_{\delta}\left(  F\right)  $ of $F$\ to be
the \textquotedblleft principal\textquotedblright\ leading coefficient of
$S_{\delta},$ that is, the coefficient for $x^{\varepsilon-1}$.
\end{definition}

\begin{remark}
It is very important to note that $S_{\delta}\left(  F\right)  $ is a
polynomial function in $\alpha_{1},\ldots,\alpha_{d_{0}}$, even though written
as a rational function, since the numerator is exactly divisible by the
denominator. Hence the above definition should be read as follows:

\begin{enumerate}
\item Treating $\alpha_{1},\ldots,\alpha_{d_{0}}$ as distinct indeterminates,
carry out the exact division obtaining a polynomial.

\item Treating $\alpha_{1},\ldots,\alpha_{d_{0}}$ as numbers, evaluate the
resulting polynomial.
\end{enumerate}
\end{remark}

\begin{remark}
Let $\operatorname*{sres}\nolimits_{i}\left(  F_{0},F_{1}\right) $ stand for
the \textquotedblleft classical\textquotedblright\ $i$-th subresultant polynomial of
$F_{0}$ and $F_{1}$. Under the new notion, we have
\begin{align*}
\operatorname*{sres}\nolimits_{0}\left(  F_{0},F_{1}\right)   &  =S_{\left(
d_{0}\right)  }\left(  F_{0},F_{1}\right) \\
\operatorname*{sres}\nolimits_{1}\left(  F_{0},F_{1}\right)   &  =S_{\left(
d_{0}-1\right)  }\left(  F_{0},F_{1}\right) \\
&  \vdots\\
\operatorname*{sres}\nolimits_{i}\left(  F_{0},F_{1}\right)   &  =S_{\left(
d_{0}-i\right)  }\left(  F_{0},F_{1}\right) \\
&  \vdots\\
\operatorname*{sres}\nolimits_{d_{0}}\left(  F_{0},F_{1}\right)   &
=S_{\left( 0\right)  }\left(  F_{0},F_{1}\right)
\end{align*}
The index scheme is changed. The reason is the "new" one is better for extension.
\end{remark}

\begin{example}
Let $F=(F_{0},F_{1},F_{2})$. Let
\begin{align*}
F_{0}  &  =a_{03}\left(  x-\alpha_{1}\right)  \left(  x-\alpha_{2}\right)
\left(  x-\alpha_{3}\right) \\
F_{1}  &  =a_{13}x^{3}+a_{12}x^{2}+a_{11}x+a_{10}\\
F_{2}  &  =a_{21}x+a_{20}%
\end{align*}
where $a_{03}\neq0$. We would like to determine $S_{\left(
1,1\right)  }\left(  F\right)  $. Note that
\begin{align*}
\varepsilon&=1+3-\left(
1+1\right)  =2\\
\delta_0&=\max(3+1-3,1+1-3,1-(1+1))=1
\end{align*}
 Thus%
\begin{align*}
&  S_{(1,1)}\left(  F\right) \\
&  =a_{03}^{}\cdot\frac{\det\left[
\begin{array}
[c]{lll|l}%
\alpha_{1}^{0}F_{1}(\alpha_{1}) & \alpha_{2}^{0}F_{1}(\alpha_{2}) & \alpha
_{3}^{0}F_{1}(\alpha_{3}) & \\\hline
\alpha_{1}^{0}F_{2}(\alpha_{1}) & \alpha_{2}^{0}F_{2}(\alpha_{2}) & \alpha
_{3}^{0}F_{2}(\alpha_{3}) & \\\hline
\alpha_{1}^{0} & \alpha_{2}^{0} & \alpha_{3}^{0} & x^{0}\\
\alpha_{1}^{1} & \alpha_{2}^{1} & \alpha_{3}^{1} & x^{1}%
\end{array}
\right]  }{\det\left[
\begin{array}
[c]{ccc}%
\alpha_{1}^{0} & \alpha_{2}^{0} & \alpha_{3}^{0}\\
\alpha_{1}^{1} & \alpha_{2}^{1} & \alpha_{3}^{1}\\
\alpha_{1}^{2} & \alpha_{2}^{2} & \alpha_{3}^{2}%
\end{array}
\right]  }\\
&  =a_{03}^{}\cdot\frac{\det\left[
{\setlength{\arraycolsep}{1.2pt}
\begin{array}
[c]{lll|l}%
\alpha_{1}^{0}\left(  a_{13}\alpha_{1}^{3}+a_{12}\alpha_{1}^{2}+a_{11}%
\alpha_{1}+a_{10}\right)  & \alpha_{2}^{0}\left(  a_{13}\alpha_{2}^{3}%
+a_{12}\alpha_{2}^{2}+a_{11}\alpha_{2}+a_{10}\right)  & \alpha_{3}^{0}\left(
a_{13}\alpha_{3}^{3}+a_{12}\alpha_{3}^{2}+a_{11}\alpha_{3}+a_{10}\right)  &
0\\\hline
\alpha_{1}^{0}\left(  a_{21}\alpha_{1}+a_{20}\right)  & \alpha_{2}^{0}\left(
a_{21}\alpha_{2}+a_{20}\right)  & \alpha_{3}^{0}\left(  a_{21}\alpha
_{3}+a_{20}\right)  & 0\\\hline
\alpha_{1}^{0} & \alpha_{2}^{0} & \alpha_{3}^{0} & x^{0}\\
\alpha_{1}^{1} & \alpha_{2}^{1} & \alpha_{3}^{1} & x^{1}%
\end{array}}
\right]  }{\det\left[
\begin{array}
[c]{ccc}%
\alpha_{1}^{0} & \alpha_{2}^{0} & \alpha_{3}^{0}\\
\alpha_{1}^{1} & \alpha_{2}^{1} & \alpha_{3}^{1}\\
\alpha_{1}^{2} & \alpha_{2}^{2} & \alpha_{3}^{2}%
\end{array}
\right]  }\\
&  =a_{03}^{}\cdot\frac{-\left(  \alpha_{2}-\alpha_{1}\right)  \left(  \alpha_{3}-\alpha
_{1}\right)  \left(  \alpha_{3}-\alpha_{2}\right)  \left(  a_{12}+a_{13}%
\alpha_{1}+a_{13}\alpha_{2}+a_{13}\alpha_{3}\right)  \left(  a_{21}%
x+a_{20}\right)  }{\left(  \alpha_{2}-\alpha_{1}\right)  \left(  \alpha
_{3}-\alpha_{1}\right)  \left(  \alpha_{3}-\alpha_{2}\right)  }\\
&  =-a_{03}^{}\left(  a_{12}+a_{13}\alpha_{1}+a_{13}\alpha_{2}+a_{13}\alpha_{3}\right)
\left(  a_{21}x+a_{20}\right)
\end{align*}

\end{example}

\section{Application: Parametric GCD}

\label{sec:application_parametric_gcd} In this section, we show an application
of multi-polynomial subresultant in representing the parametric GCD of
several univariate polynomials.

\subsection{Main results}

\begin{definition}
[Incremental Cofactor Degree]\label{notation:gcd}Let $F=\left(  F_{0}%
,F_{1},\ldots,F_{t}\right)  \subset\mathbb{C}\left[  x\right]  $ $\,$where
$t\geq1$. The \emph{incremental cofactor degree}, $\operatorname*{icdeg}%
\left(  F\right)  $, is defined by the following%
\[
\operatorname*{icdeg}\left(  F\right)  =\left(  \deg C_{1},\ldots,\deg
C_{t}\right)
\]
where $C_{i}$ is the incremental cofactor, that is,%
\[
C_{i}=\frac{\gcd\left(  F_{0},\ldots,F_{i-1}\right)  \ \ \ \ }{\gcd\left(
F_{0},\ldots,F_{i-1},F_{i}\right)  }%
\]

\end{definition}

\begin{remark}
\

\begin{enumerate}
\item In order to make the $\gcd$ unique, we take the usual convention that
$\gcd$ is monic.

\item The $\gcd$ of a single polynomial is defined to be the monic version of
the polynomial.
\end{enumerate}
\end{remark}

\begin{theorem}
[Parametric GCD in terms of subresultant polynomials]\label{thm:gcd}Let%
\[
\delta=\max\limits_{\substack{s_{\gamma}\left(  F\right)  \neq0}}\gamma
\]
where $\max$ is with respect to the ordering $\succ_{\text{glex}}$.\ Then we have

\begin{enumerate}
\item $\operatorname*{icdeg}\left(  F\right)  =\delta$

\item $\gcd\left(  F\right)  =\frac{S_{\delta}\left(  F\right)  }{s_{\delta
}\left(  F\right)  }$
\end{enumerate}
\end{theorem}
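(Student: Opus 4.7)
My plan is to work inside the quotient ring $R = \mathbb{C}[y]/(F_0)$. Assuming for simplicity that $F_0$ has distinct roots (the general case following by a specialization argument), the evaluation map $g \mapsto (g(\alpha_1), \ldots, g(\alpha_{d_0}))$ identifies $R$ with $\mathbb{C}^{d_0}$. The rows of the defining matrix, after dividing by the Vandermonde, become vectors of polynomial evaluations, so vanishing/non-vanishing of $S_\delta(F)$ and $s_\delta(F)$ translates into linear (in)dependence in $R$. The critical object to track is the filtration
\[
0 = I_0 \subset I_1 \subset \cdots \subset I_t = I, \qquad I_k := (F_1, \ldots, F_k) R,
\]
for which one checks $I_k = G_k R$ (since $(F_0, F_1, \ldots, F_k) = (G_k)$ in $\mathbb{C}[y]$), whence $\dim_{\mathbb{C}} I_k = d_0 - \deg G_k$ and $\dim(I_k/I_{k-1}) = \deg(G_{k-1}/G_k) = \delta^*_k$, where $\delta^* := \operatorname{icdeg}(F)$. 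In particular $|\delta^*| = d_0 - \deg G$.

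First I would show $G := \gcd(F)$ divides $S_\delta(F)$ as a polynomial in $x$ for every $\delta$. For any root $\beta$ of $G$, $\beta$ is also a root of $F_0$, say $\beta = \alpha_{j_0}$. Substituting $x = \beta$ makes the last column equal to column $j_0$: both are zero in each $F_i$-block (because $F_i(\beta) = 0$) and both read $\beta^0, \ldots, \beta^{\varepsilon-1}$ in the Vandermonde block, so the determinant vanishes.

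Next I would prove $s_\gamma(F) = 0$ for every $\gamma \succ_{\text{glex}} \delta^*$. Expanding the numerator determinant along the last column identifies $s_\gamma(F)$, up to sign and the Vandermonde, with a $d_0\times d_0$ determinant whose $F$-rows lie in the appropriate $I_k$'s. If $|\gamma| > |\delta^*| = \dim I$, the $|\gamma|$ many $F$-rows are too numerous to be independent. If $|\gamma| = |\delta^*|$ but $\gamma >_{\text{lex}} \delta^*$, let $k$ be the smallest index with $\gamma_k > \delta^*_k$; then the $F$-rows with index $\leq k$, totalling $\gamma_1+\cdots+\gamma_k$, all lie in $I_k$ of dimension $\delta^*_1+\cdots+\delta^*_k$, forcing dependence. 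Either way $s_\gamma(F) = 0$.

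Finally I would establish (1) and (2) simultaneously by showing $s_{\delta^*}(F) \neq 0$ and $S_{\delta^*}(F) = s_{\delta^*}(F) \cdot G$. The latter follows from $G \mid S_{\delta^*}$ combined with $\deg_x S_{\delta^*} \leq \varepsilon-1 = \deg G$, which forces $S_{\delta^*} = c \cdot G$; the leading coefficient of $G$ (monic) then gives $c = s_{\delta^*}(F)$. Non-vanishing of $s_{\delta^*}(F)$ is equivalent to the $d_0$ rows spanning $R$, and by dimension count reduces to two claims: (i) $R = I \oplus \operatorname{span}\{1,y,\ldots,y^{\varepsilon-2}\}$, which is immediate since representatives in $I = GR$ have degree $\geq \deg G = \varepsilon-1$ while the monomial span sits in degree $< \varepsilon-1$; and (ii) the rows $\{y^p F_i : p < \delta^*_i,\, 1 \leq i \leq t\}$ span $I$. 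I expect (ii) to be the main obstacle. I would induct on $k$, reducing to showing that the images of $y^p F_k$ for $p < \delta^*_k$ form a basis of $I_k/I_{k-1}$. Tracing through the isomorphisms $I_k \cong \mathbb{C}[y]/(F_0/G_k)$ and $I_k/I_{k-1} \cong \mathbb{C}[y]/(C_k)$, where $C_k = G_{k-1}/G_k$, converts this into showing that $H_k := F_k/G_k$ is a unit modulo $C_k$, i.e.\ $\gcd(H_k, C_k) = 1$. That identity follows from $\gcd(F_k, G_{k-1}) = G_k$, which is a direct consequence of the definition of $G_k$.
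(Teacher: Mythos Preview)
Your proposal is correct and follows a genuinely different route from the paper.

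The paper's proof of Theorem~\ref{thm:gcd} (via Lemmas~\ref{lem:part_a} and~\ref{lem:part_b}) is entirely computational: after indexing the roots $\alpha_1,\dots,\alpha_{d_0}$ so that $G_i=\prod_{k>d_0-e_i}(x-\alpha_k)$, the numerator of $S_\theta$ becomes block lower-triangular with square diagonal blocks, and expanding yields an explicit product formula
\[
s_\theta(F)=\pm\,a_{0d_0}^{\theta_0}\prod_{i=1}^{t}\ \prod_{j=d_0-e_{i-1}+1}^{d_0-e_i} H_i(\alpha_j),
\]
visibly nonzero since $\gcd(H_i,C_i)=1$. The vanishing for $\gamma\succ_{\text{glex}}\theta$ is obtained by the same block picture, now with a rectangular diagonal block forcing a zero column. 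Your argument instead interprets the rows of the defining determinant as elements of $R=\mathbb{C}[y]/(F_0)$ and replaces the block-triangular computation by a filtration/dimension count on $I_k=G_kR$: vanishing for $\gamma\succ\theta$ becomes ``too many vectors in too small a subspace,'' and non-vanishing at $\theta$ reduces to the same coprimality $\gcd(H_k,C_k)=1$, now phrased as ``$H_k$ is a unit in $\mathbb{C}[y]/(C_k)$.'' Your preliminary observation that $G\mid S_\delta$ for \emph{every} $\delta$ is a clean extra fact the paper does not isolate. What the paper's route buys is the closed-form product expression for $s_\theta$; what yours buys is a shorter, coordinate-free argument that makes the role of the ideal filtration transparent. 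Both proofs are at the same level of rigor regarding repeated roots of $F_0$: your ``general case by specialization'' matches the paper's device of treating the $\alpha_i$ as distinct indeterminates and then evaluating.
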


\begin{remark}
The ordering $\succ_{\text{glex}}$ is defined for two sequences, e.g.,
$\gamma$ and $\delta$, in $\mathbb{Z}_{\ge0}^{t}$. We say $\delta
\succ_{\text{glex}}\gamma$ if and only if one of the followings occurs:

\begin{enumerate}
\item $|\delta|>|\gamma|$;

\item $|\delta|=|\gamma|$ and there exists $i\le t$ such that $\delta
_{i}>\gamma_{i}$ and $\delta_{j}=\gamma_{j}$ for $j<i$.
\end{enumerate}
\end{remark}
\begin{remark}
In the above, the $\max$ always exists since $s_{\left(  0,\ldots,0\right)
}(F)=a_{0d_0}^{\max(d_1-d_0,\ldots,d_t-d_0,1)}\neq0$ from Definition~\ref{def:roots}.
\end{remark}

\begin{example}
Let $F=\left(  F_{0},F_{1},F_{2}\right)  $ where $\deg F_{0}=2$. We have the
following parametric gcd. For the sake of compactness, we present it using
\textquotedblleft nested if\textquotedblright.
\[%
\begin{array}
[c]{llcl}%
\text{if} & \ s_{\left(  2,0\right)  }\left(  F\right)  \neq0 & \text{then} &
\gcd(F)=\frac{S_{\left(  2,0\right)  }\left(  F\right)  }{{s}_{(2,0)}\left(
F\right)  }\\
\text{else if} & \ s_{\left(  1,1\right)  }\left(  F\right)  \neq0 &
\text{then} & \gcd(F)=\frac{S_{\left(  1,1\right)  }\left(  F\right)  }%
{{s}_{(1,1)}\left(  F\right)  }\\
\text{else if} & \ s_{\left(  0,2\right)  }\left(  F\right)  \neq0 &
\text{then} & \gcd(F)=\frac{S_{\left(  0,2\right)  }\left(  F\right)  }%
{{s}_{(0,2)}\left(  F\right)  }\\
\text{else if} & \ s_{\left(  1,0\right)  }\left(  F\right)  \neq0 &
\text{then} & \gcd(F)=\frac{S_{\left(  1,0\right)  }\left(  F\right)  }%
{{s}_{(1,0)}\left(  F\right)  }\\
\text{else if} & \ s_{\left(  0,1\right)  }\left(  F\right)  \neq0 &
\text{then} & \gcd(F)=\frac{S_{\left(  0,1\right)  }\left(  F\right)  }%
{{s}_{(0,1)}\left(  F\right)  }\\
\text{else} & \  & \text{then} & \gcd(F)=\frac{S_{\left(  0,0\right)  }\left(
F\right)  }{s_{\left(  0.0\right)  \left(  F\right)  }}%
\end{array}
\
\]

\end{example}

\begin{remark}
Previous works for generating the condition for parametric gcd computation
are mostly based on repeated gcd computation using subresultants for two
polynomials \cite{1993_Abramov,2012_CHEN_Moreno_Maza}.%
We compare them with the conditions given in this paper (Theorem~\ref{thm:gcd}).
\begin{itemize}
\item In the previous works, the polynomials in the conditions are principal coefficients of nested subresultants,
that is, subresultant of subresultant polynomials of .....  and so on.
In the current work, the polynomials are just principal multi-polynomial subresultants of the input.

\item The number of polynomials in the conditions given by the previous works and that given by the current work are compatible and both of them are one less than the number of possible $\delta$'s, i.e., the number of ordered $t$-weak-partitions\footnote[1]{By $t$-weak-partition, we mean a partition with $t$ parts where the  part $0$ is allowed.} of all natural numbers less than $d_0$. However, the polynomials in the conditions given by the previous work often have more extraneous factors.

\end{itemize}
\end{remark}


\subsection{Proof of Theorem \ref{thm:gcd} (Parametric GCD)}

We will use the following short hand notation.

\begin{notation}
[Vandermonde]$\ V(x_{1},\ldots,x_{n})=\det\left[
\begin{array}
[c]{ccc}%
x_{1}^0  & \cdots & x_{n}^0 \\
\vdots &  & \vdots\\
x_{1}^{n-1}  & \cdots & x_{n}^{n-1}
\end{array}
\right]  $
\end{notation}

\begin{notation}
[Incremental gcd]\

\begin{enumerate}
\item $G_{i}=\gcd\left(  F_{0},\ldots,F_{i}\right)  $.\ \ \ Note that
$G_{0}\ $is the monic version of $F_{0}$.

\item $e_{i}=\deg G_{i}$
\end{enumerate}
\end{notation}

\begin{lemma}
\label{lem:part_a}Let $\theta=\operatorname*{icdeg}\left(  F\right)  $.
Then we have
\[
s_{\theta}\left(  F\right)  \neq0\ \ \ \text{and\ \ \ }\gcd(F)=\frac
{S_{\theta}\left(  F\right)  }{s_{\theta}\left(  F\right)  }%
\]

\end{lemma}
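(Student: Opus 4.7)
The plan is to first show that $G_t := \gcd(F_0,\ldots,F_t)$ divides $S_\theta(F)$ as a polynomial in $x$, and then verify that the coefficient $s_\theta(F)$ is nonzero, from which both claims of the lemma follow. Throughout, I would treat $\alpha_1,\ldots,\alpha_{d_0}$ as indeterminates (so Definition~\ref{def:roots} yields a polynomial identity after the stated exact division) and specialize at the end; in particular, I may assume the $\alpha_j$'s are distinct and labelled so that $\alpha_1,\ldots,\alpha_{e_t}$ are the roots of $G_t$. Since $|\theta|=d_0-e_t$, the last column of the numerator matrix has the form $(0,\ldots,0,1,x,\ldots,x^{e_t})^T$. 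For $j\leq e_t$, the top $|\theta|$ entries of column $j$ are $\alpha_j^k F_i(\alpha_j)=0$, so substituting $x=\alpha_j$ makes the last column coincide with column $j$, giving $S_\theta(\alpha_j)=0$. Combined with $\deg_x S_\theta\leq e_t=\deg G_t$ and monicity of $G_t$, this forces $S_\theta(F)=s_\theta(F)\cdot G_t$, so it remains to prove $s_\theta(F)\neq 0$.

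To this end, I extract $s_\theta(F)$ as the coefficient of $x^{\varepsilon-1}$, which (up to the factor $a_{0d_0}^{\delta_0}/V(\alpha_1,\ldots,\alpha_{d_0})$) equals $\det M_{\varepsilon-1}$, the principal $d_0\times d_0$ minor obtained by deleting the last row and last column. The top-left $|\theta|\times e_t$ block vanishes, so block Laplace expansion gives
\[
\det M_{\varepsilon-1} \;=\; \pm\,V(\alpha_1,\ldots,\alpha_{e_t})\,\det T_R,
\]
where $T_R$ is the top-right $|\theta|\times|\theta|$ block with entries $\alpha_j^k F_i(\alpha_j)$ for $j>e_t$. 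Factoring the nonzero $G_t(\alpha_j)$ out of column $j>e_t$ replaces each $F_i$ by $\tilde F_i := F_i/G_t$; since $\tilde F := F/G_t$ satisfies $\gcd(\tilde F)=1$ and $\operatorname{icdeg}(\tilde F)=\theta$, the task reduces to the coprime case $e_t=0$. In that case, $V(\alpha_1,\ldots,\alpha_{d_0})\neq 0$ and it suffices to show that the $d_0\times d_0$ matrix $T$ with rows $(x^k F_i)(\alpha_j)$, $1\leq i\leq t$, $0\leq k<\theta_i$, is nonsingular. Via the evaluation isomorphism $R:=\mathbb{C}[x]/(F_0)\cong \mathbb{C}^{d_0}$, this is equivalent to the injectivity of the map
\[
\phi:\bigoplus_{i=1}^t \mathbb{C}[x]_{<\theta_i}\longrightarrow R,\qquad (P_1,\ldots,P_t)\longmapsto \sum_i P_i F_i\bmod F_0.
\]

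To prove $\phi$ injective (equivalently, bijective, since $|\theta|=d_0=\dim R$), I use the ideal filtration $0=I_0\subset I_1\subset\cdots\subset I_t=R$ with $I_i=(\bar F_1,\ldots,\bar F_i)$. Since $I_i\cong \mathbb{C}[x]/(F_0/G_i)$, one has $\dim I_i=d_0-e_i$ and $\dim I_i/I_{i-1}=\theta_i$. I claim that $\bar F_i,x\bar F_i,\ldots,x^{\theta_i-1}\bar F_i$ form a basis of $I_i/I_{i-1}$. By cardinality, it suffices to show that $P\bar F_i\in I_{i-1}$ with $\deg P<\theta_i$ forces $P=0$: lifting gives $PF_i=G_{i-1}q+F_0r$, and dividing by $G_i$ yields $P\tilde F_i=\hat G_{i-1}q+\hat F_0 r$ with $\tilde F_i=F_i/G_i$, $\hat G_{i-1}=G_{i-1}/G_i$, $\hat F_0=F_0/G_i$; reducing modulo $\hat G_{i-1}$ (which divides $\hat F_0$) gives $P\tilde F_i\equiv 0\pmod{\hat G_{i-1}}$, and the crucial coprimality $\gcd(\tilde F_i,\hat G_{i-1})=1$ (from $G_i=\gcd(G_{i-1},F_i)$) forces $\hat G_{i-1}\mid P$; since $\deg \hat G_{i-1}=\theta_i>\deg P$, we conclude $P=0$. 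Assembling the bases across $i=1,\ldots,t$ gives surjectivity of $\phi$ onto $I_t=R$, hence injectivity, so $\det T\neq 0$ and $s_\theta(F)\neq 0$. The main obstacle is this last step---especially the coprimality $\gcd(\tilde F_i,\hat G_{i-1})=1$ that makes the filtration argument work---while a secondary technical wrinkle is the multiplicity issue with repeated roots of $F_0$, which I sidestep by the symbolic/specialization strategy described at the outset.
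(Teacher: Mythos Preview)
Your proof is correct and takes a genuinely different route from the paper's. The paper indexes the roots so that $G_i=\prod_{k>d_0-e_i}(x-\alpha_k)$, which makes the \emph{entire} numerator matrix block lower-triangular with $t+1$ square diagonal blocks $M_1,\ldots,M_t,N$; each $\det M_i$ factors as a small Vandermonde times $\prod_j F_i(\alpha_j)$, and $\det N$ contributes the factor $\gcd(F)$. Cancelling Vandermondes against the denominator yields the explicit closed form
\[
s_\theta(F)\;=\;\pm\,a_{0d_0}^{\theta_0}\prod_{i=1}^{t}\ \prod_{j=d_0-e_{i-1}+1}^{d_0-e_i} H_i(\alpha_j),\qquad H_i:=F_i/G_i,
\]
and nonvanishing is immediate because each $\alpha_j$ in the indicated range is a root of $G_{i-1}$ but not of $G_i$, hence not of $F_i$. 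You instead peel off only the $2\times 2$ block structure coming from $G_t$, reduce to the coprime case, and replace the remaining determinantal computation by a module-theoretic statement: that $\{x^k\bar F_i:0\le k<\theta_i\}$ is a basis of $R=\mathbb{C}[x]/(F_0)$, proved via the ideal filtration $I_i=(\bar F_1,\ldots,\bar F_i)$ and the coprimality $\gcd(F_i/G_i,\,G_{i-1}/G_i)=1$. Both arguments ultimately rest on the same structural fact---the chain $G_0\supset\cdots\supset G_t$ with successive quotients of degree~$\theta_i$---but the paper's version buys an explicit product formula for $s_\theta$ at the cost of more index bookkeeping, while yours is more conceptual and closer to a linear-algebra-over-$R$ argument that may transport more easily to other settings (e.g.\ Ore polynomials). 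Both treatments share the soft spot you flag at the end, namely the case where $F_0$ has repeated roots, and both handle it with the same symbolic/specialization maneuver rather than a direct multiplicity argument.
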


\begin{proof}
In order to convey the main underlying ideas effectively, we will show the
proof for a particular case first. After that, we will generalize the ideas to
arbitrary cases.

\bigskip

\noindent\textbf{Particular case:} Consider the case $d=\left(  7,6,6\right)
\ $and $e=\left(  7,4,2\right)  $.

\begin{enumerate}
\item We treat $\alpha_{1},\ldots,\alpha_{7}$ as distinct indeterminates.

\item Note $\theta=\left(  e_{0}-e_{1},e_{1}-e_{2}\right)  =\left(
7-4,4-2\right)  =\left(  3,2\right)  $. Without loss of generality, we index
the roots as follows.
\[%
\begin{array}
[c]{ccr}%
G_{0} & = & \left(  x-\alpha_{1}\right)  \left(  x-\alpha_{2}\right)  \left(
x-\alpha_{3}\right)  \left(  x-\alpha_{4}\right)  \left(  x-\alpha_{5}\right)
\left(  x-\alpha_{6}\right)  \left(  x-\alpha_{7}\right) \\
G_{1} & = & \left(  x-\alpha_{4}\right)  \left(  x-\alpha_{5}\right)  \left(
x-\alpha_{6}\right)  \left(  x-\alpha_{7}\right) \\
G_{2} & = & \left(  x-\alpha_{6}\right)  \left(  x-\alpha_{7}\right)
\end{array}
\]
Then
\begin{align}
&  F_{1}(\alpha_{j})=0\ \ \text{for }j=4,5,6,7\label{eq:Fj_alphai_ex}\\
&  F_{2}(\alpha_{j})=0\ \ \text{for }j=6,7\nonumber
\end{align}

\item From the definition of subresultant (Definition \ref{def:roots}), we
have
\[
{S}_{\theta}(F)=a_{07}^{2}\cdot\frac{\det\left[
\begin{array}
[c]{ccc|cc|cc|c}%
\alpha_{1}^{0}F_{1}\left(  \alpha_{1}\right)  & \alpha_{2}^{0}F_{1}\left(
\alpha_{2}\right)  & \alpha_{3}^{0}F_{1}\left(  \alpha_{3}\right)  &
\alpha_{4}^{0}F_{1}\left(  \alpha_{4}\right)  & \alpha_{5}^{0}F_{1}\left(
\alpha_{5}\right)  & \alpha_{6}^{0}F_{1}\left(  \alpha_{6}\right)  &
\alpha_{3}^{0}F_{1}\left(  \alpha_{7}\right)  & \\
\alpha_{1}^{1}F_{1}\left(  \alpha_{1}\right)  & \alpha_{2}^{1}F_{1}\left(
\alpha_{2}\right)  & \alpha_{3}^{1}F_{1}\left(  \alpha_{3}\right)  &
\alpha_{4}^{1}F_{1}\left(  \alpha_{4}\right)  & \alpha_{5}^{1}F_{1}\left(
\alpha_{5}\right)  & \alpha_{6}^{1}F_{1}\left(  \alpha_{6}\right)  &
\alpha_{3}^{1}F_{1}\left(  \alpha_{7}\right)  & \\
\alpha_{1}^{2}F_{1}\left(  \alpha_{1}\right)  & \alpha_{2}^{2}F_{1}\left(
\alpha_{2}\right)  & \alpha_{3}^{2}F_{1}\left(  \alpha_{3}\right)  &
\alpha_{4}^{2}F_{1}\left(  \alpha_{4}\right)  & \alpha_{5}^{2}F_{1}\left(
\alpha_{5}\right)  & \alpha_{6}^{2}F_{1}\left(  \alpha_{6}\right)  &
\alpha_{3}^{2}F_{1}\left(  \alpha_{7}\right)  & \\\hline
\alpha_{1}^{0}F_{2}\left(  \alpha_{1}\right)  & \alpha_{2}^{0}F_{2}\left(
\alpha_{2}\right)  & \alpha_{3}^{0}F_{2}\left(  \alpha_{3}\right)  &
\alpha_{4}^{0}F_{2}\left(  \alpha_{4}\right)  & \alpha_{5}^{0}F_{2}\left(
\alpha_{5}\right)  & \alpha_{6}^{0}F_{2}\left(  \alpha_{6}\right)  &
\alpha_{3}^{0}F_{2}\left(  \alpha_{7}\right)  & \\
\alpha_{1}^{1}F_{2}\left(  \alpha_{1}\right)  & \alpha_{2}^{1}F_{2}\left(
\alpha_{2}\right)  & \alpha_{3}^{1}F_{2}\left(  \alpha_{3}\right)  &
\alpha_{4}^{1}F_{2}\left(  \alpha_{4}\right)  & \alpha_{5}^{1}F_{2}\left(
\alpha_{5}\right)  & \alpha_{6}^{1}F_{2}\left(  \alpha_{6}\right)  &
\alpha_{3}^{1}F_{2}\left(  \alpha_{7}\right)  & \\\hline
\alpha_{1}^{0} & \alpha_{2}^{0} & \alpha_{3}^{0} & \alpha_{4}^{0} & \alpha
_{5}^{0} & \alpha_{6}^{0} & \alpha_{7}^{0} & x^{0}\\
\alpha_{1}^{1} & \alpha_{2}^{1} & \alpha_{3}^{1} & \alpha_{4}^{1} & \alpha
_{5}^{1} & \alpha_{6}^{1} & \alpha_{7}^{1} & x^{1}\\
\alpha_{1}^{2} & \alpha_{2}^{2} & \alpha_{3}^{2} & \alpha_{4}^{2} & \alpha
_{5}^{2} & \alpha_{6}^{2} & \alpha_{7}^{2} & x^{2}%
\end{array}
\right]  }{V\left(  \alpha_{1},\ldots,\alpha_{7}\right)  }%
\]
where $a_{07}$ is the leading coefficient of $F_0$, and
\begin{align*}
\varepsilon&=1+d_{0}-\left\vert \theta\right\vert =1+7-(3+2)=3\\
\theta_0&=\max(d_1+\theta_1-d_0,d_1+\theta_1-d_0,1-|\theta|)\\
&=\max(6+3-7,6+2-7,1-(3+2))=2
\end{align*}

\item Substitution of (\ref{eq:Fj_alphai_ex}) into the above expression yields
the following determinant with a block lower-triangular structure:%
\begin{align*}
{S}_{\theta}(F)  &  =a_{07}^{2}\cdot\frac{\det\left[
\begin{array}
[c]{ccc|cc|ccc}%
\alpha_{1}^{0}F_{1}\left(  \alpha_{1}\right)  & \alpha_{2}^{0}F_{1}\left(
\alpha_{2}\right)  & \alpha_{3}^{0}F_{1}\left(  \alpha_{3}\right)  &  &  &  &
& \\
\alpha_{1}^{1}F_{1}\left(  \alpha_{1}\right)  & \alpha_{2}^{1}F_{1}\left(
\alpha_{2}\right)  & \alpha_{3}^{1}F_{1}\left(  \alpha_{3}\right)  &  &  &  &
& \\
\alpha_{1}^{2}F_{1}\left(  \alpha_{1}\right)  & \alpha_{2}^{2}F_{1}\left(
\alpha_{2}\right)  & \alpha_{3}^{2}F_{1}\left(  \alpha_{3}\right)  &  &  &  &
& \\\hline
\alpha_{1}^{0}F_{2}\left(  \alpha_{1}\right)  & \alpha_{2}^{0}F_{2}\left(
\alpha_{2}\right)  & \alpha_{3}^{0}F_{2}\left(  \alpha_{3}\right)  &
\alpha_{4}^{0}F_{2}\left(  \alpha_{4}\right)  & \alpha_{5}^{0}F_{2}\left(
\alpha_{5}\right)  &  &  & \\
\alpha_{1}^{1}F_{2}\left(  \alpha_{1}\right)  & \alpha_{2}^{1}F_{2}\left(
\alpha_{2}\right)  & \alpha_{3}^{1}F_{2}\left(  \alpha_{3}\right)  &
\alpha_{4}^{1}F_{2}\left(  \alpha_{4}\right)  & \alpha_{5}^{1}F_{2}\left(
\alpha_{5}\right)  &  &  & \\\hline
\alpha_{1}^{0} & \alpha_{2}^{0} & \alpha_{3}^{0} & \alpha_{4}^{0} & \alpha
_{5}^{0} & \alpha_{6}^{0} & \alpha_{7}^{0} & x^{0}\\
\alpha_{1}^{1} & \alpha_{2}^{1} & \alpha_{3}^{1} & \alpha_{4}^{1} & \alpha
_{5}^{1} & \alpha_{6}^{1} & \alpha_{7}^{1} & x^{1}\\
\alpha_{1}^{2} & \alpha_{2}^{2} & \alpha_{3}^{2} & \alpha_{4}^{2} & \alpha
_{5}^{2} & \alpha_{6}^{2} & \alpha_{7}^{2} & x^{2}%
\end{array}
\right]  }{V\left(  \alpha_{1},\ldots,\alpha_{7}\right)  }\\
&  =a_{07}^{2}\cdot\frac{\det\left[
\begin{array}
[c]{ccc}%
M_{1} &  & \\
\cdot & M_{2} & \\
\cdot & \cdot & N
\end{array}
\right]  }{V\left(  \alpha_{1},\ldots,\alpha_{7}\right)  }%
\end{align*}

where%
\[%
\begin{array}
[c]{lll}%
M_{1} & = & \left[
\begin{array}
[c]{ccc}%
\alpha_{1}^{0}F_{1}\left(  \alpha_{1}\right)  & \alpha_{2}^{0}F_{1}\left(
\alpha_{2}\right)  & \alpha_{3}^{0}F_{1}\left(  \alpha_{3}\right) \\
\alpha_{1}^{1}F_{1}\left(  \alpha_{1}\right)  & \alpha_{2}^{1}F_{1}\left(
\alpha_{2}\right)  & \alpha_{3}^{1}F_{1}\left(  \alpha_{3}\right) \\
\alpha_{1}^{2}F_{1}\left(  \alpha_{1}\right)  & \alpha_{2}^{2}F_{1}\left(
\alpha_{2}\right)  & \alpha_{3}^{2}F_{1}\left(  \alpha_{3}\right)
\end{array}
\right] \\
\  &  & \\
M_{2} & = & \left[
\begin{array}
[c]{cc}%
\alpha_{4}^{0}F_{2}\left(  \alpha_{4}\right)  & \alpha_{5}^{0}F_{2}\left(
\alpha_{5}\right) \\
\alpha_{4}^{1}F_{2}\left(  \alpha_{4}\right)  & \alpha_{5}^{1}F_{2}\left(
\alpha_{5}\right)
\end{array}
\right] \\
\  &  & \\
N & = & \left[
\begin{array}
[c]{cc|c}%
\alpha_{6}^{0} & \alpha_{7}^{0} & x^{0}\\
\alpha_{6}^{1} & \alpha_{7}^{1} & x^{1}\\
\alpha_{6}^{2} & \alpha_{7}^{2} & x^{2}%
\end{array}
\right]
\end{array}
\]

\item Note that $M_{1},M_{2},N$ are all square matrices.

\item By applying elementary properties of determinants to the above
expressions, we have%
\[
{S}_{\theta}(F)=a_{07}^{2}\cdot\frac{\det M_{1}\ \det M_{2}\ \det N}{V\left(  \alpha
_{1},\ldots,\alpha_{7}\right)  }%
\]
where%
\[%
\begin{array}
[c]{llll}%
\det M_{1} & = & V\left(  \alpha_{1},\alpha_{2},\alpha_{3}\right)
\ \prod\limits_{j=1}^{3}F_{1}\left(  \alpha_{j}\right)  & =V\left(  \alpha
_{1},\alpha_{2},\alpha_{3}\right)  F_{1}\left(  \alpha_{1}\right)
F_{1}\left(  \alpha_{2}\right)  F_{1}\left(  \alpha_{3}\right) \\
\det M_{2} & = & V\left(  \alpha_{4},\alpha_{5}\right)  \ \prod\limits_{j=4}%
^{5}F_{2}\left(  \alpha_{j}\right)  & =V\left(  \alpha_{4},\alpha_{5}\right)
F_{2}\left(  \alpha_{4}\right)  F_{2}\left(  \alpha_{5}\right) \\
\det N & = & V\left(  \alpha_{6},\alpha_{7}\right)  \ \gcd\left(  F\right)  &
\text{since }\gcd\left(  F\right)  =G_{2}=\left(  x-\alpha_{6}\right)  \left(
x-\alpha_{7}\right)
\end{array}
\]

\item Thus we have%
\[
S_{\theta}\left(  F\right)  =s_{\theta}\left(  F\right)  \ \gcd\left(
F\right)
\]
where%
\[
s_{\theta}\left(  F\right)  =a_{07}^{2}\cdot\frac{V\left(  \alpha_{1},\alpha_{2},\alpha
_{3}\right)  V\left(  \alpha_{4},\alpha_{5}\right)  V\left(  \alpha_{6}%
,\alpha_{7}\right)  }{V\left(  \alpha_{1},\ldots,\alpha_{7}\right)  }%
\ F_{1}\left(  \alpha_{1}\right)  F_{1}\left(  \alpha_{2}\right)  F_{1}\left(
\alpha_{3}\right)  \ \ F_{2}\left(  \alpha_{4}\right)  F_{2}\left(  \alpha
_{5}\right)  \ \
\]

\item By applying elementary properties of Vandermonde determinants, we have
\begin{align*}
&  \frac{V\left(  \alpha_{1},\alpha_{2},\alpha_{3}\right)  V\left(  \alpha
_{4},\alpha_{5}\right)  V\left(  \alpha_{6},\alpha_{7}\right)  }{V\left(
\alpha_{1},\ldots,\alpha_{7}\right)  }\\
&  =\frac{1}{\prod\limits_{j=1}^{3}\prod\limits_{k=4}^{7}\left(  \alpha
_{k}-\alpha_{j}\right)  \ \prod\limits_{j=4}^{5}\prod\limits_{k=6}^{7}\left(
\alpha_{k}-\alpha_{j}\right)  }\\
&  =\pm\frac{1}{\prod\limits_{j=1}^{3}\prod\limits_{k=4}^{7}\left(  \alpha
_{j}-\alpha_{k}\right)  \ \prod\limits_{j=4}^{5}\prod\limits_{k=6}^{7}\left(
\alpha_{j}-\alpha_{k}\right)  }\\
&  =\pm\frac{1}{G_{1}\left(  \alpha_{1}\right)  G_{1}\left(  \alpha
_{2}\right)  G_{1}\left(  \alpha_{3}\right)  \ G_{2}\left(  \alpha_{4}\right)
G_{2}\left(  \alpha_{5}\right)  }%
\end{align*}

\item Thus
\begin{align*}
s_{\theta}\left(  F\right)  &=\pm a_{07}^{2}\cdot\frac{F_{1}\left(  \alpha_{1}\right)  }%
{G_{1}\left(  \alpha_{1}\right)  }\frac{F_{1}\left(  \alpha_{2}\right)
}{G_{1}\left(  \alpha_{2}\right)  }\frac{F_{1}\left(  \alpha_{3}\right)
}{G_{1}\left(  \alpha_{3}\right)  }\frac{F_{2}\left(  \alpha_{4}\right)
}{G_{2}\left(  \alpha_{4}\right)  }\frac{F_{2}\left(  \alpha_{5}\right)
}{G_{2}\left(  \alpha_{5}\right)  }\\
&=\pm a_{07}^{2}\cdot H_{1}\left(  \alpha_{1}\right)
\ H_{1}\left(  \alpha_{2}\right)  \ H_{1}\left(  \alpha_{3}\right)
\ H_{2}\left(  \alpha_{4}\right)  \ H_{2}\left(  \alpha_{5}\right)  \
\end{align*}
where $H_{i}=F_{i}/G_{i}$.

\item From now on, we treat $\alpha_{1},\ldots,\alpha_{7}$ as numbers.

\item Note that $s_{\theta}\left(  F\right)  \ \neq0$.

\item Thus we also have $\gcd\left(  F\right)  =\frac{S_{\theta}\left(
F\right)  }{s_{\theta}\left(  F\right)  }$.
\end{enumerate}

\bigskip\noindent\textbf{Arbitrary case}. Now we generalize the above ideas to
arbitrary cases.

\begin{enumerate}
\item We treat $\alpha_{1},\ldots,\alpha_{d_{0}}$ as distinct indeterminates.

\item Without loss of generality, we index the roots as follows.
\[%
\begin{array}
[c]{ccrr}%
F_{0} & = & a_{0d_{0}}\prod\limits_{k=1}^{d_{0}}\left(  x-\alpha_{k}\right)
& \\
G_{i} & = & \prod\limits_{k=d_{0}-e_{i}+1}^{d_{0}}\left(  x-\alpha_{k}\right)
\  & \text{for }i=0,\ldots,t
\end{array}
\]
where $a_{0.d_{0}}\neq0$. Then%
\begin{equation}
F_{i}(\alpha_{j})=0\ \ \text{for }j=d_{0}-e_{i}+1,\ldots,d_{0}
\label{eqs:Fj_alphai}%
\end{equation}

\item From the definition of subresultant (Definition \ref{def:roots}), we
have \
\[
S_{\theta}(F)=a_{0d_{0}}^{\theta_{0}}\cdot\frac{\det\left[
\begin{array}
[c]{rrr|c}%
\alpha_{1}^0 F_{1}\left(  \alpha_{1}\right)   & \cdots &
\alpha_{d_0}^0 F_{1} \left(  \alpha_{d_{0}}\right)   & \\
\vdots~~~~~~~~ &  & \vdots~~~~~~~~ & \\
\alpha_{1}^{\theta_{1}-1} F_{1}(\alpha_{1}) & \cdots &
\alpha_{d_0}^{\theta_{1}-1} F_{1}(\alpha_{d_{0}}) & \\\hline
\vdots~~~~~~~~ &  & \vdots~~~~~~~~ & \\
\vdots~~~~~~~~ &  & \vdots~~~~~~~~ & \\\hline
\alpha_{1}^0 F_{t}\left(  \alpha_{1}\right)   & \cdots &
\alpha_{d_0}^0 F_{t} \left(  \alpha_{d_{0}}\right)   & \\
\vdots~~~~~~~~ &  & \vdots~~~~~~~~ & \\
\alpha_{1}^{\theta_{t}-1} F_{t}(\alpha_{1}) & \cdots &
\alpha_{d_0}^{\theta_{t}-1} F_{t}(\alpha_{d_{0}}) & \\\hline
\alpha_{1}^0~~~~~~~   & \cdots & \alpha_{d_{0}}^0~~~~~~~  & x^0\\
\vdots~~~~~~~~ &  & \vdots~~~~~~~~ & \vdots\\
\alpha_{1}^{\varepsilon-1}~~~~~~   & \cdots &\alpha_{d_{0}}^{\varepsilon-1}~~~~~~   & x^{\varepsilon-1}%
\end{array}
\right]  }{V\left(  \alpha_{1},\ldots,\alpha_{d_{0}}\right)  }\]
where $\varepsilon=1+d_{0}-\left\vert \theta\right\vert$ and $\theta_0=\max(d_1+\theta_1-d_0,\ldots,d_t+\theta_t-d_0,1-|\theta|)$.

\item Substitution of (\ref{eqs:Fj_alphai}) into the above expression yields
the following determinant with a block lower-triangular structure:%
\[
S_{\theta}(F)=a_{0d_{0}}^{\theta_{0}}\cdot\frac{\det\left[
\begin{array}
[c]{cccc}%
M_{1} &  &  & \\
\vdots & \ddots &  & \\
\cdot & \cdots & M_{t} & \\
\cdot & \cdots & \cdot & N
\end{array}
\right]  }{V(\alpha_{1},\ldots,\alpha_{d_{0}})}%
\]

where%
\[%
\begin{array}
[c]{lll}%
M_{i} & = & \left[
\begin{array}
[c]{ccc}%
\alpha_{d_{0}-e_{i-1}+1}^{0}F_{i}(\alpha_{d_{0}-e_{i-1}+1}), & \ldots &
\alpha_{d_{0}-e_{i}}^{0}F_{i}(\alpha_{d_{0}-e_{i}})\\
\vdots &  & \vdots\\
\alpha_{d_{0}-e_{i-1}+1}^{\theta_{i}-1}F_{i}(\alpha_{d_{0}-e_{i-1}+1}), &
\ldots & \alpha_{d_{0}-e_{i}}^{\theta_{i}-1}F_{i}(\alpha_{d_{0}-e_{i}})
\end{array}
\right]  \ \ \ \ \ \ \ \,\text{for }i=1,\ldots,t\\
\  &  & \\
N & = & \left[
\begin{array}
[c]{cccc}%
\alpha_{d_{0}-e_{t}+1}^{0} & \ldots & \alpha_{d_{0}}^{0} & x^{0}\\
\vdots &  & \vdots & \vdots\\
\alpha_{d_{0}-e_{t}+1}^{\varepsilon-1} & \ldots & \alpha_{d_{0}}%
^{\varepsilon-1} & x^{\varepsilon-1}%
\end{array}
\right]
\end{array}
\]

\item Note that $M_{1},\ldots,M_{t},N$ are all square matrices since%
\begin{align*}
\#\ \text{of\ rows of }M_{i}  &  =\theta_{i}\\
&  =e_{i-1}-e_{i}\\
&  =\ \text{\# of columns\ \thinspace of }M_{i}\\
\#\ \text{of rows of }N\  &  =\varepsilon\\
&  =1+d_{0}-\left(  \theta_{1}+\cdots+\theta_{t}\right) \\
&  =1+d_{0}-\left(  e_{0}-e_{1}+e_{1}-e_{2}+\cdots+e_{t-1}-e_{t}\right) \\
&  =1+d_{0}-\left(  e_{0}-e_{t}\right) \\
&  =1+e_{t}\ \ \ \ \ \ \text{since }d_{0}=e_{0}\text{ }\\
&  =\ \text{\# of columns of }N
\end{align*}

\item By applying elementary properties of determinants to the above
expressions, we have%
\[
{S}_{\theta}(F)=a_{0d_{0}}^{\theta_{0}}\cdot\frac{\prod\limits_{i=1}^{t}\det M_{i}\ \ \det N}{V\left(
\alpha_{1},\ldots,\alpha_{d_{0}}\right)  }%
\]
where%
\[%
\begin{array}
[c]{lll}%
\det M_{i} & = & V(\alpha_{d_{0}-e_{i-1}+1},\ldots,\alpha_{d_{0}-e_{i}}%
)\prod\limits_{j=d_{0}-e_{i-1}+1}^{d_{0}-e_{i}}F_{i}(\alpha_{j}%
)\ \ \ \ \ \ \text{for }i=1,\ldots,t\\
\det N & = & V(\alpha_{d_{0}-e_{t}+1},\ldots,\alpha_{d_{0}})\ \gcd\left(
F\right)  \ \ \ \ \ \ \text{since }G_{t}=\gcd\left(  F\right)
\end{array}
\]

\item Thus we have%
\[
S_{\theta}\left(  F\right)  =s_{\theta}\left(  F\right)  \ \gcd\left(
F\right)
\]
where%
\[
s_{\theta}\left(  F\right)  =a_{0d_{0}}^{\theta_{0}}\cdot\frac{\prod\limits_{i=1}^{t+1}V(\alpha
_{d_{0}-e_{i-1}+1},\ldots,\alpha_{d_{0}-e_{i}})\ }{V(\alpha_{1},\ldots
,\alpha_{d_{0}})}\ \prod\limits_{i=1}^{t}\prod\limits_{j=d_{0}-e_{i-1}%
+1}^{d_{0}-e_{i}}F_{i}(\alpha_{j})
\]
where $e_{t+1}=0$.

\item By applying elementary properties of Vandermonde matrices, we have%
\begin{align*}
&  \frac{\prod\limits_{i=1}^{t+1}V(\alpha_{d_{0}-e_{i-1}+1},\ldots
,\alpha_{d_{0}-e_{i}})}{V(\alpha_{1},\ldots,\alpha_{d_{0}})}\\
&  =\frac{1}{\prod\limits_{i=1}^{t}\prod\limits_{j=d_{0}-e_{i-1}+1}%
^{d_{0}-e_{i}}\prod\limits_{k=d_{0}-e_{i}+1}^{d_{0}}\left(  \alpha_{k}%
-\alpha_{j}\right)  }\\
&  =\pm\frac{1}{\prod\limits_{i=1}^{t}\prod\limits_{j=d_{0}-e_{i-1}+1}%
^{d_{0}-e_{i}}\prod\limits_{k=d_{0}-e_{i}+1}^{d_{0}}\left(  \alpha_{j}%
-\alpha_{k}\right)  }\\
&  =\pm\frac{1}{\prod\limits_{i=1}^{t}\prod\limits_{j=d_{0}-e_{i-1}+1}%
^{d_{0}-e_{i}}G\left(  \alpha_{j}\right)  }%
\end{align*}

\item Thus
\[
s_{\theta}(F)=\pm a_{0d_{0}}^{\theta_{0}}\cdot\prod\limits_{i=1}^{t}\prod\limits_{j=d_{0}-e_{i-1}+1}%
^{d_{0}-e_{i}}\frac{F_{i}\left(  \alpha_{j}\right)  }{G_{i}\left(  \alpha
_{j}\right)  }=\pm a_{0d_{0}}^{\theta_{0}}\cdot\prod\limits_{i=1}^{t}\prod\limits_{j=d_{0}-e_{i-1}%
+1}^{d_{0}-e_{i}}H_{i}(\alpha_{j})
\]
\newline where $H_{i}=F_{i}/G_{i}$.

\item From now on, we treat $\alpha_{1},\ldots,\alpha_{d_{0}}$ as numbers.

\item Note that $s_{\theta}\left(  F\right)  =\pm a_{0d_{0}}^{\theta_{0}}\cdot\prod\limits_{i=1}^{t}%
\prod\limits_{j=d_{0}-e_{i-1}+1}^{d_{0}-e_{i}}H_{i}(\alpha_{j})\neq0$.

\item Thus we also have $\gcd(F)=\frac{S_{\theta}\left(  F\right)  }%
{s_{\theta}\left(  F\right)  }$.
\end{enumerate}
\end{proof}

\begin{lemma}
\label{lem:part_b}Let $\theta=\operatorname*{icdeg}\left(  F\right)  $. Let
$\delta\in\mathbb{N}_{\geq0}^{t}$\ such that $\left\vert \delta\right\vert
\leq d_{0}$. \ If $\delta\succ_{\text{glex}}\theta$ then $S_{\delta}=0$.
\end{lemma}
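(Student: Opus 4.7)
The plan is to exploit the block structure of the numerator determinant defining $S_\delta$ together with the vanishing conditions $F_j(\alpha_k)=0$ whenever $\alpha_k$ is a common root of $F_0,\ldots,F_j$, and to exhibit an explicit linear dependence among rows of that determinant.

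First I would set up exactly as in the proof of Lemma~\ref{lem:part_a}: treat $\alpha_1,\ldots,\alpha_{d_0}$ as distinct indeterminates and relabel so that
\[
F_j(\alpha_k) = 0 \quad \text{for } k \geq d_0 - e_j + 1, \ \ j = 1,\ldots,t,
\]
where $e_j = \deg G_j$. Since $\theta_j = e_{j-1} - e_j$ and $e_0 = d_0$, the telescoping identity $\theta_1 + \cdots + \theta_i = d_0 - e_i$ holds for every $i$; in particular $|\theta| = d_0 - e_t$.

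Now the hypothesis $\delta \succ_{\text{glex}} \theta$ splits into two cases. In the case $|\delta| > |\theta|$, consider the $|\delta|$ rows coming from the $F_1,\ldots,F_t$ blocks in the determinant of Definition~\ref{def:roots}. Each such row has a zero entry in the $x$-column (which carries nonzero entries only inside the Vandermonde block) and a zero entry in every $\alpha_k$-column with $k \geq d_0 - e_t + 1$, because every $F_j$ vanishes on the roots of $G_t$. Hence these $|\delta|$ rows are supported in only $|\theta|=d_0-e_t$ columns, so they are linearly dependent as vectors in the ambient $(d_0+1)$-dimensional space, and the numerator determinant vanishes identically.

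In the case $|\delta| = |\theta|$ with $\delta_j = \theta_j$ for $j < i$ and $\delta_i > \theta_i$, I restrict attention to the $\delta_1 + \cdots + \delta_i$ rows belonging to the $F_1,\ldots,F_i$ blocks. Each such row is zero in the $x$-column and in every $\alpha_k$-column with $k \geq d_0 - e_i + 1$, since each $F_j$ with $j \leq i$ vanishes on the roots of $G_i$; so these rows are supported in only $d_0 - e_i = \theta_1 + \cdots + \theta_i$ columns, whereas
\[
\delta_1 + \cdots + \delta_i \;=\; \theta_1 + \cdots + \theta_{i-1} + \delta_i \;>\; \theta_1 + \cdots + \theta_i,
\]
giving the required linear dependence and forcing the determinant to vanish. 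Evaluating back at the actual roots, $S_\delta = 0$ in both cases. The main obstacle is simply the bookkeeping: one must correctly identify the telescoping identity $\theta_1+\cdots+\theta_i=d_0-e_i$ and verify that the $F$-block rows have no contribution from the $x$-column. Once the support-counting is set up, no further determinant manipulation (like the block-triangular factorization used in Lemma~\ref{lem:part_a}) is needed.
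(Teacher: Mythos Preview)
Your proof is correct and rests on the same linear-algebra fact as the paper's: once the vanishing conditions $F_j(\alpha_k)=0$ for $k>d_0-e_j$ are imposed, a block of rows in the numerator determinant lives in too low-dimensional a column space, so the determinant vanishes. The paper packages this as a block lower-triangular factorisation and then repartitions to isolate a square block $T$ whose rightmost columns are zero; you instead count supports directly (``$\delta_1+\cdots+\delta_\ell$ rows supported in only $\theta_1+\cdots+\theta_\ell$ columns''), which is a bit more elementary since it bypasses the repartitioning step entirely. Your two-case split is harmless but redundant: both cases are instances of the single observation that $\delta\succ_{\text{glex}}\theta$ forces $\delta_1+\cdots+\delta_\ell>\theta_1+\cdots+\theta_\ell$ for some $\ell$ (take $\ell=t$ in your first case, $\ell=i$ in your second), which is exactly how the paper begins. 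The remark that the $F$-block rows have zero in the $x$-column is correct and needed; the paper handles this implicitly through the block structure.
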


\begin{proof}
\ In order to convey the main underlying ideas effectively, we will show the
proof for a particular case first. After that, we will generalize the ideas to
arbitrary cases.

\bigskip

\noindent\textbf{Particular case}. We consider the same particular case used
in the proof of Lemma \ref{lem:part_a}. Recall that $d=\left(  7,6,6\right)
,$ $e=\left(  7,4,2\right)  $ and $\theta=\left(  3,2\right)  $.

\begin{enumerate}
\item We treat $\alpha_{1},\ldots,\alpha_{7}$ as distinct indeterminates.

\item Let $\delta=\left(  3,3\right)  $. Note $\delta\succ_{\text{glex}}%
\theta$. Note that $\delta_{1}+\delta_{2}>\theta_{1}+\theta_{2}$.

\item From the definition of subresultant (Definition \ref{def:roots}) we
have
\[
S_{\delta}\left(  F\right)  =a_{07}^{2}\cdot\frac{\det\left[
\begin{array}
[c]{ccccccc|c}%
\alpha_{1}^{0}F_{1}\left(  \alpha_{1}\right)  & \alpha_{2}^{0}F_{1}\left(
\alpha_{2}\right)  & \alpha_{3}^{0}F_{1}\left(  \alpha_{3}\right)  &
\alpha_{4}^{0}F_{1}\left(  \alpha_{4}\right)  & \alpha_{5}^{0}F_{1}\left(
\alpha_{5}\right)  & \alpha_{6}^{0}F_{1}\left(  \alpha_{6}\right)  &
\alpha_{7}^{0}F_{1}\left(  \alpha_{7}\right)  & \\
\alpha_{1}^{1}F_{1}\left(  \alpha_{1}\right)  & \alpha_{2}^{1}F_{1}\left(
\alpha_{2}\right)  & \alpha_{3}^{1}F_{1}\left(  \alpha_{3}\right)  &
\alpha_{4}^{1}F_{1}\left(  \alpha_{4}\right)  & \alpha_{5}^{1}F_{1}\left(
\alpha_{5}\right)  & \alpha_{6}^{1}F_{1}\left(  \alpha_{6}\right)  &
\alpha_{7}^{1}F_{1}\left(  \alpha_{7}\right)  & \\
\alpha_{1}^{2}F_{1}(\alpha_{1}) & \alpha_{2}^{2}F_{1}(\alpha_{2}) & \alpha
_{3}^{2}F_{1}(\alpha_{3}) & \alpha_{4}^{2}F_{1}(\alpha_{4}) & \alpha_{5}%
^{2}F_{1}(\alpha_{5}) & \alpha_{6}^{2}F_{1}(\alpha_{6}) & \alpha_{7}^{2}%
F_{1}(\alpha_{7}) & \\\hline
\alpha_{1}^{0}F_{2}\left(  \alpha_{1}\right)  & \alpha_{2}^{0}F_{2}\left(
\alpha_{2}\right)  & \alpha_{3}^{0}F_{2}\left(  \alpha_{3}\right)  &
\alpha_{4}^{0}F_{2}\left(  \alpha_{4}\right)  & \alpha_{5}^{0}F_{2}\left(
\alpha_{5}\right)  & \alpha_{6}^{0}F_{2}\left(  \alpha_{6}\right)  &
\alpha_{7}^{0}F_{2}\left(  \alpha_{7}\right)  & \\
\alpha_{1}^{1}F_{2}\left(  \alpha_{1}\right)  & \alpha_{2}^{1}F_{2}\left(
\alpha_{2}\right)  & \alpha_{3}^{1}F_{2}\left(  \alpha_{3}\right)  &
\alpha_{4}^{1}F_{2}\left(  \alpha_{4}\right)  & \alpha_{5}^{1}F_{2}\left(
\alpha_{5}\right)  & \alpha_{6}^{1}F_{2}\left(  \alpha_{6}\right)  &
\alpha_{7}^{1}F_{2}\left(  \alpha_{7}\right)  & \\
\alpha_{1}^{2}F_{2}(\alpha_{1}) & \alpha_{2}^{2}F_{2}(\alpha_{2}) & \alpha
_{3}^{2}F_{2}(\alpha_{3}) & \alpha_{4}^{2}F_{2}(\alpha_{4}) & \alpha_{5}%
^{2}F_{2}(\alpha_{5}) & \alpha_{6}^{2}F_{2}(\alpha_{6}) & \alpha_{7}^{2}%
F_{2}(\alpha_{7}) & \\\hline
\alpha_{1}^{0} & \alpha_{2}^{0} & \alpha_{3}^{0} & \alpha_{4}^{0} & \alpha
_{5}^{0} & \alpha_{6}^{0} & \alpha_{7}^{0} & x^{0}\\
\alpha_{1}^{1} & \alpha_{2}^{1} & \alpha_{3}^{1} & \alpha_{4}^{1} & \alpha
_{5}^{1} & \alpha_{6}^{1} & \alpha_{7}^{1} & x^{1}%
\end{array}
\right]  }{V\left(  \alpha_{1},\ldots,\alpha_{7}\right)  }%
\]
where
\begin{align*}
\varepsilon&=1+d_{0}-\left\vert \delta\right\vert =1+7-(3+3)=2\\
\delta_0&=\max(d_1+\delta_1-d_0,d_2+\delta_2-d_0,1-|\delta|)\\
&=\max(6+3-7,6+3-7,1-(3+3))=2
\end{align*}

\item Substitution of (\ref{eq:Fj_alphai_ex}) into the above expression yields
the following determinant with a block lower-triangular structure:%
\begin{align*}
S_{\delta}\left(  F\right)   &  =a_{07}^{2}\cdot\frac{\det\left[
\begin{array}
[c]{ccc|cc|ccc}%
\alpha_{1}^{0}F_{1}\left(  \alpha_{1}\right)  & \alpha_{2}^{0}F_{1}\left(
\alpha_{2}\right)  & \alpha_{3}^{0}F_{1}\left(  \alpha_{3}\right)  &  &  &  &
& \\
\alpha_{1}^{1}F_{1}\left(  \alpha_{1}\right)  & \alpha_{2}^{1}F_{1}\left(
\alpha_{2}\right)  & \alpha_{3}^{1}F_{1}\left(  \alpha_{3}\right)  &  &  &  &
& \\
\alpha_{1}^{2}F_{1}(\alpha_{1}) & \alpha_{2}^{2}F_{1}(\alpha_{2}) & \alpha
_{3}^{2}F_{1}(\alpha_{3}) &  &  &  &  & \\\hline
\alpha_{1}^{0}F_{2}\left(  \alpha_{1}\right)  & \alpha_{2}^{0}F_{2}\left(
\alpha_{2}\right)  & \alpha_{3}^{0}F_{2}\left(  \alpha_{3}\right)  &
\alpha_{4}^{0}F_{2}\left(  \alpha_{4}\right)  & \alpha_{5}^{0}F_{2}\left(
\alpha_{5}\right)  &  &  & \\
\alpha_{1}^{1}F_{2}\left(  \alpha_{1}\right)  & \alpha_{2}^{1}F_{2}\left(
\alpha_{2}\right)  & \alpha_{3}^{1}F_{2}\left(  \alpha_{3}\right)  &
\alpha_{4}^{1}F_{2}\left(  \alpha_{4}\right)  & \alpha_{5}^{1}F_{2}\left(
\alpha_{5}\right)  &  &  & \\
\alpha_{1}^{2}F_{2}(\alpha_{1}) & \alpha_{2}^{2}F_{2}(\alpha_{2}) & \alpha
_{3}^{2}F_{2}(\alpha_{3}) & \alpha_{4}^{2}F_{2}(\alpha_{4}) & \alpha_{5}%
^{2}F_{2}(\alpha_{5}) &  &  & \\\hline
\alpha_{1}^{0} & \alpha_{2}^{0} & \alpha_{3}^{0} & \alpha_{4}^{0} & \alpha
_{5}^{0} & \alpha_{6}^{0} & \alpha_{7}^{0} & x^{0}\\
\alpha_{1}^{1} & \alpha_{2}^{1} & \alpha_{3}^{1} & \alpha_{4}^{1} & \alpha
_{5}^{1} & \alpha_{6}^{1} & \alpha_{7}^{1} & x^{1}%
\end{array}
\right]  }{V\left(  \alpha_{1},\ldots,\alpha_{7}\right)  }\\
&  =a_{07}^{2}\cdot\frac{\det\left[
\begin{array}
[c]{ccc}%
M_{1} &  & \\
\cdot & M_{2} & \\
\cdot & \cdot & N
\end{array}
\right]  }{V\left(  \alpha_{1},\ldots,\alpha_{7}\right)  }%
\end{align*}
where%
\begin{align*}
M_{1}  &  =\left[
\begin{array}
[c]{ccc}%
\alpha_{1}^{0}F_{1}\left(  \alpha_{1}\right)  & \alpha_{2}^{0}F_{1}\left(
\alpha_{2}\right)  & \alpha_{3}^{0}F_{1}\left(  \alpha_{3}\right) \\
\alpha_{1}^{1}F_{1}\left(  \alpha_{1}\right)  & \alpha_{2}^{1}F_{1}\left(
\alpha_{2}\right)  & \alpha_{3}^{1}F_{1}\left(  \alpha_{3}\right) \\
\alpha_{1}^{2}F_{1}(\alpha_{1}) & \alpha_{2}^{2}F_{1}(\alpha_{2}) & \alpha
_{3}^{2}F_{1}(\alpha_{3})
\end{array}
\right] \\
M_{2}  &  =\left[
\begin{array}
[c]{cc}%
\alpha_{4}^{0}F_{2}\left(  \alpha_{4}\right)  & \alpha_{5}^{0}F_{2}\left(
\alpha_{5}\right) \\
\alpha_{4}^{1}F_{2}\left(  \alpha_{4}\right)  & \alpha_{5}^{1}F_{2}\left(
\alpha_{5}\right) \\
\alpha_{4}^{2}F_{2}(\alpha_{4}) & \alpha_{5}^{2}F_{2}(\alpha_{5})
\end{array}
\right] \\
N  &  =\left[
\begin{array}
[c]{ccc}%
\alpha_{6}^{0} & \alpha_{7}^{0} & x^{0}\\
\alpha_{6}^{1} & \alpha_{7}^{1} & x^{1}%
\end{array}
\right]
\end{align*}

\item We repartition the numerator matrix so that the diagonal consists of two
square matrices
\begin{align*}
S_{\delta}\left(  F\right)   &  =a_{07}^{2}\cdot\frac{\det\left[
\begin{array}
[c]{cccccc|cc}%
\alpha_{1}^{0}F_{1}\left(  \alpha_{1}\right)  & \alpha_{2}^{0}F_{1}\left(
\alpha_{2}\right)  & \alpha_{3}^{0}F_{1}\left(  \alpha_{3}\right)  &  &  &  &
& \\
\alpha_{1}^{1}F_{1}\left(  \alpha_{1}\right)  & \alpha_{2}^{1}F_{1}\left(
\alpha_{2}\right)  & \alpha_{3}^{1}F_{1}\left(  \alpha_{3}\right)  &  &  &  &
& \\
\alpha_{1}^{2}F_{1}(\alpha_{1}) & \alpha_{2}^{2}F_{1}(\alpha_{2}) & \alpha
_{3}^{2}F_{1}(\alpha_{3}) &  &  &  &  & \\
\alpha_{1}^{0}F_{2}\left(  \alpha_{1}\right)  & \alpha_{2}^{0}F_{2}\left(
\alpha_{2}\right)  & \alpha_{3}^{0}F_{2}\left(  \alpha_{3}\right)  &
\alpha_{4}^{0}F_{2}\left(  \alpha_{4}\right)  & \alpha_{5}^{0}F_{2}\left(
\alpha_{5}\right)  &  &  & \\
\alpha_{1}^{1}F_{2}\left(  \alpha_{1}\right)  & \alpha_{2}^{1}F_{2}\left(
\alpha_{2}\right)  & \alpha_{3}^{1}F_{2}\left(  \alpha_{3}\right)  &
\alpha_{4}^{1}F_{2}\left(  \alpha_{4}\right)  & \alpha_{5}^{1}F_{2}\left(
\alpha_{5}\right)  &  &  & \\
\alpha_{1}^{2}F_{2}(\alpha_{1}) & \alpha_{2}^{2}F_{2}(\alpha_{2}) & \alpha
_{3}^{2}F_{2}(\alpha_{3}) & \alpha_{4}^{2}F_{2}(\alpha_{4}) & \alpha_{5}%
^{2}F_{2}(\alpha_{5}) &  &  & \\\hline
\alpha_{1}^{0} & \alpha_{2}^{0} & \alpha_{3}^{0} & \alpha_{4}^{0} & \alpha
_{5}^{0} & \alpha_{6}^{0} & \alpha_{7}^{0} & x^{0}\\
\alpha_{1}^{1} & \alpha_{2}^{1} & \alpha_{3}^{1} & \alpha_{4}^{1} & \alpha
_{5}^{1} & \alpha_{6}^{1} & \alpha_{7}^{1} & x^{1}%
\end{array}
\right]  }{V\left(  \alpha_{1},\ldots,\alpha_{7}\right)  }\\
&  =a_{07}^{2}\cdot\frac{\det\left[
\begin{array}
[c]{cc}%
T & \\
\cdot & B
\end{array}
\right]  }{V\left(  \alpha_{1},\ldots,\alpha_{7}\right)  }%
\end{align*}

where the size of the square matrix $T$ is $\delta_{1}+\delta_{2}=3+3=6$,
namely%
\[
T=\left[
\begin{array}
[c]{ccc}%
M_{1} &  & \\
\cdot & M_{2} & 0
\end{array}
\right]
\]
where $0$ is the $\delta_{2}\times p$ matrix with zeros, where again
$\delta_{2}=3\ $and $p=\left(  \delta_{1}+\delta_{2}\right)  -\left(
\theta_{1}+\theta_{2}\right)  =1$.

\item By applying elementary properties of determinants to the above
expressions, we have%
\[
{S}_{\theta}(F)=a_{07}^{2}\cdot\frac{\det T\det B}{V\left(  \alpha_{1},\ldots,\alpha
_{7}\right)  }%
\]

\item Since $p=1>0$, the last column of $T$ is all zero. Hence $\det T=0$ and
in turn ${S}_{\theta}(F)=0$.

\item From now on, we treat $\alpha_{1},\ldots,\alpha_{7}$ as numbers.

\item Obviously ${S}_{\theta}(F)=0$.
\end{enumerate}

\bigskip\noindent\textbf{Arbitrary case}. Now we generalize the above ideas to
arbitrary cases. Let $\delta\in\mathbb{N}_{\geq0}^{t}\ $be such that
$\left\vert \delta\right\vert \leq d_{0}$.

\begin{enumerate}
\item We will treat $\alpha_{1},\ldots,\alpha_{d_{0}}$ as distinct indeterminates.

\item Assume $\delta\succ_{\text{glex}}\theta$. Then for some $\ell$, we have
$\delta_{1}+\cdots+\delta_{\ell}>\theta_{1}+\cdots+\theta_{\ell}$.

\item Recall that
\[
S_{\delta}(F)=a_{0d_{0}}^{\delta_{0}}\cdot\dfrac{\det\left[
\begin{array}
[c]{rrr|c}%
\alpha_{1}^0 F_{1}\left(  \alpha_{1}\right)   & \cdots &
\alpha_{d_0}^0 F_{1} \left(  \alpha_{d_{0}}\right)   & \\
\vdots~~~~~~~~ &  & \vdots~~~~~~~~ & \\
\alpha_{1}^{\delta_{1}-1} F_{1}(\alpha_{1}) & \cdots &
\alpha_{d_0}^{\delta_{1}-1} F_{1}(\alpha_{d_{0}}) & \\\hline
\vdots~~~~~~~~ &  & \vdots~~~~~~~~ & \\
\vdots~~~~~~~~ &  & \vdots~~~~~~~~ & \\\hline
\alpha_{1}^0 F_{t}\left(  \alpha_{1}\right)   & \cdots &
\alpha_{d_0}^0 F_{t} \left(  \alpha_{d_{0}}\right)   & \\
\vdots~~~~~~~~ &  & \vdots~~~~~~~~ & \\
\alpha_{1}^{\delta_{t}-1} F_{t}(\alpha_{1}) & \cdots &
\alpha_{d_0}^{\delta_{t}-1} F_{t}(\alpha_{d_{0}}) & \\\hline
\alpha_{1}^0~~~~~~~   & \cdots & \alpha_{d_{0}}^0~~~~~~~  & x^0\\
\vdots~~~~~~~~ &  & \vdots~~~~~~~~ & \vdots\\
\alpha_{1}^{\varepsilon-1}~~~~~~   & \cdots &\alpha_{d_{0}}^{\varepsilon-1}~~~~~~   & x^{\varepsilon-1}%
\end{array}
\right]}
{V\left(  \alpha_{1},\ldots,\alpha_{d_{0}}\right)}\]
where $a_{0d_0}$ is the leading coefficient of $F_0$, and
\begin{align*}
\varepsilon&=1+d_{0}-\left\vert \delta\right\vert \\
\delta_0&=\max(d_1+\delta_1-d_0,\ldots,d_t+\delta_t-d_0,1-|\delta|)
\end{align*}

\item Substitution of \eqref{eqs:Fj_alphai} into the above expression yields
the following determinant with a block lower-triangular structure:
\begin{equation}
S_{\delta}(F)=a_{0d_{0}}^{\delta_{0}}\cdot\frac{\det\left[
\begin{array}
[c]{cccc}%
M_{1} &  &  & \\
\vdots & \ddots &  & \\
\cdot & \cdots & M_{t} & \\
\cdot & \cdots & \cdot & N
\end{array}
\right]  }{V(\alpha_{1},\ldots,\alpha_{d_{0}})} \label{eq:lowtriblock}%
\end{equation}
where $M_{i}$ is $\delta_{i}$ by $\theta_{i}$.

\item We repartition the numerator matrix so that the diagonal consists of two
square matrices $T$ and $B$ as follows%
\[
S_{\delta}\left(  F\right)  =a_{0d_{0}}^{\delta_{0}}\cdot\frac{\det\left[
\begin{array}
[c]{cc}%
T & \\
\cdot & B
\end{array}
\right]  }{V\left(  \alpha_{1},\ldots,\alpha_{d_{0}}\right)  }%
\]

where the size of the square matrix $T$ is $\delta_{1}+\cdots+\delta_{\ell}$,
namely%
\[
T=\left[
\begin{array}
[c]{cccc}%
M_{1} &  &  & \\
\vdots & \ddots &  & \\
\cdot & \cdots & M_{\ell} & 0
\end{array}
\right]
\]
where $0$ is the $\delta_{\ell}\times p$ matrix with zeros, where again
$p=\left(  \delta_{1}+\cdots+\delta_{\ell}\right)  -\left(  \theta_{1}%
+\cdots+\theta_{\ell}\right)  $.

\item By applying elementary properties of determinants to the above
expressions, we have%
\[
{S}_{\theta}(F)=a_{0d_{0}}^{\delta_{0}}\cdot\frac{\det T\det B}{V\left(  \alpha_{1},\ldots,\alpha_{d_{0}%
}\right)  }%
\]

\item Since $p>0$, the last column of $T$ is all zero. Hence $\det T=0$ and in
turn ${S}_{\theta}(F)=0$.

\item From now on, we treat $\alpha_{1},\ldots,\alpha_{d_{0}}$ as numbers.

\item Obviously ${S}_{\theta}(F)=0$.
\end{enumerate}
\end{proof}

\begin{proof}
[Proof of Theorem \ref{thm:gcd}].

\begin{enumerate}
\item Let $\delta=\max\limits_{s_{\gamma}\left(  F\right)  \neq0}\gamma$ and
let $\theta=\operatorname*{icdeg}\left(  F\right)  $.

\item Obviously $s_{\delta}\left(  F\right)  \neq0$.

\item Hence from the contra-positive of Lemma \ref{lem:part_b}, we have
$\delta\preceq_{\text{glex}}\theta$.

\item From Lemma \ref{lem:part_a}, we have $s_{\theta}\left(  F\right)  \neq
0$. Recalling $\delta=\max\limits_{s_{\gamma}\left(  F\right)  \neq0}\gamma$,
we have $\delta\succeq_{\text{glex}}\theta$.

\item Thus $\theta=\delta$, that is, $\operatorname*{icdeg}\left(  F\right)
=\delta$.

\item By Lemma \ref{lem:part_a}, we have $\gcd\left(  F\right)  =\frac
{S_{\delta}}{s_{\delta}}$.
\end{enumerate}
\end{proof}

\section{Application: Parametric multiplicity}

\label{sec:application_parametric_multiplicity} In this section, we show an
application of the subresultant polynomial in computing the multiplicity of a
parametric univariate polynomial.

\subsection{Main results}

\begin{definition}
[Multiplicity]Let $H\in\mathbb{C}\left[  x\right]  $ with $m$ distinct complex
roots, say $r_{1},\ldots,r_{m}$ with multiplicities $\mu_{1},\ldots,\mu_{m}$
respectively. Without losing generality, we assume that $\mu_{1}\geq\cdots
\geq\mu_{m}$. Then the \emph{multiplicity} of~$H$, written as
$\operatorname*{mult}\left(  F\right)  $, is defined by%
\[
\operatorname*{mult}\left(  H\right)  =\left(  \mu_{1},\ldots,\mu_{m}\right)
\]

\end{definition}

\begin{definition}
[Conjugate]Let $\delta=\left(  \delta_{1},\ldots,\delta_{t}\right)  $. Then
the \emph{conjugate} $\bar{\delta}=\left(  \bar{\delta}_{1},\ldots,\bar
{\delta}_{s}\right)  $ of $\delta$ is defined by%
\begin{align*}
s  &  =\max\delta\\
\bar{\delta}_{i}  &  =\#\left\{  j\in\left[  1,\ldots,t\right]  :\delta
_{j}\geq i\right\}  \ \ \ \text{for }i=1,\ldots,s
\end{align*}

\end{definition}

\begin{example}
\

\begin{enumerate}
\item Let $\delta=\left(  3,2,0,0,0\right)  $. Note%
\begin{align*}
s  &  =\max\delta=3\\
\bar{\delta}_{1}  &  =\#\left\{  j:\delta_{j}\geq1\right\}  =2\\
\bar{\delta}_{2}  &  =\#\left\{  j:\delta_{j}\geq2\right\}  =2\\
\bar{\delta}_{3}  &  =\#\left\{  j:\delta_{j}\geq2\right\}  =1
\end{align*}
Thus%
\[
\bar{\delta}=\left(  2,2,1\right)
\]

\end{enumerate}
\end{example}

\begin{theorem}
[Parametric multiplicity in terms of subresultants]\label{thm:multiplicity}Let
$H\in\mathbb{C}\left[  x\right]  $ be of degree $t$ and let
\[
\delta=\max\limits_{\substack{s_{\lambda}\left(  F\right)  \neq0}}\lambda
\]
where

\begin{enumerate}
\item $F=\left(  H^{\left(  0\right)  },H^{\left(  1\right)  },\ldots
,H^{\left(  t\right)  }\right)  $ and $H^{(i)}$ is the $i$-th derivative of $H$;

\item $\lambda=\left(  \lambda_{1},\ldots,\lambda_{t}\right)  $ such that
$\lambda_{1}+\cdots+\lambda_{t}=t\ $and $\lambda_{1}\geq\lambda_{2}\cdots
\geq\lambda_{t}\geq0$;

\item $\max$ is with respect to the lexicographic ordering $\succ_{\text{lex}%
}$.
\end{enumerate}

\noindent Then we have%
\[
\operatorname*{mult}(H)=\bar{\delta}%
\]

\end{theorem}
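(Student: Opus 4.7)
The plan is to reduce Theorem~\ref{thm:multiplicity} to Theorem~\ref{thm:gcd} applied to $F = (H^{(0)}, H^{(1)}, \ldots, H^{(t)})$, using the classical fact that the GCD of $H$ and its derivatives records the multiplicity structure of the roots.

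First, I would compute $\operatorname{icdeg}(F)$ explicitly. Write $\operatorname{mult}(H) = \mu = (\mu_1, \ldots, \mu_m)$ with $\mu_1 \geq \cdots \geq \mu_m \geq 1$, and $H = a\prod_{j=1}^{m}(x-r_j)^{\mu_j}$. Since a root $r_j$ of multiplicity $\mu_j$ in $H$ appears with multiplicity $\max(\mu_j - k, 0)$ in $H^{(k)}$, the incremental GCDs are
$$
G_k \;=\; \gcd\!\bigl(H^{(0)},H^{(1)},\ldots,H^{(k)}\bigr) \;=\; \prod_{j\,:\,\mu_j > k}(x-r_j)^{\mu_j-k},
$$
so $\deg G_k = \sum_j \max(\mu_j - k, 0)$. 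The $i$-th incremental cofactor degree is therefore
$$
\theta_i \;=\; \deg G_{i-1} - \deg G_i \;=\; \#\{\,j : \mu_j \geq i\,\} \;=\; \bar{\mu}_i,
$$
with the convention $\bar{\mu}_i = 0$ for $i > \mu_1$. Hence $\operatorname{icdeg}(F) = \bar{\mu}$, padded to length $t$ by trailing zeros.

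Next, I would verify that this candidate $\bar{\mu}$ satisfies the side conditions prescribed in the statement: it is weakly decreasing (as is every conjugate partition), and
$$
|\bar{\mu}| \;=\; \sum_i \#\{\,j : \mu_j \geq i\,\} \;=\; \sum_j \mu_j \;=\; \deg H \;=\; t.
$$
Since $d_0 = \deg F_0 = t$, applying Theorem~\ref{thm:gcd} yields that $\bar{\mu}$ is the $\succ_{\text{glex}}$-maximum of $\bigl\{\gamma \in \mathbb{N}_{\geq 0}^t : |\gamma| \leq t,\ s_\gamma(F) \neq 0\bigr\}$.

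Finally, I would translate this unconstrained glex-maximum into the constrained lex-maximum appearing in Theorem~\ref{thm:multiplicity}. Because $\succ_{\text{glex}}$ compares total sums first and $\bar{\mu}$ already attains the maximum possible sum $t$, every competing $\lambda$ with $s_\lambda(F) \neq 0$ satisfies $|\lambda| \leq |\bar{\mu}|$, so restricting to $|\lambda| = t$ loses nothing; on that sum-level, $\succ_{\text{glex}}$ and $\succ_{\text{lex}}$ coincide. The additional restriction to weakly decreasing sequences is also lossless, since $\bar{\mu}$ itself is weakly decreasing. Hence the $\delta$ in the statement equals $\bar{\mu}$, and conjugating once more gives $\bar{\delta} = \overline{\bar{\mu}} = \mu = \operatorname{mult}(H)$, using that the double conjugate of a partition is the partition itself. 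The only nontrivial bookkeeping is precisely this last reconciliation between the ordering conventions of Theorems~\ref{thm:gcd} and~\ref{thm:multiplicity}; the remaining ingredients are standard facts about derivatives of polynomials with repeated roots and about conjugate partitions.
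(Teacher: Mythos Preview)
Your proposal is correct and follows essentially the same approach as the paper: first compute $\operatorname{icdeg}(F)=\bar\mu$ from the standard formula for $\gcd(H^{(0)},\ldots,H^{(k)})$ (this is the paper's Lemma~\ref{lem:icdeg}), then use the subresultant maximality property to identify $\bar\mu$ with the $\delta$ in the statement, and finish by double conjugation. The only packaging difference is that you invoke Theorem~\ref{thm:gcd} as a black box and then reconcile glex with lex on the slice $|\lambda|=t$, whereas the paper re-applies the underlying Lemmas~\ref{lem:part_a} and~\ref{lem:part_b} directly on that slice; the content is identical.
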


\begin{remark}
\ In the above, $\max$ always exists since ${s}_{\left(  1,\ldots,1\right)
}\left(  F\right)  \neq0$ from Definition~\ref{def:roots}.
\end{remark}


\begin{example}
We have the following parametric multiplicity for degree $5$.
\[%
\begin{array}
[c]{llclll}%
\text{if} & {s}_{\left(  5,0,0,0,0\right)  }\left(  F\right)  \neq0 &  &
\text{then} &  & \operatorname*{mult}(H)=(1,1,1,1,1)\\
\text{else if} & {s}_{\left(  4,1,0,0,0\right)  }\left(  F\right)  \neq0 &  &
\text{then} &  & \operatorname*{mult}(H)=(2,1,1,1)\\
\text{else if} & {s}_{\left(  3,2,0,0,0\right)  }\left(  F\right)  \neq0 &  &
\text{then} &  & \operatorname*{mult}(H)=(2,2,1)\\
\text{else if} & {s}_{\left(  3,1,1,0,0\right)  }\left(  F\right)  \neq0 &  &
\text{then} &  & \operatorname*{mult}(H)=(3,1,1)\\
\text{else if} & {s}_{\left(  2,2,1,0,0\right)  }\left(  F\right)  \neq0 &  &
\text{then} &  & \operatorname*{mult}(H)=(3,2)\\
\text{else if} & {s}_{\left(  2,1,1,1,0\right)  }\left(  F\right)  \neq0 &  &
\text{then} &  & \operatorname*{mult}(H)=\left(  4,1\right) \\
\text{else} &  &  & \text{then} &  & \operatorname*{mult}(H)=\left(  5\right)
\end{array}
\]

\end{example}

\begin{remark}
Previous works for generating the condition for complex root classification
are mostly based on repeated gcd computation using subresultants for two
polynomials \cite{1998_Gonzalez_Recio_Lombardi,1996_Yang_Hou_Zeng}%
\footnote[2]{The authors considered a more general problem, i.e., the real
root classification problem. When restricted to the complex case, the methods
gave conditions for complex root classification.}
We compare them with the conditions given in this paper (Theorem~\ref{thm:multiplicity}).
\begin{itemize}
\item In the previous works, the polynomials in the conditions are principal coefficients of nested subresultant polynomials,
that is, subresultant of subresultant polynomials of .....  and so on.
In the current work, the polynomials are just principal multi-polynomial subresultants of the input.

\item In the previous works, the number of polynomials in the conditions are often bigger than the number of multiplicity structures.
In the current work, the number of polynomials is exactly one less than that of
multiplicity structures.
\end{itemize}
\end{remark}

\subsection{Proof of Theorem \ref{thm:multiplicity} (Parametric Multiplicity)}

\begin{lemma}
\label{lem:icdeg}Let $F=\left(  H^{\left(  0\right)  },H^{\left(  1\right)
},\ldots,H^{\left(  t\right)  }\right)  $ and let $\delta
=\operatorname*{icdeg}\left(  F\right)  $. Then $\operatorname*{mult}%
(H)=\bar{\delta}$.
\end{lemma}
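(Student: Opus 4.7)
The plan is to compute $\operatorname*{icdeg}(F)$ explicitly from the root/multiplicity structure of $H$, after which the identity $\operatorname*{mult}(H) = \bar\delta$ reduces to the classical involution of partition conjugation.

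First, factor $H = c\prod_{k=1}^m (x-r_k)^{\mu_k}$ with distinct roots $r_1,\ldots,r_m$ and $\mu_1 \geq \cdots \geq \mu_m$, so that $\operatorname*{mult}(H) = \mu = (\mu_1,\ldots,\mu_m)$. A standard application of the product rule shows that $r_k$ is a root of $H^{(i)}$ with multiplicity exactly $\max(\mu_k-i,0)$, while no complex number outside $\{r_1,\ldots,r_m\}$ is a root of $H = H^{(0)}$, hence of $\gcd(H^{(0)},\ldots,H^{(i)})$. Consequently
$$G_i \;=\; \gcd\bigl(H^{(0)},\ldots,H^{(i)}\bigr) \;=\; \prod_{k=1}^m (x-r_k)^{\max(\mu_k-i,\,0)},$$
so $e_i = \deg G_i = \sum_{k=1}^m \max(\mu_k-i,0)$. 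Taking successive differences,
$$\delta_i \;=\; e_{i-1}-e_i \;=\; \sum_{k=1}^m \bigl[\max(\mu_k-i+1,0)-\max(\mu_k-i,0)\bigr] \;=\; \#\{k : \mu_k \geq i\},$$
because each summand equals $1$ exactly when $\mu_k \geq i$ and $0$ otherwise. Therefore $\delta_i = \bar\mu_i$ for $1 \leq i \leq \max\mu$ and $\delta_i = 0$ for $\max\mu < i \leq t$; in other words, $\delta = \operatorname*{icdeg}(F)$ is precisely $\bar\mu$ padded by trailing zeros up to length $t$.

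It remains to show $\bar\delta = \mu$, which is the classical involution $\bar{\bar\mu} = \mu$ of partition conjugation. From the previous step, $\max\delta = \delta_1 = \bar\mu_1 = m$, so $\bar\delta$ has exactly $m$ entries; and for each $1 \leq i \leq m$, the trailing zeros of $\delta$ fail $\delta_j \geq i$, so
$$\bar\delta_i \;=\; \#\{j \in [1,t] : \delta_j \geq i\} \;=\; \#\{j \in [1,\max\mu] : \bar\mu_j \geq i\} \;=\; \mu_i,$$
the last equality being the standard double-conjugation identity for the weakly decreasing partition $\mu$. Hence $\bar\delta = (\mu_1,\ldots,\mu_m) = \operatorname*{mult}(H)$. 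The only genuine difficulty is bookkeeping: one must carefully track that the trailing zeros of $\delta$ contribute nothing to $\bar\delta$ for $i \geq 1$ and that $\max\delta$ equals the number $m$ of distinct roots of $H$. Once the derivative-multiplicity identity $\mu_{H^{(i)}}(r_k) = \max(\mu_k - i, 0)$ is in hand, the conclusion follows immediately.
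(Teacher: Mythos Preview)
Your approach is essentially the same as the paper's: compute the incremental gcds $G_i$ explicitly from the factorization of $H$, read off $\delta_i = \#\{k : \mu_k \geq i\}$, and then invoke the involutivity of partition conjugation. The paper presents the argument through a worked example first and then in generality, but the mathematical content is identical.

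There is, however, one incorrect intermediate claim in your write-up. You assert that ``$r_k$ is a root of $H^{(i)}$ with multiplicity exactly $\max(\mu_k-i,0)$.'' This is false in general: for $i > \mu_k$ the root $r_k$ may well reappear in $H^{(i)}$. For instance, with $H = x^4 + x^2 = x^2(x^2+1)$ the root $r=0$ has $\mu=2$, yet $H''' = 24x$ vanishes at $0$. What is true is that the multiplicity of $r_k$ in $H^{(i)}$ is \emph{at least} $\max(\mu_k-i,0)$, with equality whenever $i \leq \mu_k$. That weaker statement is already enough for your gcd computation, since the minimum over $j=0,\ldots,i$ of the multiplicities of $r_k$ in $H^{(j)}$ is attained at $j = \min(i,\mu_k)$ and equals $\max(\mu_k-i,0)$. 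So your formula for $G_i$ and everything downstream is correct; only the one sentence of justification needs to be adjusted. The paper sidesteps this by stating the formula for $G_i$ directly as a well-known fact.
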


\begin{proof}
\ In order to convey the main underlying ideas effectively, we will show the
proof for a particular case first. After that, we will generalize the ideas to
arbitrary cases.

\bigskip

\noindent\textbf{Particular case}. Let $H=a\left(  x-r_{1}\right)  ^{2}\left(
x-r_{2}\right)  ^{2}\left(  x-r_{3}\right)  ^{1}$ where $a\neq0\ $and
$r_{1},r_{2},r_{3}$ are distinct.

\begin{enumerate}
\item Note $\deg H=5\ \,\,$and $\operatorname*{mult}(H)=\left(  2,2,1\right)
$.

\item Let $G_{i}=\gcd\left(  H^{\left(  0\right)  },\ldots,H^{\left(
i\right)  }\right)  $ for $i=0,\ldots,5$. Let $C_{i}=\frac{G_{i-1}}{G_{i}}$
for $i=1,\ldots,5$.Then%
\[%
\begin{array}
[c]{lllllll}%
G_{0} & = & \left(  x-r_{1}\right)  ^{2}\left(  x-r_{2}\right)  ^{2}\left(
x-r_{3}\right)  ^{1} &  &  &  & \\
G_{1} & = & \left(  x-r_{1}\right)  ^{1}\left(  x-r_{2}\right)  ^{1} &  &
C_{1} & = & \left(  x-r_{1}\right)  \left(  x-r_{2}\right)  \left(
x-r_{3}\right) \\
G_{2} & = & 1 &  & C_{2} & = & \left(  x-r_{1}\right)  \left(  x-r_{2}\right)
\\
G_{3} & = & 1 &  & C_{3} & = & 1\\
G_{4} & = & 1 &  & C_{4} & = & 1\\
G_{5} & = & 1 &  & C_{5} & = & 1
\end{array}
\]
Thus%
\[
\delta=\operatorname*{icdeg}\left(  F\right)  =\left(  3,2,0,0,0\right)
\]

\item Note%
\begin{align*}
s  &  =\max\delta=3\\
\bar{\delta}_{1}  &  =\#\left\{  j\in\left[  1,\ldots,5\right]  :\delta
_{j}\geq1\right\}  =2\\
\bar{\delta}_{2}  &  =\#\left\{  j\in\left[  1,\ldots,5\right]  :\delta
_{j}\geq2\right\}  =2\\
\bar{\delta}_{3}  &  =\#\left\{  j\in\left[  1,\ldots,5\right]  :\delta
_{j}\geq2\right\}  =1
\end{align*}
Thus%
\[
\bar{\delta}=\left(  2,2,1\right)
\]

\item Thus%
\[
\operatorname*{mult}(H)=\left(  2,2,1\right)  =\bar{\delta}%
\]

\end{enumerate}

\noindent\textbf{Arbitrary case}. Let $H=a\left(  x-r_{1}\right)  ^{\mu_{1}%
}\cdots\left(  x-r_{m}\right)  ^{\mu_{m}}$ where $a\neq0\ $and $r_{1}%
,\ldots,r_{m}$ are distinct.

\begin{enumerate}
\item Note $\deg H=t=\mu_{1}+\cdots+\mu_{m}\ \,\,$and $\operatorname*{mult}%
(H)=\mu=\left(  \mu_{1},\ldots,\mu_{m}\right)  $.

\item Let $G_{i}=\gcd\left(  H^{\left(  0\right)  },\ldots,H^{\left(
i\right)  }\right)  $ for $i=0,\ldots,t$. Let $C_{i}=\frac{G_{i-1}}{G_{i}}$
for $i=1,\ldots,t$. Then%
\begin{align}
G_{i}  &  =\prod_{\substack{1\leq k\leq m\\\mu_{k}>i}}\left(  x-r_{k}\right)
^{\mu_{k}-i}\nonumber\\
C_{i}  &  =\frac{G_{i-1}}{G_{i}}=\frac{\prod\limits_{\substack{1\leq k\leq
m\\\mu_{k}>i-1}}\left(  x-r_{k}\right)  ^{\mu_{k}-\left(  i-1\right)  }}%
{\prod\limits_{\substack{1\leq k\leq m\\\mu_{k}>i}}\left(  x-r_{k}\right)
^{\mu_{k}-i}}=\frac{\prod\limits_{\substack{1\leq k\leq m\\\mu_{k}\geq
i}}\left(  x-r_{k}\right)  ^{\mu_{k}-\left(  i-1\right)  }}{\prod
\limits_{\substack{1\leq k\leq m\\\mu_{k}\geq i}}\left(  x-r_{k}\right)
^{\mu_{k}-i}}=\prod\limits_{\substack{1\leq k\leq m\\\mu_{k}\geq i}}\left(
x-r_{k}\right)  \label{eqs:Ci}%
\end{align}

Hence
\[
\operatorname*{icdeg}\left(  F\right)  =\delta=\left(  \delta_{1}%
,\ldots,\delta_{t}\right)  \ \,\text{where }\delta_{i}=\deg C_{i}=\#\left\{
k\in\left[  1,\ldots,m\right]  :\mu_{k}\geq i\right\}
\]

\item Note
\begin{align*}
s  &  =\max\delta=m\\
\bar{\delta}_{i}  &  =\#\left\{  j:\delta_{j}\geq i\right\}  =\#\left\{
j:\#\left\{  k\in\left[  1,\ldots,m\right]  :\mu_{k}\geq j\right\}  \geq
i\right\}  =\mu_{i}\ \ \ \,\text{for }i=1,\ldots,s
\end{align*}
Hence%
\[
\bar{\delta}=\mu
\]

\item Thus%
\[
\operatorname*{mult}(H)=\bar{\delta}%
\]

\end{enumerate}
\end{proof}


\begin{proof}
[Proof of Theorem \ref{thm:multiplicity}]\

\begin{enumerate}
\item Let $\delta^{\prime}$ be such that $\operatorname*{mult}(H)=\overline
{\delta^{\prime}}$.

Note that $\delta^{\prime}$ exists and it is unique.

\item By Lemma \ref{lem:icdeg}, we have $\operatorname*{icdeg}(F)=\delta
^{\prime}$.

\item From Lemma \ref{lem:part_a}, we have $s_{\delta^{\prime}}\neq0$.

\item By hypothesis, $\delta^{\prime}\preceq_{\text{lex}}\delta$.

\item Since $\operatorname*{icdeg}(F)=\delta^{\prime}$, by Lemma
\ref{lem:part_b}, $s_{\gamma}=0$ for any $\gamma$ satisfying $|\gamma|=t$ and
$\gamma\succ_{\text{lex}}\delta^{\prime}$.

\item Since $s_{\delta}\neq0$, $\delta\preceq_{\text{lex}}\delta^{\prime
}$.

\item Combining 4 and 6, we have $\delta^{\prime}=\delta$.

\item Therefore, $\operatorname*{mult}(H)=\bar{\delta}$.
\end{enumerate}
\end{proof}

\section{Multi-polynomial subresultants in terms of coefficients}

\label{sec:expression_in_coefficients} A natural question is how to express
$S_{\delta}$ in terms of coefficients. In this section, we give three
different determinantal expressions of $S_{\delta}$ in terms of coefficients
whose explicit forms are presented in Subsection
\ref{subsec:expression_in_coeffs_main}. The three expressions are extensions
of Sylvester-type subresultant, Barnett-type subresultant and B\'{e}zout
subresultant, respectively. The proof ideas and techniques are elaborated in
the remaining three subsections.

\subsection{Main results}

\label{subsec:expression_in_coeffs_main}

We begin by recalling a few basic notions/notations.

\begin{notation}
\label{notation:matrices}\

\begin{enumerate}
\item The \emph{companion matrix} of $A=a_{n}x^{n}+\cdots+a_{0}$ is given by
$\left[
\begin{array}
[c]{cccc}%
0 &  &  & -a_{0}/a_{n}\\
1 &  &  & -a_{1}/a_{n}\\
& \ddots &  & \vdots\\
&  & 1 & -a_{n-1}/a_{n}%
\end{array}
\right]  $

\item The \emph{B\'{e}zout matrix }$M$ of $A,B\in\mathbb{C}\left[  x\right]  $
is such that $\dfrac{\left\vert
\begin{array}
[c]{cc}%
A(x) & A(y)\\
B(x) & B(y)
\end{array}
\right\vert }{x-y}=[%
\begin{array}
[c]{ccc}%
y^{0} & \cdots & y^{\ell-1}%
\end{array}
]M\left[
\begin{array}
[c]{c}%
x^{\ell-1}\\
\vdots\\
x^{0}%
\end{array}
\right]  $ where $\ \ell=\max(\deg A,\deg B)$.

\item $\ X_{\delta,h}=%
\begin{array}
[c]{l}%
\begin{bmatrix}
x &  &  & \\
-1 & \ddots &  & \\
& \ddots & \ddots & \\
&  & \ddots & x\\
&  &  & -1\\
&  &  & \\
&  &  &
\end{bmatrix}
\left.
\begin{array}
[c]{c}%
\;\\
\;\\
\;\\
\;\\
\;\\
\;\\
\;\\
\
\end{array}
\right\}  \ h~\text{rows}\\
\underbrace{\hspace{8em}}_{d_{0}-\left(  \delta_{1}+\cdots+\delta_{t}\right)
~\text{columns}~}%
\end{array}
$
\end{enumerate}
\end{notation}

To give the expression of $S_{\delta}$ in coefficients, we first extend the
three well known subresultant matrices for two polynomials to those for
arbitrary number of polynomials.

\begin{definition}
[Extended subresultant matrices]\label{def:extended_matrix} The extended
subresultant matrices of Sylvester/Barnett/B\'{e}zout type are given by:

\begin{enumerate}
\item $M_{\delta}^{\text{Sylvester}}=\left[
\begin{array}
[c]{cccccccccc}%
R_{01} & \cdots & R_{0\delta_{0}} & \cdots & \cdots & \cdots & R_{t1} & \cdots
& R_{t\delta_{t}} & X_{\delta,d_{0}+\delta_{0}}%
\end{array}
\right]  ^{T}\in\mathbb{R}^{\left(  d_{0}+\delta_{0}\right)  \times\left(
d_{0}+\delta_{0}\right)  }$

where $R_{ij}\in\mathbb{R}^{\left(  d_{0}+\delta_{0}\right)  \times1}$such
that $R_{ij,k}=\operatorname*{coeff}\left(  x^{j-1}F_{i},x^{k-1}\right)
\ $if $\delta_{0}=\max(\left(  \delta_{1}%
+d_{1}\right)  -d_{0},\ldots,\left(\delta_{t}%
+d_{t}\right)  -d_{0},1-|\delta|)\ge0$.

\item $M_{\delta}^{\text{Barnett}}\ =\left[
\begin{array}
[c]{cccccccccc}%
R_{11} & \cdots & R_{1\delta_{1}} & \cdots & \cdots & \cdots & R_{t1} & \cdots
& R_{t\delta_{t}} & X_{\delta,d_{0}}%
\end{array}
\right]  ^{T}\in\mathbb{R}^{d_{0}\times d_{0}}$

where $R_{ij}\in\mathbb{R}^{d_{0}\times1}$ is the ${j}$-th column of
$F_{i}(C_{0})$ where $C_{0}$ is the companion matrix of $F_{0}$.

\item $M_{\delta}^{\text{B\'{e}zout}}\ \ =\left[
\begin{array}
[c]{cccccccccc}%
R_{11} & \cdots & R_{1\delta_{1}} & \cdots & \cdots & \cdots & R_{t1} & \cdots
& R_{t\delta_{t}} & X_{\delta,d_{0}}%
\end{array}
\right]  ^{T}\in\mathbb{R}^{d_{0}\times d_{0}}$

where $R_{ij}$ $\in\mathbb{R}^{d_{0}\times1}\ $is the ${j}$-th column of the
B\'{e}zout matrix of $F_{0}$ and $F_{i}$. Note that it is defined only when
$\deg F_{0}\geq\deg F_{i}$ for every $i$. Otherwise $R_{ij}$ may have
different size.
\end{enumerate}
\end{definition}

Then we have:

\begin{theorem}
[Subresultant polynomials in coefficients]\label{thm:coefficients} Let $\delta\neq\left(
0,\ldots,0\right)  $. Then

\begin{enumerate}
\item $S_{\delta}(F)=
\left\{
\begin{array}{ll}
c\det M_{\delta}^{\text{Sylvester}},&\text{if~}\delta_0\ge0,\\[8pt]
0,&\text{otherwise},
\end{array}
\right.$
where $c=(-1)^{d_{0}%
\delta_{0}}$.

\item $S_{\delta}(F)=c\det M_{\delta}^{\text{Barnett}}$ \ \ where $c=a_{0d_{0}}^{\delta_{0}}$.

\item $S_{\delta}(F)=c\det M_{\delta}^{\text{B\'{e}zout}}$ \ \ \ where
$c=\ a_{0d_{0}}^{\delta_0-|\delta|}$.
\end{enumerate}
\end{theorem}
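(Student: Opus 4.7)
The plan is to prove each of the three formulas by exhibiting, for each matrix $M_\delta^{\text{(type)}}$, a Vandermonde-type multiplier $U$ so that the product $M_\delta^{\text{(type)}} \cdot U$ becomes block triangular, with one diagonal block being (essentially) the numerator matrix in Definition~\ref{def:roots} and the other contributions being explicit. Throughout, I will treat $\alpha_1,\ldots,\alpha_{d_0}$ as distinct indeterminates, extract the identity as a polynomial identity in $\alpha_j$ and $a_{ij}$, and then specialize.

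For the Sylvester-type formula (Part 1), I take $U$ to be the square matrix of size $d_0+\delta_0$ whose first $d_0$ columns are $(1,\alpha_j,\alpha_j^2,\ldots,\alpha_j^{d_0+\delta_0-1})^T$ for $j=1,\ldots,d_0$, and whose remaining $\delta_0$ columns encode the monomials $1,x,\ldots,x^{\delta_0-1}$ via the appropriate positions of $X_{\delta,\,\cdot}$. A row $R_{ij}^T$ of $M_\delta^{\text{Sylvester}}$ represents the polynomial $x^{j-1}F_i$, so its inner product with the $k$-th Vandermonde column of $U$ is exactly $\alpha_k^{j-1}F_i(\alpha_k)$, which vanishes whenever $i=0$. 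Hence $M_\delta^{\text{Sylvester}}\cdot U$ has a $\delta_0\times d_0$ zero block, and its determinant factors as the determinant of the $F_0$-shift block (a Vandermonde contributing $V(\alpha_1,\ldots,\alpha_{d_0})\cdot a_{0d_0}^{\delta_0}$ up to sign) times the root determinant from Definition~\ref{def:roots}. Dividing by $\det U$ and tracking the permutation needed to bring the blocks to standard triangular form accounts for the sign $(-1)^{d_0\delta_0}$. The degenerate case $\delta_0<0$ is handled separately: in this regime the minimum exponent of $a_{0d_0}$ appearing in the numerator of Definition~\ref{def:roots} exceeds $-\delta_0$, forcing $S_\delta(F)=0$.

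For the Barnett-type formula (Part 2), the key tool is the defining property of the companion matrix: the Vandermonde $V=[v_1|\cdots|v_{d_0}]$ with $v_j=(1,\alpha_j,\ldots,\alpha_j^{d_0-1})^T$ satisfies $C_0 v_j=\alpha_j v_j$, so $F_i(C_0)V = V D_i$ with $D_i=\operatorname{diag}(F_i(\alpha_1),\ldots,F_i(\alpha_{d_0}))$. Consequently, if $R_{ij}^T$ is the $j$-th row of $M_\delta^{\text{Barnett}}$ coming from the $j$-th column of $F_i(C_0)$, then $R_{ij}^T\cdot v_k=\alpha_k^{j-1}F_i(\alpha_k)$. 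The product $M_\delta^{\text{Barnett}}\cdot V$ therefore reproduces the numerator matrix of Definition~\ref{def:roots}, with the $X_{\delta,d_0}$-block interacting cleanly with $V$ to furnish the $x$-column. Taking determinants gives $\det M_\delta^{\text{Barnett}}\cdot V(\alpha_1,\ldots,\alpha_{d_0}) = \text{(numerator of Def.~\ref{def:roots})}$, from which Part 2 follows with the expected prefactor $a_{0d_0}^{\delta_0}$. For the B\'{e}zout-type formula (Part 3), I invoke the classical diagonalization of the B\'{e}zout matrix: when $\deg F_0\geq\deg F_i$, the B\'{e}zout matrix of $F_0,F_i$ admits a factorization of the form $\operatorname{Bez}(F_0,F_i)=a_{0d_0}\,W^T D_i W$ for a Vandermonde-like $W$ built from $\alpha_1,\ldots,\alpha_{d_0}$ and the same diagonal $D_i$ as above. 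This reduces Part 3 to exactly the computation of Part 2, with one extra factor of $a_{0d_0}$ produced by each of the $|\delta|$ B\'{e}zout columns; combined with the intrinsic $a_{0d_0}^{\delta_0}$, this yields the stated prefactor $a_{0d_0}^{\delta_0-|\delta|}$.

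The principal obstacle across all three parts is uniform bookkeeping: the scalar factors ($a_{0d_0}^{\delta_0}$, extra $a_{0d_0}$'s from B\'{e}zout, the sign $(-1)^{d_0\delta_0}$), the row/column permutations arising from transposing or reordering blocks, and the precise interaction between the $X_{\delta,h}$ block and the Vandermonde multiplier that materializes the $x$-column of the root definition. In addition, the edge cases $\delta_0=0$, $\delta_i=0$ for some $i\geq1$, and $\deg F_0<\deg F_i$ in Part 3 must be disposed of by direct inspection before applying the main identity, since they correspond to empty blocks or ill-defined B\'{e}zout entries.
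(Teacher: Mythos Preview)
Your plan for Parts~1 and~2 matches the paper's approach: multiply by an (extended) Vandermonde in the roots of $F_0$ and use $F_0(\alpha_j)=0$ (resp.\ the companion-matrix eigenrelation) to recover the numerator of Definition~\ref{def:roots}. The paper handles the $X_{\delta,h}$ block that you flag as an obstacle via a preliminary lemma (Lemma~\ref{lem:equiv}) that rewrites $S_\delta$ as a $d_0\times d_0$ determinant with rows $\alpha_j^k(x-\alpha_j)$ in place of the extra $x$-column; after that, $X_{\delta,h}$ times the Vandermonde lines up on the nose. One small slip: with the paper's companion matrix the Vandermonde vectors are \emph{left} eigenvectors, so your $C_0v_j=\alpha_jv_j$ should be $v_j^TC_0=\alpha_jv_j^T$; your final identity $R_{ij}^T v_k=\alpha_k^{j-1}F_i(\alpha_k)$ nonetheless survives.

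For Part~3 there is a genuine gap. The factorization you invoke, namely that the B\'{e}zout matrix of $F_0,F_i$ equals $a_{0d_0}W^T D_i W$ with the \emph{same} Vandermonde-like $W$ on both sides, does not hold, so the direct reduction to Part~2 that you claim is not available. What multiplication by the Vandermonde actually gives (this is the paper's Lemma~\ref{lem:v_iM_j}) is
\[
\bar\alpha_i M_j \;=\; a_{0d_0}\,F_i(\alpha_i)\cdot(-1)^{j-1}e_{j-1}^{(i)},
\]
where $e_{j-1}^{(i)}$ is the elementary symmetric polynomial in $\alpha_1,\ldots,\widehat{\alpha_i},\ldots,\alpha_{d_0}$; the right-hand factor is \emph{not} the Vandermonde entry $\alpha_i^{j-1}$. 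The paper bridges this by proving the triangular identity $e_j^{(i)}=\sum_{k=0}^j(-1)^k e_{j-k}\alpha_i^k$ and then performing explicit row reductions within each $F_i$-block to convert the $e_{j-1}^{(i)}$'s into $\alpha_i^{j-1}$'s; only after that does the computation coincide with Part~2. Your proposed shortcut skips precisely this step. (Also, your $\delta_0<0$ argument is garbled: the point is simply that $\delta_0$ is, by construction, an upper bound on the degree in each $\alpha_j$ of the polynomial ratio in Definition~\ref{def:roots}, so a negative bound forces that polynomial, and hence $S_\delta$, to vanish identically.)
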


\begin{remark}
When $\delta=\left(  0,\ldots,0\right)  $, Definition~\ref{def:roots}
immediately implies $S_{\delta}\left(  F\right)  =a_{0d_{0}}^{\delta_0-1}F_{0}$
where $\delta_0=\max(d_1-d_0,\ldots,d_t-d_0,1)$.
\end{remark}

\begin{example}
Let
\[
F=\left(  F_{0},F_{1},F_{2}\right)  =\left(  a_{04}x^{4}+a_{03}x^{3}%
+a_{02}x^{2}+a_{01}x+a_{00},a_{13}x^{3}+a_{12}x^{2}+a_{11}x+a_{10},a_{22}%
x^{2}+a_{21}x+a_{20}\right)
\]
and $\delta=(2,1)$. Then $$\delta_{0}=\max(\max(\delta_{1}+d_{1},\delta
_{2}+d_{2})-d_{0},0)=\max(\max(5,3)-4,0)=1$$
Thus we have
\begin{align*}
M_{\delta}^{\text{Sylvester}}  &  =\left[
\begin{array}
[c]{ccccc}%
a_{{00}} & a_{{01}} & a_{{02}} & a_{{03}} & a_{04}\\
a_{{10}} & a_{{11}} & a_{{12}} & a_{{13}} & 0\\
0 & a_{{10}} & a_{{11}} & a_{{12}} & a_{{13}}\\
a_{{20}} & a_{{21}} & a_{{22}} & 0 & 0\\
x & -1 & 0 & 0 & 0
\end{array}
\right] \\
M_{\delta}^{\text{Barnett}}  &  =\left[
\begin{array}
[c]{cccc}%
a_{{10}} & a_{{11}} & a_{{12}} & a_{{13}}\\
-{\dfrac{a_{{13}}a_{{00}}}{a_{{04}}}} & -{\dfrac{a_{{01}}a_{{13}}-a_{{04}%
}a_{{10}}}{a_{{04}}}} & -{\dfrac{a_{{02}}a_{{13}}-a_{{04}}a_{{11}}}{a_{{04}}}}
& -{\dfrac{a_{{03}}a_{{13}}-a_{{04}}a_{{12}}}{a_{{04}}}}\\
a_{{20}} & a_{{21}} & a_{{22}} & 0\\
x & -1 & 0 & 0
\end{array}
\right] \\
M_{\delta}^{\text{B\'{e}zout}}  &  =\left[
\begin{array}
[c]{cccc}%
a_{{04}}a_{{10}} & a_{{04}}a_{{11}} & a_{{04}}a_{{12}} & a_{{04}}a_{{13}}\\
-a_{{13}}a_{{00}}+a_{{03}}a_{{10}} & -a_{{01}}a_{{13}}+a_{{03}}a_{{11}%
}+a_{{04}}a_{{10}} & -a_{{02}}a_{{13}}+a_{{03}}a_{{12}}+a_{{04}}a_{{11}} &
a_{{04}}a_{{12}}\\
a_{{20}}a_{{04}} & a_{{21}}a_{{04}} & a_{{22}}a_{{04}} & 0\\
x & -1 & 0 & 0
\end{array}
\right]
\end{align*}
From the above three theorems, we have%
\[
S_{\delta}(F)=\det M_{\delta}^{\text{Sylvester}}=a_{{04}}^{1}\det M_{\delta
}^{\text{Barnett}}=a_{{04}}^{-2}\det M_{\delta}^{\text{B\'{e}zout}}%
\]

\end{example}

\subsection{Proof of Theorem \ref{thm:coefficients}-1 (Sylvester-type)}

\label{subsec:proof_Th6}

Here is a high level view of the proof. We start with converting $S_{\delta}$
into an equivalent expression which is easier to be connected with the
expression in coefficients. The equivalent expression will also be used for
proving the Barnett-type and B\'{e}zout-type expressions. Then
several proof techniques from
\cite{2007_DAndrea_Hong_Krick_Szanto,2021_Hong_Yang} are adapted. More
specifically, we first multiply the extended Sylvester matrix by the following
object:
\[
\left[
\begin{array}
[c]{cccccc}%
\alpha_{1}^{0} & \cdots & \alpha_{d_{0}}^{0} &  &  & \\
\vdots &  & \vdots &  &  & \\
\alpha_{1}^{d_{0}-1} &  & \alpha_{d_{0}}^{d_{0}-1} &  &  & \\
\alpha_{1}^{d_{0}} & \cdots & \alpha_{d_{0}}^{d_{0}} & 1 &  & \\
\vdots &  & \vdots &  & \ddots & \\
\alpha_{1}^{d_{0}+\delta_{0}-1} & \cdots & \alpha_{d_{0}}^{d_{0}+\delta_{0}-1}
&  &  & 1
\end{array}
\right]
\]
whose determinant is the same as $V(\alpha_{1},\ldots,\alpha_{d_{0}})$, the
denominator of ${S}_{\delta}$. Next we repeatedly rewrite the determinant by
using multi-linearity and anti-symmetry of determinants and eventually achieve
the result of Theorem \ref{thm:coefficients}-1.

\begin{lemma}
\label{lem:equiv}We have
\[
S_{\delta}=a_{0d_{0}}^{\delta_{0}}\cdot\det\left[
\begin{array}
[c]{rrr}%
\alpha_{1}^{0}F_{1}\left(  \alpha_{1}\right)   & \cdots & \alpha_{d_{0}}%
^{0}F_{1}\left(  \alpha_{d_{0}}\right)  \\
\vdots~~~~~~ &  & \vdots~~~~~~\\
\alpha_{1}^{\delta_{1}-1}F_{1}\left(  \alpha_{1}\right)   & \cdots &
\alpha_{d_{0}}^{\delta_{1}-1}F_{1}\left(  \alpha_{d_{0}}\right)  \\\hline
\vdots~~~~~~ &  & \vdots~~~~~~\\[-5pt]
\vdots~~~~~~ &  & \vdots~~~~~~\\\hline
\alpha_{1}^{0}F_{t}\left(  \alpha_{1}\right)   & \cdots & \alpha_{d_{0}}%
^{0}F_{t}\left(  \alpha_{d_{0}}\right)  \\
\vdots~~~~~~ &  & \vdots~~~~~~\\
\alpha_{1}^{\delta_{t}-1}F_{t}\left(  \alpha_{1}\right)   & \cdots &
\alpha_{d_{0}}^{\delta_{t}-1}F_{t}\left(  \alpha_{d_{0}}\right)  \\\hline
\alpha_{1}^{0}\left(  x-\alpha_{1}\right)   & \cdots & \alpha_{d_{0}}%
^{0}\left(  x-\alpha_{d_{0}}\right)  \\
\vdots~~~~~~ & & \vdots~~~~~~\\
\alpha_{1}^{\varepsilon-2}\left(  x-\alpha_{1}\right)   & \cdots &
\alpha_{d_{0}}^{\varepsilon-2}\left(  x-\alpha_{d_{0}}\right)
\end{array}
\right]  /V(\alpha_{1},\ldots,\alpha_{d_{0}})
\]

\end{lemma}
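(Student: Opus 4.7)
The plan is to show that the $d_0\times d_0$ matrix $B$ appearing in the lemma has the same determinant as the $(d_0+1)\times(d_0+1)$ numerator matrix $A$ of Definition~\ref{def:roots}. Once $\det A=\det B$ is established, dividing by $V(\alpha_1,\ldots,\alpha_{d_0})$ and multiplying by $a_{0d_0}^{\delta_0}$ yields the claim immediately.

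To produce $B$ from $A$, I will perform a telescoping sequence of row operations on the bottom ``Vandermonde'' block of $A$. Specifically, label the last $\varepsilon$ rows of $A$ as $r_1,\ldots,r_\varepsilon$, where $r_k=(\alpha_1^{k-1},\ldots,\alpha_{d_0}^{k-1},x^{k-1})$. For $k=1,\ldots,\varepsilon-1$, in that order, replace $r_k$ by $x\,r_k-r_{k+1}$. Because the first step multiplies $r_k$ by the polynomial $x$ and only then subtracts an untouched row $r_{k+1}$, each such step multiplies the determinant by $x$; after all $\varepsilon-1$ steps the determinant becomes $x^{\varepsilon-1}\det A$. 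In column $j\le d_0$ the new $r_k$ has entry $x\alpha_j^{k-1}-\alpha_j^k=\alpha_j^{k-1}(x-\alpha_j)$, while in the last column it becomes $x\cdot x^{k-1}-x^k=0$. The last row $r_\varepsilon$ is untouched.

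Thus after the operations the last column is $(0,\ldots,0,x^{\varepsilon-1})^T$, with the nonzero entry only in position $d_0+1$. Expanding the determinant along this last column produces exactly $x^{\varepsilon-1}\det B$, since the cofactor of the $(d_0+1,d_0+1)$ entry is the $d_0\times d_0$ minor whose top $|\delta|$ rows are the untouched $F_i$ rows and whose bottom $\varepsilon-1$ rows are the transformed rows $(\alpha_1^{k-1}(x-\alpha_1),\ldots,\alpha_{d_0}^{k-1}(x-\alpha_{d_0}))$ for $k=1,\ldots,\varepsilon-1$, which is precisely $B$. Hence $x^{\varepsilon-1}\det A=x^{\varepsilon-1}\det B$ as elements of $\mathbb{C}[x,\alpha_1,\ldots,\alpha_{d_0}]$, and cancelling the non-zero-divisor $x^{\varepsilon-1}$ gives $\det A=\det B$.

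There is no real obstacle here; the only point requiring care is bookkeeping the factor $x^{\varepsilon-1}$ picked up from the row scalings (if one mistakenly treats $r_k\mapsto x r_k-r_{k+1}$ as an elementary operation it looks as though $\det A=\det B$ falls out directly, but the correct bookkeeping gives the same conclusion after cancelling $x^{\varepsilon-1}$ on both sides). One should also verify that the telescoping is performed in increasing order of $k$ so that, at the moment of the $k$-th operation, row $r_{k+1}$ has not yet been modified; doing the operations in the opposite order would modify the right-hand rows first and break the pattern.
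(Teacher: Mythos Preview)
Your proof is correct and follows essentially the same telescoping row-operation strategy as the paper. The only cosmetic difference is that the paper uses the genuinely elementary operation $r_k \mapsto r_k - x\,r_{k-1}$ (processed for $k=\varepsilon,\ldots,2$, leaving $r_1$ intact so the surviving last-column entry is $x^0=1$ and no cancellation of $x^{\varepsilon-1}$ is needed, at the price of a short sign check), whereas you use $r_k \mapsto x\,r_k - r_{k+1}$ and cancel the resulting factor $x^{\varepsilon-1}$ at the end.
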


\begin{proof}
The lemma follows from the derivation below:

\begin{enumerate}
\item Let $N\ $be the numerator of the fractional part of $S_{\delta}$ (see Definition
\ref{def:roots}).

\item Note $N=\det\left[
\begin{array}
[c]{cc}%
U&\\
L_1&L_2
\end{array}
\right]  \ \ $where
\[
U=\left[
\begin{array}{rrr}
\alpha_{1}^{0}F_{1}\left(  \alpha_{1}\right)   & \cdots & \alpha_{d_{0}}%
^{0}F_{1}\left(  \alpha_{d_{0}}\right)  \\
\vdots~~~~~~~ &  & \vdots~~~~~~~\\
\alpha_{1}^{\delta_{1}-1}F_{1}\left(  \alpha_{1}\right)   & \cdots &
\alpha_{d_{0}}^{\delta_{1}-1}F_{1}\left(  \alpha_{d_{0}}\right)  \\\hline
\vdots~~~~~~~ &  & \vdots~~~~~~~\\[-5pt]
\vdots~~~~~~~ &  & \vdots~~~~~~~\\\hline
\alpha_{1}^{0}F_{t}\left(  \alpha_{1}\right)   & \cdots & \alpha_{d_{0}}%
^{0}F_{t}\left(  \alpha_{d_{0}}\right)  \\
\vdots~~~~~~~ &  & \vdots~~~~~~~\\
\alpha_{1}^{\delta_{t}-1}F_{t}\left(  \alpha_{1}\right)   & \cdots &
\alpha_{d_{0}}^{\delta_{t}-1}F_{t}\left(  \alpha_{d_{0}}\right)
\end{array}\right]
\quad
L_1=\left[
\begin{array}
[c]{ccc}
\alpha_{1}^{0} & \cdots & \alpha_{d_{0}}^{0} \\
\vdots & \vdots & \vdots \\
\alpha_{1}^{\varepsilon-1} & \cdots & \alpha_{d_{0}}^{\varepsilon-1} %
\end{array}
\right],
\qquad
L_2=\left[
\begin{array}
[c]{c}
 x^{0}\\
\vdots\\
x^{\varepsilon-1}%
\end{array}
\right]  \]

\item We will now cancel $x^{\varepsilon-1}$  by carrying out the row
operation on the last two rows of $L=\left[L_1 ~~L_2\right]$ as follows.%
\begin{align*}
N  & =\det\left[
\begin{array}
[c]{ccc|c}
& U &  & \\\hline
\alpha_{1}^{0} & \cdots & \alpha_{d_{0}}^{0} & x^{0}\\
\vdots & \vdots & \vdots & \vdots\\
\alpha_{1}^{\varepsilon-2} & \cdots & \alpha_{d_{0}}^{\varepsilon-2} &
x^{\varepsilon-2}\\
-\alpha_{1}^{\varepsilon-2}x+\alpha_{1}^{\varepsilon-1} & \cdots & -\alpha
_{d_{0}}^{\varepsilon-2}x+\alpha_{d_{0}}^{\varepsilon-1} & -x^{\varepsilon
-2}x-x^{\varepsilon-1}%
\end{array}
\right]  \\
& =\det\left[
\begin{array}
[c]{ccc|c}
& U &  & \\\hline
\alpha_{1}^{0} & \cdots & \alpha_{d_{0}}^{0} & x^{0}\\
\vdots & \vdots & \vdots & \vdots\\
\alpha_{1}^{\varepsilon-2} & \cdots & \alpha_{d_{0}}^{\varepsilon-2} &
x^{\varepsilon-2}\\
-\alpha_{1}^{\varepsilon-2}\left(  x-\alpha_{1}\right)   & \cdots &
-\alpha_{d_{0}}^{\varepsilon-2}\left(  x-\alpha_{d_{0}}\right)   & 0
\end{array}
\right]
\end{align*}

\item
By repeating the above operation for the other rows of $L$ except the first row, we get
\[N=\det\left[
\begin{array}
[c]{ccc|c}
& U &  & \\\hline
\alpha_{1}^{0} & \cdots & \alpha_{d_{0}}^{0} & x^{0}\\
-\alpha_{1}^{0}\left(  x-\alpha_{1}\right) & \cdots & -\alpha_{d_0}^{0}\left(  x-\alpha_{d_{0}}\right) &
\\
\vdots & \vdots & \vdots & \\
-\alpha_{1}^{\varepsilon-2}\left(  x-\alpha_{1}\right)   & \cdots &
-\alpha_{d_0}^{\varepsilon-2}\left(  x-\alpha_{d_{0}}\right)   &
\end{array}
\right]\]

\item
Note that there is only one non-zero entry $x^0$ in the last column of $N$ whose row and column indices are $1+\sum_{i=1}^t\delta_i$ and $d_0+1$ respectively. Thus we expand $N$ by the last column and get
\[N=(-1)^{(1+\sum_{i=1}^t\delta_i)+(d_0+1)}\cdot x^0\cdot \det\left[
\begin{array}
[c]{ccc}
& U &  \\\hline
-\alpha_{1}^{0}\left(  x-\alpha_{1}\right) & \cdots & -\alpha_{d_0}^{0}\left(  x-\alpha_{d_{0}}\right)
\\
\vdots & \vdots & \vdots \\
-\alpha_{1}^{\varepsilon-2}\left(  x-\alpha_{1}\right)   & \cdots &
-\alpha_{d_0}^{\varepsilon-2}\left(  x-\alpha_{d_{0}}\right)
\end{array}
\right]
\]
\item Next we extract the factor $-1$ from each of the last $\varepsilon-1$ row in the resulting determinant and achieve
\[N=(-1)^{\left(1+\sum_{i=1}^t\delta_i\right)+(d_0+1)}\cdot(-1)^{\varepsilon-1}\cdot x^0\cdot \det\left[
\begin{array}
[c]{ccc}
& U &  \\\hline
\alpha_{1}^{0}\left(  x-\alpha_{1}\right) & \cdots & \alpha_{d_0}^{0}\left(  x-\alpha_{d_{0}}\right)
\\
\vdots & \vdots & \vdots \\
\alpha_{1}^{\varepsilon-2}\left(  x-\alpha_{1}\right)   & \cdots &
\alpha_{d_0}^{\varepsilon-2}\left(  x-\alpha_{d_{0}}\right)
\end{array}
\right]
\]

\item Noting that
\[\left(1+\sum_{i=1}^t\delta_i\right)+(d_0+1)+(\varepsilon-1)=\left(\sum_{i=1}^t\delta_i+\epsilon\right)+(d_0+1)=2(d_0+1)\equiv0\mod 2\]
and combining $x^0=1$, we have
\[
N=\det\left[
\begin{array}
[c]{ccc}
& U &  \\\hline
\alpha_{1}^{0}\left(  x-\alpha_{1}\right) & \cdots & \alpha_{d_0}^{0}\left(  x-\alpha_{d_{0}}\right)
\\
\vdots & \vdots & \vdots \\
\alpha_{1}^{\varepsilon-2}\left(  x-\alpha_{1}\right)   & \cdots &
\alpha_{d_0}^{\varepsilon-2}\left(  x-\alpha_{d_{0}}\right)
\end{array}
\right]\]

\item
Thus $S_{\delta}=a_{0d_{0}}^{\delta_{0}}\cdot N/V(\alpha_1,\ldots,\alpha_{d_0})$.
\end{enumerate}
\end{proof}

\noindent Now combining Lemma \ref{lem:equiv} and proof techniques borrowed from
\cite{2007_DAndrea_Hong_Krick_Szanto} and \cite{2021_Hong_Yang}, we may prove
Theorem \ref{thm:coefficients}-1.

\begin{proof}[Proof of Theorem \ref{thm:coefficients}-1 (Sylvester-type)]
\

\begin{enumerate}
\item Recall
\[
M_{\delta}^{\text{Sylvester}}=\left[
\begin{array}
[c]{cccccccccc}%
\Psi_{q}\left( x^{0}F_{0}\right)  & \cdots & \Psi_{q}\left(
x^{\delta_{0}-1}F_{0}\right)  & \cdots & \cdots & \cdots & \Psi_{q}\left(
x^{0}F_{t}\right)  & \cdots & \Psi_{q}\left(  x^{\delta_{t}-1}%
F_{t}\right)  & X_{\delta,h}%
\end{array}
\right]  ^{T}%
\]

\item Let $d_{i}=\deg F_{i}$. Note that $\delta_{0}+d_{0}\ge\max\limits_{0\le
i\le t}(\delta_{i}+d_{i})$. Let $\tilde{d}_{i}=\delta
_{0}+d_{0}-\delta_{i}$. Then $\tilde{d}_{0}=d_{0}$.

\item Assume $F_{i}=\sum\limits_{0\leq j\leq\tilde{d}_{i}}a_{ij}x^{j}$. Then
we may write $\det M_{\delta}^{\text{Sylvester}}$ in the following explicit
form.
\begin{equation}
\det M_{\delta}^{\text{Sylvester}}=\det\left[
\begin{array}
[c]{cccccc}%
a_{00} & \cdots & \cdots & a_{0{d}_{0}} &  & \\
& \ddots & \ddots & \ddots & \ddots & \\
&  & a_{00} & \cdots & \cdots & a_{0{d}_{0}}\\
\ddots & \ddots & \ddots & \ddots & \ddots & \vdots\\
a_{t0} & \cdots & \cdots & a_{t\tilde{d}_{t}} &  & \\
& \ddots & \ddots & \ddots & \ddots & \\
&  & a_{t0} & \cdots & \cdots & a_{t\tilde{d}_{t}}\\
x & -1 &  &  &  & \\
& \ddots & \ddots &  &  & \\
&  & x & -1 &  &
\end{array}
\right] \hspace{-0.25in}
\begin{array}
[c]{l}%
\left.
\begin{array}
[c]{c}%
\\[10pt]%
\\
\end{array}
\right\} \delta_{0}\text{~rows}\\%
\begin{array}
[c]{c}%
\\[15pt]%
\end{array}
\\
\left.
\begin{array}
[c]{l}%
\\[10pt]%
\\
\end{array}
\right\} \delta_{t}\text{~rows}\\[20pt]%
\left.
\begin{array}
[c]{l}%
\\[10pt]%
\\
\end{array}
\right\} \epsilon-1\text{~rows}%
\end{array}
\label{eq:M_Sylvester}%
\end{equation}

\item Note that $F_{0}=a_{0d_{0}}\prod\limits_{i=1}^{d_{0}}\left(
x-\alpha_{i}\right)  $. Here we view $\alpha_{1},\ldots,\alpha_{d_{0}}$ as
distinct indeterminates. Thus
\[
V=V(\alpha_{1},\ldots,\alpha_{d_{0}})\neq0
\]
\item

In what follows, we show that the product of $\det M_{\delta}^{\text{Sylvester}}$ and $V$ is proportional to the numerator of the equivalent expression of $S_{\delta}$ in Lemma \ref{lem:equiv}.

\begin{enumerate}
\item We enlarge the size of $V$ so that it can match the size of $M_{\delta}^{\text{Sylvester}}$ while the determinant keeps unchanged.
\[\det M_{\delta}^{\text{Sylvester}}\cdot V\nonumber\\
=   \det\left[
\begin{array}
[c]{cccccc}%
a_{00} & \cdots & \cdots & a_{0{d_{0}}} &  & \\
& \ddots & \ddots & \ddots & \ddots & \\
&  & a_{00} & \cdots & \cdots & a_{0{d_{0}}}\\
\ddots & \ddots & \ddots & \ddots & \ddots & \vdots\\
a_{t0} & \cdots & \cdots & a_{t\tilde{d}_{t}} &  & \\
& \ddots & \ddots & \ddots & \ddots & \\
&  & a_{t0} & \cdots & \cdots & a_{t\tilde{d}_{t}}\\
x & -1 &  &  &  & \\
& \ddots & \ddots &  &  & \\
&  & x & -1 &  &
\end{array}
\right]  \cdot\det\left[
\begin{array}
[c]{cccccc}%
\alpha_{1}^{0} & \cdots & \alpha_{d_{0}}^{0} &  &  & \\
\vdots &  & \vdots &  &  & \\
\alpha_{1}^{d_{0}-1} &  & \alpha_{d_{0}}^{d_{0}-1} &  &  & \\
\alpha_{1}^{d_{0}} & \cdots & \alpha_{d_{0}}^{d_{0}} & 1 &  & \\
\vdots &  & \vdots &  & \ddots & \\
\alpha_{1}^{d_{0}+\delta_{0}-1} & \cdots & \alpha_{d_{0}}^{d_{0}+\delta_{0}-1}
&  &  & 1
\end{array}
\right] \]

\item Since $\det A\cdot\det B=\det AB$ for square matrices $A$ and $B$, we have
\[\det M_{\delta}^{\text{Sylvester}}\cdot V\nonumber\\
=   \det\left(\left[
\begin{array}
[c]{cccccc}%
a_{00} & \cdots & \cdots & a_{0{d_{0}}} &  & \\
& \ddots & \ddots & \ddots & \ddots & \\
&  & a_{00} & \cdots & \cdots & a_{0{d_{0}}}\\
\ddots & \ddots & \ddots & \ddots & \ddots & \vdots\\
a_{t0} & \cdots & \cdots & a_{t\tilde{d}_{t}} &  & \\
& \ddots & \ddots & \ddots & \ddots & \\
&  & a_{t0} & \cdots & \cdots & a_{t\tilde{d}_{t}}\\
x & -1 &  &  &  & \\
& \ddots & \ddots &  &  & \\
&  & x & -1 &  &
\end{array}
\right]  \cdot\left[
\begin{array}
[c]{cccccc}%
\alpha_{1}^{0} & \cdots & \alpha_{d_{0}}^{0} &  &  & \\
\vdots &  & \vdots &  &  & \\
\alpha_{1}^{d_{0}-1} &  & \alpha_{d_{0}}^{d_{0}-1} &  &  & \\
\alpha_{1}^{d_{0}} & \cdots & \alpha_{d_{0}}^{d_{0}} & 1 &  & \\
\vdots &  & \vdots &  & \ddots & \\
\alpha_{1}^{d_{0}+\delta_{0}-1} & \cdots & \alpha_{d_{0}}^{d_{0}+\delta_{0}-1}
&  &  & 1
\end{array}
\right]\right) \]

\item Carrying out matrix multiplication, we get
\[
\det M_{\delta}^{\text{Sylvester}}\cdot V=    \det\left[
\begin{array}
[c]{cccccc}%
\alpha_{1}^{0}F_{0}\left(  \alpha_{1}\right)  & \cdots & \alpha_{d_{0}}%
^{0}F_{0}\left(  \alpha_{d_{0}}\right)  & a_{0d_{0}} &  & \\
\vdots & \ddots & \vdots & \vdots & \ddots & \\
\alpha_{1}^{\delta_{0}-1}F_{0}\left(  \alpha_{1}\right)  & \cdots &
\alpha_{d_{0}}^{\delta_{0}-1}F_{0}\left(  \alpha_{d_{0}}\right)  & \cdot &
\cdots & a_{0d_{0}}\\
\alpha_{1}^{0}F_{1}\left(  \alpha_{1}\right)  & \cdots & \alpha_{d_{0}}%
^{0}F_{1}\left(  \alpha_{d_{0}}\right)  &  &  & \\
\vdots & \ddots & \vdots & \vdots & \ddots & \\
\alpha_{1}^{\delta_{1}-1}F_{1}\left(  \alpha_{1}\right)  & \cdots &
\alpha_{d_{0}}^{\delta_{1}-1}F_{1}\left(  \alpha_{d_{0}}\right)  & \cdot &
\cdots & a_{1\tilde{d}_{1}}\\
\vdots & \ddots & \vdots & \vdots & \ddots & \vdots\\
\alpha_{1}^{0}F_{t}\left(  \alpha_{1}\right)  & \cdots & \alpha_{d_{0}}%
^{0}F_{t}\left(  \alpha_{d_{0}}\right)  &  &  & \\
\vdots & \ddots & \vdots & \vdots & \ddots & \\
\alpha_{1}^{\delta_{t}-1}F_{t}\left(  \alpha_{1}\right)  & \cdots &
\alpha_{d_{0}}^{\delta_{t}-1}F_{t}\left(  \alpha_{d_{0}}\right)  & \cdot &
\cdots & a_{t\tilde{d}_{t}}\\
\alpha_{1}^{0}\left(  x-\alpha_{1}\right)  & \cdots & \alpha_{d_{0}}%
^{0}\left(  x-\alpha_{d_{0}}\right)  &  &  & \\
\vdots & \ddots & \vdots &  &  & \\
\alpha_{1}^{\varepsilon-2}\left(  x-\alpha_{1}\right)  & \cdots &
\alpha_{d_{0}}^{\varepsilon-2}\left(  x-\alpha_{d_{0}}\right)  &  &  &
\end{array}
\right]\]

\item Since $\alpha_1,\ldots,\alpha_{d_0}$ are roots of $F_0$, the above determinant is simplified:
\[
\det M_{\delta}^{\text{Sylvester}}\cdot V=    \det\left[
\begin{array}
[c]{cccccc}
&  &  & a_{0d_{0}} &  & \\
&  &  & \vdots & \ddots & \\
&  &  & \cdot & \cdots & a_{0d_{0}}\\
\alpha_{1}^{0}F_{1}\left(  \alpha_{1}\right)  & \cdots & \alpha_{d_{0}}%
^{0}F_{1}\left(  \alpha_{d_{0}}\right)  &  &  & \\
\vdots & \ddots & \vdots & \vdots & \ddots & \\
\alpha_{1}^{\delta_{1}-1}F_{1}\left(  \alpha_{1}\right)  & \cdots &
\alpha_{d_{0}}^{\delta_{1}-1}F_{1}\left(  \alpha_{d_{0}}\right)  & \cdot &
\cdots & a_{1\tilde{d}_{1}}\\
\vdots & \ddots & \vdots & \vdots & \ddots & \vdots\\
\alpha_{1}^{0}F_{t}\left(  \alpha_{1}\right)  & \cdots & \alpha_{d_{0}}%
^{0}F_{t}\left(  \alpha_{d_{0}}\right)  &  &  & \\
\vdots & \ddots & \vdots & \vdots & \ddots & \\
\alpha_{1}^{\delta_{t}-1}F_{t}\left(  \alpha_{1}\right)  & \cdots &
\alpha_{d_{0}}^{\delta_{t}-1}F_{t}\left(  \alpha_{d_{0}}\right)  & \cdot &
\cdots & a_{t\tilde{d}_{t}}\\
\alpha_{1}^{0}\left(  x-\alpha_{1}\right)  & \cdots & \alpha_{d_{0}}%
^{0}\left(  x-\alpha_{d_{0}}\right)  &  &  & \\
\vdots & \ddots & \vdots &  &  & \\
\alpha_{1}^{\varepsilon-2}\left(  x-\alpha_{1}\right)  & \cdots &
\alpha_{d_{0}}^{\varepsilon-2}\left(  x-\alpha_{d_{0}}\right)  &  &  &
\end{array}
\right]
\]

\item Note that in the first $\delta_0$ rows of the righthand side determinant, there is only one nonzero minor of order $\delta_0$. The entries of this minor lie in the intersections of the first $\delta_0$ rows and the last $\delta_0$ columns (i.e,, $(d_0+1)$th, $(d_0+2)$th, $\ldots$, $(d_0+\delta_0)$th columns) and obviously the minor has a diagonal structure. Thus after taking Laplace expansion for the determinant, we get
\begin{align*}
&\det M_{\delta}^{\text{Sylvester}}\cdot V\\
=  &  (-1)^{\sum_{i=1}^{\delta_{0}}{i}+\sum_{i=d_{0}+1}^{d_{0}+\delta_{0}}{i}%
}\cdot \det
\left[
\begin{array}{ccc}
a_{0d_0} &  &  \\
\vdots & \ddots &  \\
\cdot & \cdots & a_{0d_0} \\
\end{array}
\right]
\cdot\det\left[
\begin{array}
[c]{ccc}%
\alpha_{1}^{0}F_{1}\left(  \alpha_{1}\right)  & \cdots & \alpha_{d_{0}}%
^{0}F_{1}\left(  \alpha_{d_{0}}\right) \\
\vdots & \ddots & \vdots\\
\alpha_{1}^{\delta_{1}-1}F_{1}\left(  \alpha_{1}\right)  & \cdots &
\alpha_{d_{0}}^{\delta_{1}-1}F_{1}\left(  \alpha_{d_{0}}\right) \\
\vdots & \ddots & \vdots\\
\alpha_{1}^{0}F_{t}\left(  \alpha_{1}\right)  & \cdots & \alpha_{d_{0}}%
^{0}F_{t}\left(  \alpha_{d_{0}}\right) \\
\vdots & \ddots & \vdots\\
\alpha_{1}^{\delta_{t}-1}F_{t}\left(  \alpha_{1}\right)  & \cdots &
\alpha_{d_{0}}^{\delta_{t}-1}F_{t}\left(  \alpha_{d_{0}}\right) \\
\alpha_{1}^{0}\left(  x-\alpha_{1}\right)  & \cdots & \alpha_{d_{0}}%
^{0}\left(  x-\alpha_{d_{0}}\right) \\
\vdots & \ddots & \vdots\\
\alpha_{1}^{\varepsilon-2}\left(  x-\alpha_{1}\right)  & \cdots &
\alpha_{d_{0}}^{\varepsilon-2}\left(  x-\alpha_{d_{0}}\right)
\end{array}
\right]\\
=  &  (-1)^{\sum_{i=1}^{\delta_{0}}{i}+\sum_{i=d_{0}+1}^{d_{0}+\delta_{0}}{i}%
}\cdot a_{0,d_{0}}^{\delta_{0}}\cdot\det\left[
\begin{array}
[c]{ccc}%
\alpha_{1}^{0}F_{1}\left(  \alpha_{1}\right)  & \cdots & \alpha_{d_{0}}%
^{0}F_{1}\left(  \alpha_{d_{0}}\right) \\
\vdots & \ddots & \vdots\\
\alpha_{1}^{\delta_{1}-1}F_{1}\left(  \alpha_{1}\right)  & \cdots &
\alpha_{d_{0}}^{\delta_{1}-1}F_{1}\left(  \alpha_{d_{0}}\right) \\
\vdots & \ddots & \vdots\\
\alpha_{1}^{0}F_{t}\left(  \alpha_{1}\right)  & \cdots & \alpha_{d_{0}}%
^{0}F_{t}\left(  \alpha_{d_{0}}\right) \\
\vdots & \ddots & \vdots\\
\alpha_{1}^{\delta_{t}-1}F_{t}\left(  \alpha_{1}\right)  & \cdots &
\alpha_{d_{0}}^{\delta_{t}-1}F_{t}\left(  \alpha_{d_{0}}\right) \\
\alpha_{1}^{0}\left(  x-\alpha_{1}\right)  & \cdots & \alpha_{d_{0}}%
^{0}\left(  x-\alpha_{d_{0}}\right) \\
\vdots & \ddots & \vdots\\
\alpha_{1}^{\varepsilon-2}\left(  x-\alpha_{1}\right)  & \cdots &
\alpha_{d_{0}}^{\varepsilon-2}\left(  x-\alpha_{d_{0}}\right)
\end{array}
\right]\\
=& (-1)^{\sum_{i=1}^{\delta_{0}}{i}+\sum_{i=d_{0}+1}^{d_{0}+\delta_{0}}{i}%
}\cdot S_{\delta}\cdot V
\end{align*}

\item Since
$$\sum_{i=1}^{\delta_{0}}{i}+\sum_{i=d_{0}+1}^{d_{0}+\delta_{0}}{i}
=\sum_{i=1}^{\delta_{0}}{i}+\sum_{i=1}^{\delta_{0}}{(d_0+i)}
=d_0\delta_{0}+2\sum_{i=1}^{\delta_{0}}{i}\equiv d_0\delta_0 \mod 2,$$
we have
\begin{align*}
\det M_{\delta}^{\text{Sylvester}}\cdot V&=  (-1)^{d_{0}\delta_{0}}\cdot S_{\delta}\cdot V \nonumber
\end{align*}
\end{enumerate}

\item Finally we have
\[
S_{\delta}={(-1)^{d_{0}\delta_{0}}}\cdot\det
M_{\delta}^{\text{Sylvester}}%
\]

\end{enumerate}
\end{proof}

\subsection{ Proof of Theorem \ref{thm:coefficients}-2 (Barnett-type)}

Here is a high level view of the proof. The main technical challenge is how to
connect the extended Barnett matrix with the roots of the first polynomial.
For this purpose, we  multiply the extended Barnett
matrix with the Vandermonde matrix of $\alpha_{1},\ldots,\alpha_{d_{0}}$.
Observing that the companion matrix of a univariate polynomial $A(x)$ of
degree $n$ represents the endomorphism of of $\mathbb{C}_{< n}[x]$ defined by
the multiplication of $x$ (see Lemma \ref{lem:deltaA_linear_map}) and
generalizing it to the multiplication of an arbitrary univariate polynomial
(see Lemma \ref{lem:deltaA_polynomial_version}), we may convert the product
into an expression in terms of roots, which is exactly the numerator of
 $S_{\delta}$ in Lemma \ref{lem:equiv}.
We start the proof with recalling the following well-known lemma.

\begin{lemma}
\label{lem:deltaA_linear_map} Let $A=\sum_{i=0}^{n}a_{i}x^{i}$ and $C_{A}$ be
as in Notation \ref{notation:matrices}. Then
\[
\bar{x}x\equiv_{A}\bar{x}C_{A} \text{~~~where~~~} \bar{x}=\left[
\begin{array}
[c]{lll}%
x^{0} & \cdots & x^{n-1}%
\end{array}
\right]
\]

\end{lemma}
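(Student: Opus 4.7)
The plan is to verify the identity $\bar{x}x \equiv_A \bar{x}C_A$ by a direct componentwise computation of the row vector $\bar{x}C_A$ and then reducing modulo $A$. This is the classical statement that the companion matrix represents multiplication by $x$ on $\mathbb{C}[x]/(A)$ with respect to the monomial basis $\{1,x,\ldots,x^{n-1}\}$, so no deep ideas are needed; the work is organizing the calculation.

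First I would compute $\bar{x}C_A$ column by column. For each $j\in\{1,\ldots,n-1\}$, the $j$-th column of $C_A$ has a single nonzero entry, namely a $1$ in row $j+1$, so the $j$-th component of $\bar{x}C_A$ is simply $x^j$, which matches the $j$-th component of $\bar{x}x=[x^1,\ldots,x^{n-1},x^n]$ exactly (no reduction needed). The only nontrivial column is the last one, whose entries are $-a_0/a_n,\ldots,-a_{n-1}/a_n$, giving
\[
(\bar{x}C_A)_n = -\frac{1}{a_n}\sum_{i=0}^{n-1}a_i x^i.
\]

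Next I would invoke the relation $A\equiv 0\pmod{A}$, i.e.\ $\sum_{i=0}^{n}a_i x^i\equiv 0$, rearranged as $\sum_{i=0}^{n-1}a_i x^i\equiv -a_n x^n$. Substituting this into the expression above gives $(\bar{x}C_A)_n\equiv x^n\pmod A$, which agrees with the last component of $\bar{x}x$. Combined with the trivial agreement in the first $n-1$ components, this yields $\bar{x}C_A\equiv \bar{x}x\pmod A$ entrywise, completing the proof.

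I do not anticipate any real obstacle here; the only subtlety is being careful that the congruence is taken entrywise and that the factor $1/a_n$ is handled correctly (which requires $a_n\neq 0$, implicit in the definition of $C_A$). Everything else is bookkeeping.
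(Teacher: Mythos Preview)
Your proof is correct. The paper itself does not actually prove this lemma; it is introduced with the phrase ``we start the proof with recalling the following well-known lemma'' and stated without proof, treated as a standard fact about companion matrices. Your direct entrywise verification is exactly the expected argument and fills in what the paper omits.
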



When $x$ is generalized to a univariate polynomial $H$, Lemma
\ref{lem:deltaA_linear_map} can be generalized to the following Lemma
\ref{lem:deltaA_polynomial_version} which is the essential ingredient for
building up the connection between the extended Barnett matrix and the roots
of $A$.

\begin{lemma}
\label{lem:deltaA_polynomial_version} $\bar{x}H\left(
C_{A}\right) \equiv_{A} \bar{x}H$.
\end{lemma}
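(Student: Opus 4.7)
The plan is to bootstrap Lemma~\ref{lem:deltaA_linear_map} from the single variable $x$ to an arbitrary polynomial $H$ by a two-step argument: first, an induction on the exponent $k$ shows that $\bar{x}\,x^{k}\equiv_{A}\bar{x}\,C_{A}^{k}$ for every $k\geq 0$; second, linearity of the congruence $\equiv_{A}$ (taken componentwise on the $1\times n$ row) extends the identity from monomials to the polynomial combination $H=\sum_{k}h_{k}x^{k}$.

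For the induction, the base case $k=0$ is trivial since $\bar{x}\,x^{0}=\bar{x}=\bar{x}\,C_{A}^{0}$. For the inductive step, I would write
\[
\bar{x}\,x^{k+1}=(\bar{x}\,x^{k})\,x\;\equiv_{A}\;(\bar{x}\,C_{A}^{k})\,x=(\bar{x}\,x)\,C_{A}^{k}\;\equiv_{A}\;(\bar{x}\,C_{A})\,C_{A}^{k}=\bar{x}\,C_{A}^{k+1},
\]
where the first congruence uses the inductive hypothesis (multiplying an $\equiv_{A}$-congruence by the scalar $x$ preserves it), the middle equality uses that the entries of $C_{A}$ are constants in $\mathbb{C}$ and hence commute with the scalar $x$, and the last congruence is Lemma~\ref{lem:deltaA_linear_map} applied after $C_{A}^{k}$. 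Then, writing $H=\sum_{k}h_{k}x^{k}$,
\[
\bar{x}\,H=\sum_{k}h_{k}\,\bar{x}\,x^{k}\;\equiv_{A}\;\sum_{k}h_{k}\,\bar{x}\,C_{A}^{k}=\bar{x}\,\Bigl(\sum_{k}h_{k}\,C_{A}^{k}\Bigr)=\bar{x}\,H(C_{A}),
\]
which is exactly the claim.

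There is no real obstacle here; the only point that needs a little care is the bookkeeping that $\equiv_{A}$ is componentwise on a row of length $n$, and that one is simultaneously multiplying a row on the left by the scalar $x$ and on the right by the constant matrix $C_{A}$. Since these two actions commute (constants pass through scalar multiplication), the induction step goes through cleanly, and linearity finishes the proof.
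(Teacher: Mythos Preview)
Your proof is correct and takes essentially the same route as the paper: the paper writes the one-line chain $\bar{x}H(C_A)=\sum_j h_j\,\bar{x}C_A^{\,j}\equiv_A\sum_j h_j\,\bar{x}x^{\,j}=\bar{x}H$, implicitly using the iterated form $\bar{x}C_A^{\,j}\equiv_A\bar{x}x^{\,j}$ of Lemma~\ref{lem:deltaA_linear_map}, whereas you make that iteration explicit via induction on $k$ and then invoke linearity. The extra care you take in justifying that multiplying an $\equiv_A$-congruence by the scalar $x$ on one side and by the constant matrix $C_A^{\,k}$ on the other preserves the congruence is a welcome clarification, but the underlying argument is the same.
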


\begin{proof}
It is easy to verify that%
\[
\bar{x}H\left(  C_{A}\right)  =\bar{x}\sum_{j\geq0}h_{j}C_{A}^{j}=\sum
_{j\geq0}h_{j}\bar{x}C_{A}^{j}\equiv_{A}\sum_{j\geq0}h_{j}\bar{x}x^{j}=\bar
{x}\left(  \sum_{j\geq0}h_{j}x^{j}\right)  =\bar{x}H
\]

\end{proof}

\begin{remark}\label{rem:companion_matrix}
When $x$ is specialized to a root of $A$, say $\alpha_{i}$, we get
$\bar{\alpha}_{i}H(C_{A})=\bar{\alpha}_{i}H(\alpha_{i})$ where $\bar{\alpha
}_{i}=(\alpha_{i}^{0},\ldots,\alpha_{i}^{n-1}).$
\end{remark}

\noindent Now we are ready to prove Theorem \ref{thm:coefficients}-2.

\begin{proof}[Proof of Theorem \ref{thm:coefficients}-2 (Barnett-type)]\

\begin{enumerate}
\item Specialize $A$, $H$ and $x$
in Lemma \ref{lem:deltaA_polynomial_version} with $F_{0}$, $F_{k}$ and $\alpha_{i}$, respectively, where $\alpha_i$'s are the roots of $F_0$. Again let $\bar{\alpha}_{i}=(\alpha_{i}^{0},\ldots,\alpha_{i}^{d_{0}-1})$.
Then by Lemma \ref{lem:deltaA_polynomial_version} and Remark \ref{rem:companion_matrix}, we get
\[
\bar{\alpha}_{i}F_{k}(C_{F_{0}})=\bar{\alpha}_{i}F_{k}(\alpha_{i})
\]

\item Recall that $R_{kj}$ is the ${j}$-th column of $F_{k}(C_{F_{0}})$. Thus selecting the $j$-th column on both sides in the above equation, we have
\begin{equation}\label{eqs:translation}
\bar{\alpha}_{i}R_{kj}=\alpha_{i}^{j-1}F_{k}(\alpha_{i}).
\end{equation}

\item Assembling $\bar{\alpha}_{i}~(1\le i\le d_0)$ vertically and $R_{kj}~(1\le j\le \delta_k)$ horizontally, we
obtain two matrices, denoted by $\bar{\alpha}$ and $R_k$, i.e,
$$\bar{\alpha}=\left[
\begin{array}
[c]{c}%
\bar{\alpha}_{1}\\
\vdots\\
\bar{\alpha}_{d_{0}}%
\end{array}
\right],\qquad R_k=\left[
\begin{array}
[c]{ccc}%
R_{k1} & \ldots & R_{k\delta_{k}}%
\end{array}
\right]$$

\item Multiplying $\bar{\alpha}$ and $R_k$, we obtain
\[
\bar{\alpha}R_k=\left[
\begin{array}
[c]{c}%
\bar{\alpha}_{1}\\
\vdots\\
\bar{\alpha}_{d_{0}}%
\end{array}
\right]  \left[
\begin{array}
[c]{ccc}%
R_{k1} & \ldots & R_{k\delta_{k}}%
\end{array}
\right]
=\left[
\begin{array}
[c]{ccc}%
\bar{\alpha}_{1}R_{k1} & \cdots & \bar{\alpha}_{1}R_{k\delta_k}\\
\vdots & \ddots & \vdots\\
\bar{\alpha}_{d_{0}}R_{k1} & \cdots & \bar{\alpha}_{d_{0}}R_{k\delta_k}
\end{array}
\right]\]

\item  Substituting \eqref{eqs:translation} into the resulting matrix yields
\[\bar{\alpha}R_k=\left[
\begin{array}
[c]{ccc}%
\alpha_{1}^{0}F_{k}(\alpha_{1}) & \cdots & \alpha_{1}^{\delta_{k}-1}%
F_{k}(\alpha_{1})\\
\vdots & \ddots & \vdots\\
\alpha_{d_{0}}^{0}F_{k}(\alpha_{d_{0}}) & \cdots & \alpha_{d_{0}}^{\delta
_{k}-1}F_{k}(\alpha_{d_{0}})
\end{array}
\right]
\]

\item Next we simplify $\bar{\alpha}X_{\delta,d_0}$. Recall that
\[
X_{\delta,d_0}=%
\begin{array}
[c]{l}%
\begin{bmatrix}
x &  &  & \\
-1 & \ddots &  & \\
& \ddots & \ddots & \\
&  & \ddots & x\\
&  &  & -1\\
&  &  & \\
&  &  &
\end{bmatrix}
\hspace{-1em}
\left.
\begin{array}
[c]{c}%
\;\\
\;\\
\;\\
\;\\
\;\\
\;\\
\;\\
\
\end{array}
\right\}  d_0~\text{rows}\\
\hspace{-1em}\underbrace{\hspace{8em}}_{d_{0}-\left(  \delta_{1}+\cdots+\delta_{t}\right)
=\varepsilon-1
~\text{columns}~}%
\end{array}\]By calculation, we have
\begin{align}
\bar{\alpha}X_{\delta,d_0}
&=\left[
\begin{array}{ccc}%
\alpha_{1}^0&\cdots&{\alpha}_{1}^{d_0-1}\\
\vdots&&\vdots\\
\alpha_{d_{0}}^0&\cdots&\alpha_{d_{0}}^{d_0-1}
\end{array}
\right]
\begin{bmatrix}
x &  &  & \\
-1 & \ddots &  & \\
& \ddots & \ddots & \\
&  & \ddots & x\\
&  &  & -1\\
&  &  & \\
&  &  &
\end{bmatrix}\nonumber\\
&=
\left[
\begin{array}{ccc}
\alpha_{1}^{0}x-\alpha_{1}^{1} & \cdots & \alpha_{1}^{\varepsilon-2}x-\alpha_{1}^{\varepsilon-1} \\
\vdots &  & \vdots \\
\alpha_{d_0}^{0}x-\alpha_{d_0}^{1} & \cdots & \alpha_{d_0}^{\varepsilon-2}x-\alpha_{d_0}^{\varepsilon-1}
\end{array}
\right]\nonumber\\
&=\left[
\begin{array}
[c]{ccc}%
\alpha_{1}^{0}\left(  x-\alpha_{1}\right)  & \cdots & \alpha_{1}%
^{\varepsilon-2}\left(  x-\alpha_{1}\right) \\
\vdots & \ddots & \vdots\\
\alpha_{d_{0}}^{0}\left(  x-\alpha_{d_{0}}\right)  & \cdots & \alpha_{d_{0}%
}^{\varepsilon-2}\left(  x-\alpha_{d_{0}}\right)
\end{array}
\right]\label{eqs:alpha*X}
\end{align}

\item Now assembling $R_k~(1\le k\le t)$ and $X_{\delta,d_{0}}$ horizontally and pre-multiplying the resulting matrix by $
\bar{\alpha}$, we get
\begin{align*}
\text{LHS}=&\bar{\alpha} \left[
\begin{array}
[c]{cccc}%
R_{1} & \cdots &  R_{t}& X_{\delta,d_0}%
\end{array}
\right] \\
=&\left[
\begin{array}
[c]{cccc}%
\bar{\alpha}R_{1} & \cdots &  \bar{\alpha}R_{t}& \bar{\alpha}X_{\delta,d_0}%
\end{array}
\right]\\
=  &  {\footnotesize {\,\,\left[
\begin{array}
[c]{ccc|c|ccc|}%
\alpha_{1}^{0}F_{1}(\alpha_{1}) & \cdots & \alpha_{1}^{\delta_{1}-1}%
F_{1}(\alpha_{1}) & \cdots\cdots & \alpha_{1}^{0}F_{t}(\alpha_{1}) & \cdots &
\alpha_{1}^{\delta_{t}-1}F_{t}(\alpha_{1}) \\
\vdots & & \vdots & & \vdots &  & \vdots \\
\alpha_{d_{0}}^{0}F_{1}(\alpha_{d_{0}}) & \cdots & \alpha_{d_{0}}^{\delta
_{1}-1}F_{1}(\alpha_{d_{0}}) & \cdots\cdots & \alpha_{d_{0}}^{0}F_{t}(\alpha_{d_{0}%
}) & \cdots & \alpha_{d_{0}}^{\delta_{t}-1}F_{t}(\alpha_{d_{0}})
\end{array}
\right.  }}\\
  &  \hspace{23em}{\footnotesize {\,\,\left.
\begin{array}
[c]{cccccccccc}%
\alpha_{1}^{0}\left(
x-\alpha_{1}\right)  & \cdots & \alpha_{1}^{\varepsilon-2}\left(  x-\alpha
_{1}\right) \\
\vdots &
& \vdots\\
\alpha_{{d_{0}}}^{0}\left(  x-\alpha_{{d_{0}}}\right)  & \cdots &
\alpha_{{d_{0}}}^{\varepsilon-2}\left(  x-\alpha_{{d_{0}}}\right)
\end{array}
\right]  }}\\
=&\text{RHS}
\end{align*}

\item Taking transpose and computing the determinants for the lefthand and righthand side expressions yield
\begin{align*}
\det \text{LHS}&=\det( \left[
\begin{array}
[c]{cccc}%
R_{1} & \cdots &  R_{t}& X_{\delta,d_0}%
\end{array}
\right]^T\cdot \bar{\alpha}^T)=\det(M_{\delta}^{\text{Barnett}})\cdot V\\
\det \text{RHS}&=\det\left[
\begin{array}
[c]{ccc}%
F_{1}(\alpha_{1})\alpha_{1}^{0} & \cdots & F_{1}(\alpha_{d_{0}})\alpha_{d_{0}%
}^{0}\\
\vdots & \ddots & \vdots\\
F_{1}(\alpha_{1})\alpha_{1}^{\delta_{1}-1} & \cdots & F_{1}(\alpha_{d_{0}%
})\alpha_{d_{0}}^{\delta_{1}-1}\\\hline
\vdots & \ddots & \vdots\\
\vdots & \ddots & \vdots\\\hline
F_{t}(\alpha_{1})\alpha_{1}^{0} & \cdots & F_{t}(\alpha_{d_{0}})\alpha_{d_{0}%
}^{0}\\
\vdots & \ddots & \vdots\\
F_{t}(\alpha_{1})\alpha_{1}^{\delta_{t}-1} & \cdots & F_{t}(\alpha_{d_{0}%
})\alpha_{d_{0}}^{\delta_{t}-1}\\\hline
\alpha_{1}^{0}\left(  x-\alpha_{1}\right)  & \cdots & \alpha_{{d_{0}}}%
^{0}\left(  x-\alpha_{{d_{0}}}\right) \\
\vdots & \ddots & \vdots\\
\alpha_{1}^{\varepsilon-2}\left(  x-\alpha_{1}\right)  & \cdots &
\alpha_{{d_{0}}}^{\varepsilon-2}\left(  x-\alpha_{{d_{0}}}\right)
\end{array}
\right]
=    a_{0d_{0}}^{-\delta_{0}}\cdot V{S}_{\delta}
\end{align*}

\item Finally we have  $S_{\delta}=a_{0d_{0}}^{\delta_{0}}\cdot\det M_{\delta}^{\text{Barnett}}$.
\end{enumerate}
\end{proof}

\subsection{Proof of Theorem \ref{thm:coefficients}-3 (B\'{e}zout type)}

Here is a high level view of the proof. First, by multiplying the extended
B\'{e}zout matrix with the Vandmonde matrix of $\alpha_{1},\ldots
,\alpha_{d_{0}}$ and exploring the connection between the B\'{e}zout matrix
and roots of the involved polynomials, we convert the product into an
expression in terms of roots. Then we repeatedly rewrite the determinant of
the obtained product matrix by using multi-linearity and anti-symmetry of
determinants and achieve the result.
We start with introducing some notations for constructing B\'{e}zout matrices.

\begin{notation}
\ \label{notation2}

\begin{itemize}
\item $A=\sum_{i=0}^{n}a_{i}x^{i}=a_{n}\prod_{i=1}^{n}(x-\alpha_{i})$ and
$B\in\mathbb{C}[x]$ such that $\deg A\ge\deg B$

\item $\bar{\alpha}_{i} = (\alpha_{i}^{0},\ldots,\alpha_{i}^{n-1})$

\item $M$ is the B\'{e}zout matrix of $A$ and $B$

\item $M_{j}$ is the $j$-th column of $M$

\item $e_{j}$ is the $j$-th elementary symmetric polynomials in $\alpha
_{1},\ldots,\alpha_{n}$ (with $e_{0}:=1$)

\item $e_{j}^{(i)}$ is the $j$-th
elementary symmetric polynomials in $\alpha_{1},\ldots,\widehat{\alpha_{i}%
},\ldots,\alpha_{n}$ (with $e_{0}^{(i)}:=1$).
\end{itemize}
\end{notation}

\noindent Then we may establish the following lemmas, which will be needed for
the proof of Theorem \ref{thm:coefficients}-3.

\begin{lemma}
\label{lem:v_iM_j} $\bar{\alpha}_{i}M_{j}$ is the coefficient of $a_{n}B(\alpha_{i}%
)\prod_{k\neq i}{(x-\alpha_{k})}$ for the term $x^{n-j}$, i.e.,
$$\bar{\alpha}_{i}M_{j}=a_{n}B(\alpha_{i})\cdot
(-1)^{j-1}e_{j-1}^{(i)}$$
\end{lemma}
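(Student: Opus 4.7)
The plan is to read off $\bar{\alpha}_i M_j$ directly from the defining identity of the B\'ezout matrix by specializing $y = \alpha_i$, after which the vanishing $A(\alpha_i)=0$ collapses the bivariate expression to a one-sided generating polynomial whose coefficients are exactly the entries of $\bar\alpha_i M$.

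More specifically, I would first recall from Notation~\ref{notation:matrices} that $M$ is determined by
\[
\frac{A(x)B(y)-A(y)B(x)}{x-y} \;=\; [\,y^0\;\cdots\;y^{n-1}\,]\,M\,[\,x^{n-1}\;\cdots\;x^0\,]^{T}.
\]
Substituting $y=\alpha_i$ kills the term $A(\alpha_i)B(x)$ because $A(\alpha_i)=0$, giving
\[
\frac{A(x)\,B(\alpha_i)}{x-\alpha_i} \;=\; \bar{\alpha}_i\,M\,[\,x^{n-1}\;\cdots\;x^0\,]^{T}
\;=\; \sum_{j=1}^{n} (\bar{\alpha}_i M_j)\,x^{n-j}.
\]
So $\bar{\alpha}_i M_j$ is precisely the coefficient of $x^{n-j}$ in the left-hand side. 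This is the first half of the claim.

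Next, I would use $A(x)=a_n\prod_{k=1}^{n}(x-\alpha_k)$ to simplify
\[
\frac{A(x)}{x-\alpha_i} \;=\; a_n\prod_{k\neq i}(x-\alpha_k),
\]
so the left-hand side becomes $a_n\,B(\alpha_i)\prod_{k\neq i}(x-\alpha_k)$, matching the geometric description in the statement. Finally, I would expand the polynomial in $x$ via elementary symmetric polynomials:
\[
\prod_{k\neq i}(x-\alpha_k) \;=\; \sum_{r=0}^{n-1} (-1)^{r} e_{r}^{(i)}\,x^{\,n-1-r}.
\]
Reading off the coefficient of $x^{n-j}$ (i.e.\ setting $r=j-1$) yields $(-1)^{j-1}e_{j-1}^{(i)}$, so $\bar{\alpha}_i M_j = a_n B(\alpha_i)\cdot(-1)^{j-1}e_{j-1}^{(i)}$, as required.

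The argument is essentially a direct specialization of the B\'ezout identity, so I do not anticipate a real obstacle; the only thing to be careful about is matching indexing conventions---in particular that $M_j$ is paired with $x^{n-j}$ (not $x^{j-1}$) because of the reversed ordering of the $x$-powers in the definition of $M$, and that $e_{j-1}^{(i)}$ (not $e_{j}^{(i)}$) appears due to the shift $r=j-1$. Once these conventions are lined up, both equalities in the statement follow immediately.
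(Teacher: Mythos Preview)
Your proposal is correct and follows essentially the same approach as the paper: specialize the defining identity of the B\'ezout matrix at $y=\alpha_i$, use $A(\alpha_i)=0$ to collapse the left-hand side to $a_n B(\alpha_i)\prod_{k\neq i}(x-\alpha_k)$, and compare coefficients of $x^{n-j}$. The only cosmetic difference is that the paper first rewrites the numerator determinant via a column operation so that the quotient $\dfrac{A(x)-A(\alpha_i)}{x-\alpha_i}$ is visibly polynomial before invoking $A(\alpha_i)=0$, whereas you substitute directly; the content is the same.
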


\begin{proof}
By the definition of B\'{e}zout matrix,
\[
\dfrac{\det\left[
\begin{array}
[c]{cc}%
A(x) & A(y)\\
B(x) & B(y)
\end{array}
\right]  }{x-y} =\det\left[
\begin{array}
[c]{cc}%
\dfrac{A(x)-A(y)}{x-y} & A(y)\\[10pt]%
\dfrac{B(x)-B(y)}{x-y} & B(y)
\end{array}
\right]  =[%
\begin{array}
[c]{ccc}%
y^{0} & \cdots & y^{n-1}%
\end{array}
]M \left[
\begin{array}
[c]{c}%
x^{n-1}\\
\vdots\\
x^{0}%
\end{array}
\right]
\]
The substitution of $y=\alpha_{i}$ into the left and right sides of the above expression yields

\begin{enumerate}
\item $\text{LHS}=\det\left[
\begin{array}
[c]{cc}%
\dfrac{A(x)-A(\alpha_{i})}{x-\alpha_{i}} & 0\\[10pt]%
\dfrac{B(x)-B(\alpha_{i})}{x-\alpha_{i}} & B(\alpha_{i})
\end{array}
\right]  = B(\alpha_{i})\cdot\dfrac{A(x)-A(\alpha_i)}{x-\alpha_{i}}
=B(\alpha_{i})\cdot\dfrac{A(x)}{x-\alpha_{i}}=a_{n}B(\alpha_{i})\prod\limits_{k\ne i}(x-\alpha_{k})$

\item $\text{RHS}=\bar{\alpha}_{i}[
\begin{array}
[c]{ccc}%
M_{1} & \cdots & M_{n}%
\end{array}
]\left[
\begin{array}
[c]{c}%
x^{n-1}\\
\vdots\\
x^{0}%
\end{array}
\right]  =[
\begin{array}
[c]{ccc}%
\bar{\alpha}_{i}M_{1} & \cdots & \bar{\alpha}_{i}M_{n}%
\end{array}
]\left[
\begin{array}
[c]{c}%
x^{n-1}\\
\vdots\\
x^{0}%
\end{array}
\right] \qquad=\sum_{j=0}^{n-1}\bar{\alpha}_{i}M_{n-j}x^{j}$
\end{enumerate}

\noindent Comparing the coefficients of $x^{n-j}$ in the two expressions, we
get
\[
\bar{\alpha}_{i}M_{j}=a_{n}B(\alpha_{i})\cdot(-1)^{j-1}e_{j-1}^{(i)}
\]
\end{proof}

\noindent Next we explore the connection between $e_{j}^{(i)}$ and $e_{k}~(1\le k\le n)$.
\begin{lemma}
We have%
\begin{equation}\label{eqs:ej_exclude_i}
e_{j}^{(i)}=\sum_{k=0}^{j}(-1)^{k}e_{j-k}\alpha_{i}^{k},\quad i=1,\ldots,n,~
j=0,\ldots,n-1
\end{equation}
\end{lemma}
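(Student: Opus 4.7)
The plan is to prove this identity by the standard generating-function device for elementary symmetric polynomials. Specifically, I would package the definitions of $e_{j}$ and $e_{j}^{(i)}$ into the polynomial identities
\[
\prod_{k=1}^{n}(1+\alpha_{k}t)=\sum_{j=0}^{n}e_{j}\,t^{j},\qquad
\prod_{k\neq i}(1+\alpha_{k}t)=\sum_{j=0}^{n-1}e_{j}^{(i)}\,t^{j},
\]
viewed as polynomials in an auxiliary indeterminate $t$. These two products differ by exactly the factor $1+\alpha_{i}t$, which is the key observation the whole argument rests on.

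From $\prod_{k=1}^{n}(1+\alpha_{k}t)=(1+\alpha_{i}t)\prod_{k\neq i}(1+\alpha_{k}t)$ I would read off the polynomial identity
\[
\sum_{j=0}^{n}e_{j}\,t^{j}=(1+\alpha_{i}t)\sum_{j=0}^{n-1}e_{j}^{(i)}\,t^{j},
\]
and then divide both sides by $1+\alpha_{i}t$. The division is exact (substituting $t=-1/\alpha_{i}$ makes the numerator vanish, since $\alpha_{i}$ is a root of $A$), so the quotient is genuinely a polynomial of degree $n-1$. Expanding $(1+\alpha_{i}t)^{-1}=\sum_{k\geq 0}(-\alpha_{i})^{k}t^{k}$ as a formal power series and multiplying by $\sum_{j=0}^{n}e_{j}t^{j}$, the coefficient of $t^{j}$ on the right-hand side is
\[
\sum_{k=0}^{j}(-\alpha_{i})^{k}\,e_{j-k}=\sum_{k=0}^{j}(-1)^{k}\alpha_{i}^{k}\,e_{j-k},
\]
which must equal $e_{j}^{(i)}$ for $0\leq j\leq n-1$. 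This is precisely the claimed identity.

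An equivalent route, which avoids power series altogether, is to clear the denominator and compare coefficients of $t^{j}$ directly, yielding the one-step recursion $e_{j}^{(i)}=e_{j}-\alpha_{i}e_{j-1}^{(i)}$ with base case $e_{0}^{(i)}=1=e_{0}$, and then unwind it by a short induction on $j$. Either way, the argument is essentially calculation-free and I do not anticipate any genuine obstacle; the only thing to be mindful of is the index range $0\leq j\leq n-1$ (so that the polynomial identity is read off in its valid degree range, and the convention $e_{0}=e_{0}^{(i)}=1$ from Notation~\ref{notation2} handles $j=0$).
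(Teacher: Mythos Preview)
Your proposal is correct. Your primary approach via generating functions is a genuinely different route from the paper's, which instead proceeds exactly along the lines of your ``equivalent route'': the paper observes the one-step recursion $e_{j}^{(i)}=e_{j}-\alpha_{i}e_{j-1}^{(i)}$ and unwinds it by induction on $j$, starting from $e_{0}^{(i)}=e_{0}=1$. Your generating-function argument has the advantage of making the closed form appear in one stroke (the coefficient of $t^{j}$ in a geometric-series expansion), so no inductive bookkeeping is needed; the paper's inductive argument, on the other hand, avoids any appeal to formal power series and stays entirely within finite polynomial identities. Both are short and standard; since you already sketch the recursion-plus-induction alternative, you effectively have the paper's proof as well. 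One minor remark: your justification ``since $\alpha_{i}$ is a root of $A$'' for the vanishing of the numerator at $t=-1/\alpha_{i}$ is slightly indirect---the real reason is simply that the factor $1+\alpha_{i}t$ itself vanishes there---but the conclusion (exact divisibility of $\sum e_{j}t^{j}$ by $1+\alpha_{i}t$) is of course correct.
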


\begin{proof}
The proof is given in an inductive way for $j$ based on the observation
$e_{j}^{(i)}=e_{j}-\alpha_{i}e_{j-1}^{(i)}$ when $j>0$.

\begin{itemize}
\item When $j=0$, it is easy to verify that $LHS=e_{0}^{(i)}=e_{0}=(-1)^{0}%
e_{0}\alpha_{i}^{0}=RHS$.

\item Assume that the equation holds for $j<1$. Then
\begin{align*}
e_{j}^{(i)}  &  =e_{j}-\alpha_{i}e_{j-1}^{(i)}\\
&  =e_{j}-\alpha_{i}\sum_{k=0}^{j-1}(-1)^{k}e_{j-1-k}\alpha_{i}^{k}\\
&  =e_{j}+\sum_{k=0}^{j-1}(-1)^{k+1}e_{j-(k+1)}\alpha_{i}^{k+1}\\
&  =e_{j}+\sum_{k=1}^{j}(-1)^{k}e_{j-k}\alpha_{i}%
^{k}\\
&  =(-1)^{0}e_{j-0}\alpha_{i}^{0}+\sum_{k=1}^{j}(-1)^{k}e_{j-k}\alpha_{i}%
^{k}\\
&  =\sum_{k=0}^{j}(-1)^{k}e_{j-k}\alpha_{i}^{k}%
\end{align*}

\end{itemize}
\end{proof}

\noindent Now we are ready to prove Theorem \ref{thm:coefficients}-3.

\begin{proof}[Proof of Theorem \ref{thm:coefficients}-3 (B\'{e}zout-type)]\
\

\begin{enumerate}
\item Recall that $\bar{\alpha}_{i}=(\alpha_{i}^{0},\ldots,\alpha_{i}^{d_{0}-1})$ and $R_{kj}$ is the $j$th column of the
B\'{e}zout matrix of $F_{0}$ and $F_{k}$. Again we assemble $\bar{\alpha}_{i}~(1\le i\le d_0)$ vertically and $R_{kj}~(1\le j\le \delta_k)$ horizontally and obtain two matrices as follows:
\[\bar{\alpha}=\left[
\begin{array}
[c]{c}%
\bar{\alpha}_{1}\\
\vdots\\
\bar{\alpha}_{d_{0}}%
\end{array}
\right],\quad R_k=  \left[
\begin{array}
[c]{ccc}%
R_{k1} & \ldots & R_{k\delta_{k}}%
\end{array}
\right]  \]
\item
Now multiplying $\bar{\alpha}$ and $R_k$ and simplifying the resulting expression by Lemma \ref{lem:v_iM_j}, we get
\begin{align*}
\bar{\alpha}R_{k}=&\left[
\begin{array}
[c]{c}%
\bar{\alpha}_{1}\\
\vdots\\
\bar{\alpha}_{d_{0}}%
\end{array}
\right]  \left[
\begin{array}
[c]{ccc}%
R_{k1} & \ldots & R_{k\delta_{k}}%
\end{array}
\right]\\
=&\left[
\begin{array}
[c]{ccc}%
\bar{\alpha}_{1}R_{k1} & \cdots & \bar{\alpha}_{1}R_{k\delta_k}\\
\vdots & \ddots & \vdots\\
\bar{\alpha}_{d_{0}}R_{k1} & \cdots & \bar{\alpha}_{d_{0}}R_{k\delta_k}
\end{array}
\right]\\
=&\left[
\begin{array}
[c]{ccc}%
a_{0d_{0}}F_{k}(\alpha_{1})e_{0}^{(1)} & \cdots & (-1)^{\delta_{k}-1}%
a_{0d_{0}}F_{k}(\alpha_{1})e_{\delta_{k}-1}^{(1)}\\
\vdots & \ddots & \vdots\\
a_{0d_{0}}F_{k}(\alpha_{d_{0}})e_{0}^{(d_{0})} & \cdots & (-1)^{\delta_{k}%
-1}a_{0d_{0}}F_{k}(\alpha_{d_{0}})e_{\delta_{k}-1}^{(d_{0})}%
\end{array}
\right]
\end{align*}

\item Recall \eqref{eqs:alpha*X},
i.e.,
\[\bar{\alpha}X_{\delta,d_0}
=\left[
\begin{array}
[c]{ccc}%
\alpha_{1}^{0}\left(  x-\alpha_{1}\right)  & \cdots & \alpha_{1}%
^{\varepsilon-2}\left(  x-\alpha_{1}\right) \\
\vdots & \ddots & \vdots\\
\alpha_{d_{0}}^{0}\left(  x-\alpha_{d_{0}}\right)  & \cdots & \alpha_{d_{0}%
}^{\varepsilon-2}\left(  x-\alpha_{d_{0}}\right)
\end{array}
\right]
\]

\item Now assembling $R_k~(1\le k\le t)$ and $X_{\delta,d_{0}}$ horizontally and pre-multiplying the resulting matrix by $
\bar{\alpha}$, we get
\begin{align*}
\text{LHS}=&\bar{\alpha} \left[
\begin{array}
[c]{cccc}%
R_{1} & \cdots &  R_{t}& X_{\delta,d_0}%
\end{array}
\right] \\
=&\left[
\begin{array}
[c]{cccc}%
\bar{\alpha}R_{1} & \cdots &  \bar{\alpha}R_{t}& \bar{\alpha}X_{\delta,d_0}%
\end{array}
\right]\\
=  &  {\footnotesize {\,\,\left[
\begin{array}
[c]{ccc|c}%
(-1)^{0}a_{0d_{0}}F_{1}(\alpha_{1})e_{0}^{(1)} & \cdots & (-1)^{\delta_{1}-1}%
a_{0d_{0}}F_{1}(\alpha_{1})e_{\delta_{1}-1}^{(1)} & \cdots\cdots  \\
\vdots & & \vdots & \\
(-1)^{0}a_{0d_{0}}F_{1}(\alpha_{d_0})e_{0}^{(d_0)} & \cdots & (-1)^{\delta_{1}-1}%
a_{0d_{0}}F_{1}(\alpha_{d_0})e_{\delta_{d_0}-1}^{(d_0)} & \cdots \cdots
\end{array}
\right.  }}\\
=  &  \hspace{1em}{\footnotesize {\,\,\left.
\begin{array}
[c]{c|ccc|ccc}%
\cdots\cdots&(-1)^{0}a_{0d_{0}}F_{t}(\alpha_{1})e_{0}^{(1)} & \cdots & (-1)^{\delta_{t}-1}%
a_{0d_{0}}F_{t}(\alpha_{1})e_{\delta_{t}-1}^{(1)} & \alpha_{1}^{0}\left(
x-\alpha_{1}\right)  & \cdots & \alpha_{1}^{\varepsilon-2}\left(  x-\alpha
_{1}\right) \\
&\vdots & & \vdots&\vdots & & \vdots\\
\cdots\cdots&(-1)^{0}a_{0d_{0}}F_{t}(\alpha_{d_0})e_{0}^{(d_0)} & \cdots & (-1)^{\delta_{t}-1}%
a_{0d_{0}}F_{t}(\alpha_{d_0})e_{\delta_{t}-1}^{(d_0)}&\alpha_{{d_{0}}}^{0}\left(  x-\alpha_{{d_{0}}}\right)  & \cdots &
\alpha_{{d_{0}}}^{\varepsilon-2}\left(  x-\alpha_{{d_{0}}}\right)
\end{array}
\right]  }}\\
=&\text{RHS}
\end{align*}

\item Taking transpose and computing the determinants for the left-hand and right-hand side expressions yield
\begin{align*}
\det \text{LHS}&=\det( \left[
\begin{array}
[c]{cccc}%
R_{1} & \cdots &  R_{t}& X_{\delta,d_0}%
\end{array}
\right]^T\cdot \bar{\alpha}^T)=\det M_{\delta}^{\text{B\'{e}zout}}\cdot V\\
\det \text{RHS}&=\det\left[
\begin{array}
[c]{ccc}%
(-1)^{0}a_{0d_{0}}F_{1}(\alpha_{1})e_{0}^{(1)} & \cdots & (-1)^{0}a_{0d_{0}}F_{1}%
(\alpha_{d_{0}})e_{0}^{(d_{0})}\\
\vdots & & \vdots\\
(-1)^{\delta_{1}-1}a_{0d_{0}}F_{1}(\alpha_{1})e_{\delta_{1}-1}^{(1)} & \cdots
& (-1)^{\delta_{1}-1}a_{0d_{0}}F_{1}(\alpha_{d_{0}})e_{\delta_{1}-1}^{(d_{0}%
)}\\\hline
\vdots & & \vdots\\
\vdots &  & \vdots\\\hline
(-1)^{0}a_{0d_{0}}F_{t}(\alpha_{1})e_{0}^{(1)} & \cdots & (-1)^{0}a_{0d_{0}}F_{t}%
(\alpha_{d_{0}})e_{0}^{(d_{0})}\\
\vdots &  & \vdots\\
(-1)^{\delta_{t}-1}a_{0d_{0}}F_{t}(\alpha_{1})e_{\delta_{t}-1}^{(1)} & \cdots
& (-1)^{\delta_{t}-1}a_{0d_{0}}F_{t}(\alpha_{d_{0}})e_{\delta_{t}-1}^{(d_{0}%
)}\\\hline
\alpha_{1}^{0}\left(  x-\alpha_{1}\right)  & \cdots & \alpha_{{d_{0}}}%
^{0}\left(  x-\alpha_{{d_{0}}}\right) \\
\vdots & & \vdots\\
\alpha_{1}^{\varepsilon-2}\left(  x-\alpha_{1}\right)  & \cdots &
\alpha_{{d_{0}}}^{\varepsilon-2}\left(  x-\alpha_{{d_{0}}}\right)
\end{array}
\right]
\end{align*}

\item In what follows, we rewrite $\det \text{RHS}$ by using the multi-linearity and anti-symmetry properties of determinant.
\begin{enumerate}
\item Extract the factor $(-1)^{k_i}a_{0d_{0}}$ from the $i$th row ($1\le i\le \delta_1+\cdots+\delta_t$) where $k_i$ is some natural number depending on $i$.
\[
\det \text{RHS}=    (-1)^{\sum_{i=1}^{t}{(0+\cdots+\delta_{i}-1)}}\cdot a_{0d_{0}}^{\sum_{i=1}%
^{t}\delta_{i}}\cdot\det\left[
\begin{array}
[c]{ccc}%
F_{1}(\alpha_{1})e_{0}^{(1)} & \cdots & F_{1}(\alpha_{d_{0}})e_{0}^{(d_{0})}\\
\vdots & & \vdots\\
F_{1}(\alpha_{1})e_{\delta_{1}-1}^{(1)} & \cdots & F_{1}(\alpha_{d_{0}%
})e_{\delta_{1}-1}^{(d_{0})}\\\hline
\vdots & & \vdots\\
\vdots & & \vdots\\\hline
F_{t}(\alpha_{1})e_{0}^{(1)} & \cdots & F_{t}(\alpha_{d_{0}})e_{0}^{(d_{0})}\\
\vdots & & \vdots\\
F_{t}(\alpha_{1})e_{\delta_{t}-1}^{(1)} & \cdots & F_{t}(\alpha_{d_{0}%
})e_{\delta_{t}-1}^{(d_{0})}\\\hline
\alpha_{1}^{0}\left(  x-\alpha_{1}\right)  & \cdots & \alpha_{{d_{0}}}%
^{0}\left(  x-\alpha_{{d_{0}}}\right) \\
\vdots & & \vdots\\
\alpha_{1}^{\varepsilon-2}\left(  x-\alpha_{1}\right)  & \cdots &
\alpha_{{d_{0}}}^{\varepsilon-2}\left(  x-\alpha_{{d_{0}}}\right)
\end{array}
\right]\]

\item Substitute \eqref{eqs:ej_exclude_i} into the above determinant and we get
\begin{align*}
\det \text{RHS}=    &(-1)^{\sum_{i=1}^{t}{(0+\cdots+\delta_{i}-1)}}\cdot a_{0d_{0}}^{\sum_{i=1}%
^{t}\delta_{i}}\cdot\\
&\qquad\det\left[
\begin{array}
[c]{ccc}%
F_{1}(\alpha_{1})(-1)^{0}e_{0}\alpha_{1}^{0} & \cdots & F_{1}(\alpha_{d_{0}})(-1)^{0}e_{0}\alpha_{d_{0}%
}^{0}\\
\vdots & & \vdots\\
F_{1}(\alpha_{1})\left(  \sum\limits_{k=0}^{\delta_{1}-1}(-1)^{k}e_{\delta
_{1}-1-k}\alpha_{1}^{k}\right)  & \cdots & F_{1}(\alpha_{d_{0}})\left(
\sum_{k=0}^{\delta_{1}-1}(-1)^{k}e_{\delta_{1}-1-k}\alpha_{d_{0}}^{k}\right)
\\\hline
\vdots &  & \vdots\\
\vdots & & \vdots\\\hline
(-1)^{0}F_{t}(\alpha_{1})\alpha_{1}^{0} & \cdots & (-1)^{0}F_{t}(\alpha_{d_{0}})\alpha_{d_{0}%
}\\
\vdots & & \vdots\\
F_{t}(\alpha_{1})\left(  \sum\limits_{k=0}^{\delta_{t}-1}(-1)^{k}e_{\delta
_{t}-1-k}\alpha_{1}^{k}\right)  & \cdots & F_{t}(\alpha_{d_{0}})\left(
\sum\limits_{k=0}^{\delta_{t}-1}(-1)^{k}e_{\delta_{t}-1-k}\alpha_{d_{0}}%
^{k}\right) \\\hline
\alpha_{1}^{0}\left(  x-\alpha_{1}\right)  & \cdots & \alpha_{{d_{0}}}%
^{0}\left(  x-\alpha_{{d_{0}}}\right) \\
\vdots & & \vdots\\
\alpha_{1}^{\varepsilon-2}\left(  x-\alpha_{1}\right)  & \cdots &
\alpha_{{d_{0}}}^{\varepsilon-2}\left(  x-\alpha_{{d_{0}}}\right)
\end{array}
\right]
\end{align*}

\item For each block of the form
\[
T_i:=\left[
\begin{array}
[c]{ccc}%
F_{i}(\alpha_{1})(-1)^{0}e_{0}\alpha_{1}^{0} & \cdots & F_{i}(\alpha_{d_{0}})(-1)^{0}e_{0}\alpha_{d_{0}%
}^{0}\\
\vdots &  & \vdots\\
F_{i}(\alpha_{1})\sum\limits_{k=0}^{\delta_{i}-1}(-1)^{k}e_{\delta_{i}%
-1-k}\alpha_{1}^{k} & \cdots & F_{i}(\alpha_{d_{0}})\sum\limits_{k=0}%
^{\delta_{i}-1}(-1)^{k}e_{\delta_{i}-1-k}\alpha_{d_{0}}^{k}%
\end{array}
\right]
\]
we subtract the $\ell$th row multiplied by $e_{j-\ell}$ where
$1\le\ell\le j-1$ successively from the $j$th row for $j=2,\ldots,\delta_{i}$.
\begin{align*}
T_i&=\left[
\begin{array}
[c]{ccc}%
F_{i}(\alpha_{1})(-1)^{0}e_{0}\alpha_{1}^{0} & \cdots & F_{i}(\alpha_{d_{0}})(-1)^{0}e_{0}\alpha_{d_{0}%
}^{0}\\
F_{i}(\alpha_{1})\sum\limits_{k=0}^{1}(-1)^{k}e_{1-k}\alpha_{1}^{k} & \cdots & F_{i}(\alpha_{d_{0}})\sum\limits_{k=0}%
^{1}(-1)^{k}e_{1-k}\alpha_{d_{0}}^{k}\\
\vdots &  & \vdots\\
F_{i}(\alpha_{1})\sum\limits_{k=0}^{\delta_{i}-1}(-1)^{k}e_{(\delta_{i}%
-1)-k}\alpha_{1}^{k} & \cdots & F_{i}(\alpha_{d_{0}})\sum\limits_{k=0}%
^{\delta_{i}-1}(-1)^{k}e_{(\delta_{i}-1)-k}\alpha_{d_{0}}^{k}%
\end{array}
\right]\\
&\qquad\text{substituting $e_0=1$ into the first row}\\
&=\left[
\begin{array}
[c]{ccc}%
(-1)^{0}F_{i}(\alpha_{1})\alpha_{1}^{0} & \cdots & (-1)^{0}F_{i}(\alpha_{d_{0}})\alpha_{d_{0}%
}^{0}\\
F_{i}(\alpha_{1})\left((-1)^{0}e_{1}\alpha_{1}^0+(-1)^{1}e_{0}\alpha_{1}^{1} \right)& \cdots & F_{i}(\alpha_{1})\left((-1)^{0}e_{1}\alpha_{d_0}^0+(-1)^{1}e_{0}\alpha_{d_0}^{1} \right)\\
\vdots &  & \vdots\\
F_{i}(\alpha_{1})\sum\limits_{k=0}^{\delta_{i}-1}(-1)^{k}e_{(\delta_{i}%
-1)-k}\alpha_{1}^{k} & \cdots & F_{i}(\alpha_{d_{0}})\sum\limits_{k=0}%
^{\delta_{i}-1}(-1)^{k}e_{(\delta_{i}-1)-k}\alpha_{d_{0}}^{k}%
\end{array}
\right]\\
&\qquad\text{subtracting the first row multiplied by $e_{1}$ from the second row and using $e_0=1$}\\
&\rightarrow \left[
\begin{array}
[c]{ccc}%
(-1)^{0}F_{i}(\alpha_{1})\alpha_{1}^{0} & \cdots & (-1)^{0}F_{i}(\alpha_{d_{0}})\alpha_{d_{0}%
}^{0}\\
(-1)^{1}F_{i}(\alpha_{1})\alpha_{1}^1 & \cdots & (-1)^{1}F_{i}(\alpha_{d_{0}})\alpha_{d_{0}}^1\\
\vdots &  & \vdots\\
F_{i}(\alpha_{1})\sum\limits_{k=0}^{\delta_{i}-1}(-1)^{k}e_{(\delta_{i}%
-1)-k}\alpha_{1}^{k} & \cdots & F_{i}(\alpha_{d_{0}})\sum\limits_{k=0}%
^{\delta_{i}-1}(-1)^{k}e_{(\delta_{i}-1)-k}\alpha_{d_{0}}^{k}%
\end{array}
\right]\\
&\qquad\text{now considering the third row and expanding it }\\
&=\left[
\begin{array}
[c]{ccc}%
(-1)^{0}F_{i}(\alpha_{1})\alpha_{1}^{0} & \cdots & (-1)^{0}F_{i}(\alpha_{d_{0}})\alpha_{d_{0}%
}^{0}\\
(-1)^{1}F_{i}(\alpha_{1})\alpha_{1}^1 & \cdots & (-1)^{1}F_{i}(\alpha_{d_{0}})\alpha_{d_{0}}^1\\
F_{i}(\alpha_{1})\left((-1)^{0}e_2\alpha_{1}^{0}+(-1)^{1}e_{1}\alpha_{1}^{1}+(-1)^{2}e_{0}\alpha_{1}^{2}\right) & \cdots & F_{i}(\alpha_{d_{0}})\left((-1)^{0}e_2\alpha_{d_0}^{0}+(-1)^{1}e_{1}\alpha_{d_0}^{1}+(-1)^{2}e_{0}\alpha_{d_0}^{2}\right)\\
\vdots &  & \vdots\\
F_{i}(\alpha_{1})\sum\limits_{k=0}^{\delta_{i}-1}(-1)^{k}e_{(\delta_{i}%
-1)-k}\alpha_{1}^{k} & \cdots & F_{i}(\alpha_{d_{0}})\sum\limits_{k=0}%
^{\delta_{i}-1}(-1)^{k}e_{(\delta_{i}-1)-k}\alpha_{d_{0}}^{k}%
\end{array}
\right]\\
&\qquad\text{subtracting the first two rows multiplied by $e_{2}$ and
$e_{1}$ respectively from the third row}\\
&\qquad\text{and using $e_0=1$}\\
&\rightarrow \left[
\begin{array}
[c]{ccc}%
(-1)^{0}F_{i}(\alpha_{1})\alpha_{1}^{0} & \cdots & (-1)^{0}F_{i}(\alpha_{d_{0}})\alpha_{d_{0}%
}^{0}\\
(-1)^{1}F_{i}(\alpha_{1})\alpha_{1}^1 & \cdots & (-1)^{1}F_{i}(\alpha_{d_{0}})\alpha_{d_{0}}^1\\
(-1)^{2}F_{i}(\alpha_{1})\alpha_{1}^2 & \cdots & (-1)^{2}F_{i}(\alpha_{d_{0}})\alpha_{d_{0}}^2\\
\vdots &  & \vdots\\
F_{i}(\alpha_{1})\sum\limits_{k=0}^{\delta_{i}-1}(-1)^{k}e_{(\delta_{i}%
-1)-k}\alpha_{1}^{k} & \cdots & F_{i}(\alpha_{d_{0}})\sum\limits_{k=0}%
^{\delta_{i}-1}(-1)^{k}e_{(\delta_{i}-1)-k}\alpha_{d_{0}}^{k}%
\end{array}
\right]\\
&\qquad \text{repeating the process and eventually obtaining the following matrix block:}\\
&\rightarrow\left[
\begin{array}
[c]{ccc}%
(-1)^{0}F_{i}(\alpha_{1})\alpha_{1}^{0} & \cdots & (-1)^{0}F_{i}(\alpha
_{d_{0}})\alpha_{d_{0}}^{0}\\
\vdots &  & \vdots\\
(-1)^{\delta_{i}-1}F_{i}(\alpha_{1})\alpha_{1}^{\delta_{i}-1} & \cdots &
(-1)^{\delta_{i}-1}F_{i}(\alpha_{d_{0}})\alpha_{d_{0}%
}^{\delta_{i}-1}%
\end{array}
\right]
\end{align*}
\end{enumerate}

\item Due to the multi-linear and anti-symmetry properties of determinant, $\det \text{RHS}$ in Step (6b) keeps unchanged after the transformations. That is,
\begin{align*}
\det \text{RHS}=(-1)^{\sum_{i=1}^{t}{(0+\cdots+\delta_{i}-1)}}\cdot a_{0d_{0}}^{\sum_{i=1}%
^{t}\delta_{i}}\cdot
  &  \det\left[
\begin{array}
[c]{ccc}%
(-1)^{0}F_{1}(\alpha_{1})\alpha_{1}^{0} & \cdots & (-1)^{0}F_{1}(\alpha_{d_{0}})\alpha_{d_{0}%
}^{0}\\
\vdots & & \vdots\\
(-1)^{\delta_1-1}F_{1}(\alpha_{1})\alpha_{1}^{\delta_{1}-1} & \cdots & (-1)^{\delta_1-1}F_{1}(\alpha_{d_{0}%
})\alpha_{d_{0}}^{\delta_{1}-1}\\\hline
\vdots & & \vdots\\
\vdots &  & \vdots\\\hline
(-1)^{0}F_{t}(\alpha_{1})\alpha_{1}^{0} & \cdots & (-1)^{0}F_{t}(\alpha_{d_{0}})\alpha_{d_{0}%
}^{0}\\
\vdots &  & \vdots\\
(-1)^{\delta_t-1}F_{t}(\alpha_{1})\alpha_{1}^{\delta_{t}-1} & \cdots & (-1)^{\delta_t-1}F_{t}(\alpha_{d_{0}%
})\alpha_{d_{0}}^{\delta_{t}-1}\\\hline
\alpha_{1}^{0}\left(  x-\alpha_{1}\right)  & \cdots & \alpha_{{d_{0}}}%
^{0}\left(  x-\alpha_{{d_{0}}}\right) \\
\vdots &  & \vdots\\
\alpha_{1}^{\varepsilon-2}\left(  x-\alpha_{1}\right)  & \cdots &
\alpha_{{d_{0}}}^{\varepsilon-2}\left(  x-\alpha_{{d_{0}}}\right)
\end{array}
\right]
\end{align*}

\item Next extract the factor $(-1)^{k_i}$ from the $i$th row ($1\le i\le \delta_1+\cdots+\delta_t$) where $k_i$ is some natural number depending on $i$:
\begin{align*}
\det \text{RHS} &=(-1)^{2\sum_{i=1}^{t}{(0+\cdots+\delta_{i}-1)}}\cdot a_{0d_{0}}^{|\delta|}\cdot  \det\left[
\begin{array}
[c]{ccc}%
F_{1}(\alpha_{1})\alpha_{1}^{0} & \cdots & F_{1}(\alpha_{d_{0}})\alpha_{d_{0}%
}^{0}\\
\vdots & & \vdots\\
F_{1}(\alpha_{1})\alpha_{1}^{\delta_{1}-1} & \cdots & F_{1}(\alpha_{d_{0}%
})\alpha_{d_{0}}^{\delta_{1}-1}\\\hline
\vdots &  & \vdots\\
\vdots &  & \vdots\\\hline
F_{t}(\alpha_{1})\alpha_{1}^{0} & \cdots & F_{t}(\alpha_{d_{0}})\alpha_{d_{0}%
}^{0}\\
\vdots &  & \vdots\\
F_{t}(\alpha_{1})\alpha_{1}^{\delta_{t}-1} & \cdots & F_{t}(\alpha_{d_{0}%
})\alpha_{d_{0}}^{\delta_{t}-1}\\\hline
\alpha_{1}^{0}\left(  x-\alpha_{1}\right)  & \cdots & \alpha_{{d_{0}}}%
^{0}\left(  x-\alpha_{{d_{0}}}\right) \\
\vdots &  & \vdots\\
\alpha_{1}^{\varepsilon-2}\left(  x-\alpha_{1}\right)  & \cdots &
\alpha_{{d_{0}}}^{\varepsilon-2}\left(  x-\alpha_{{d_{0}}}\right)
\end{array}
\right]
\end{align*}

\item By Lemma \ref{lem:equiv},
$$\det \text{RHS}=(-1)^{2\sum_{i=1}^{t}{(0+\cdots+\delta_{i}-1)}}\cdot a_{0d_{0}}^{|\delta|}\cdot a_{0d_{0}}^{-\delta_{0}}\cdot S_{\delta}\cdot V=a_{0d_{0}}^{-\delta_0+|\delta|}\cdot S_{\delta}\cdot V$$ Recall that in Step 5, we have shown that $$\det \text{LHS}=\det M_{\delta}^{\text{B\'{e}zout}}\cdot V$$
\item Finally, comparing the two expressions, we have
\[
S_{\delta}=a_{0d_{0}}^{\delta_0-|\delta|}\cdot\det M_{\delta
}^{\text{B\'{e}zout}}
\]

\end{enumerate}
\end{proof}

\section{Conclusion}

\label{sec:conclusion} In this paper, we proposed a definition of a
subresultant for several univariate polynomials in terms of roots. To show
that the definition is meaningful and useful, we presented two fundamental applications:
parametric gcd and parametric multiplicity. For  computation, we gave three different expressions of the  subresultants in terms of coefficients.
A natural challenge for future work is to  generalize further to arbitrary number
of multivariate and/or Ore polynomials.

\medskip\noindent\textbf{Acknowledgements.} Hoon Hong's work was 
supported by National Science Foundations of USA (Grant Nos: 2212461 and 1813340).
 Jing Yang's work was
supported by National Natural Science Foundation of China (Grant No. 12261010).


\end{document}